\documentclass[acmsmall]{acmart}

\AtBeginDocument{%
  \providecommand\BibTeX{{%
    \normalfont B\kern-0.5em{\scshape i\kern-0.25em b}\kern-0.8em\TeX}}}

\setcopyright{rightsretained}
\copyrightyear{20xx}
\acmYear{20yy}

\usepackage{hyperref}
\usepackage{xfrac}
\usepackage{prettyref}

\newrefformat{sec}{Section\,\ref{#1}}
\newrefformat{app}{Appendix\,\ref{#1}}
\newrefformat{def}{Def.\,\ref{#1}}
\newrefformat{thm}{Theorem\,\ref{#1}}
\newrefformat{prop}{Proposition\,\ref{#1}}
\newrefformat{lem}{Lemma\,\ref{#1}}
\newrefformat{cor}{Corollary\,\ref{#1}}
\newrefformat{rem}{Remark\,\ref{#1}}
\newrefformat{ex}{Example\,\ref{#1}}
\newrefformat{tab}{Table\,\ref{#1}}
\newrefformat{fig}{Fig.\,\ref{#1}}
\newrefformat{case}{case\,\ref{#1}}
\newrefformat{foot}{Footnote\,\ref{#1}}

\usepackage{amsthm}

\usepackage{amsmath}
\usepackage{amssymb}
\usepackage{natbib}
\usepackage{mathtools}
\usepackage{tensor}
\usepackage{bm}
\usepackage{tikz}
\usepackage{tikz-cd}
\usepackage{graphicx}
\usepackage{stmaryrd}
\usepackage{caption}
\usepackage{subcaption}
\usepackage{algorithm}
\usepackage{algpseudocode}

\definecolor{semblue}{rgb}{0,0,0.7}
\definecolor{vgreen}{rgb}{.1,.5,0}%
\definecolor{vdarkgreen}{rgb}{.06,.3,0}%
\definecolor{vred}{rgb}{.7,0,0}%
\definecolor{vblue}{rgb}{.1,.15,.62}%
\definecolor{vgray}{rgb}{.35,.35,.35}
\definecolor{darkishgray}{rgb}{.35,.35,.35}
\definecolor{vvblue}{rgb}{.14,.21,.868}%

\usepackage{xcolor}
\usepackage{float}
\usepackage[prefixflatinterpret,prefixflatbracketmodalinterpret,prefixflatsetinterpret,sidenotecalculus,longseqcontext]{logic}
\usepackage[prefixflatinterpret,prefixflatbracketmodalinterpret,differentialdL,simplenames]{dL}

\graphicspath{ {./Images/} }

\usepackage{xspace}
\usepackage{xifthen}

\newtheorem{theorem}{Theorem}[section]

\newtheorem{proposition}[theorem]{Proposition}
\newtheorem{corollary}[theorem]{Corollary}
\newtheorem{lemma}[theorem]{Lemma}
\newtheorem{definition}[theorem]{Definition}

\theoremstyle{definition}
\newtheorem{example}[theorem]{Example}
\newtheorem{remark}[theorem]{Remark}

\newcommand{\R}{\mathbb{R}}

\newcommand{\N}{\mathbb{N}}
\newcommand{\Q}{\mathbb{Q}}
\newcommand{\V}{\mathbb{V}}

\renewcommand{\S}{\mathbb{S}}

\renewcommand{\phi}{\varphi}

\newcommand{\eps}{\varepsilon}
\newcommand{\eval}[1]{\left\llbracket #1 \right\rrbracket}

\newcommand{\norm}[1]{\left\lVert#1\right\rVert}
\newcommand{\bebecomes}{\mathrel{::=}}
\newcommand{\alternative}{~|~}

\newcommand{\rels}{\{\geq, >\}}

\newcommand{\graph}{\text{graph}}
\newcommand{\safe}[4][]{
    \ifthenelse{\isempty{#1}}
    {\text{SAFE}(#2, #3, #4)}
    {\text{SAFE}_{#1}(#2, #3, #4)}
}

\newcommand{\unsafe}[4][]{
    \ifthenelse{\isempty{#1}}
    {\text{UNSAFE}(#2, #3, #4)}
    {\text{UNSAFE}_{#1}(#2, #3, #4)}
}
\newcommand{\lp}{\text{LP}}

\newcommand{\folr}{\text{FOL}_{\R}}

\newcommand{\fvarA}{\phi}
\newcommand{\fvarB}{\psi}

\newcommand{\leftrule}{L}
\newcommand{\rightrule}{R}

\newsavebox{\Rval}%
\sbox{\Rval}{$\scriptstyle\mathbb{R}$}
\newsavebox{\backiterateb}%
\sbox{\backiterateb}{$\scriptstyle\overleftarrow{\dibox{{}^*}}$}

\newcommand{\rfvar}{P}

\newcommand{\D}[1]{#1'}
\renewcommand{\safe}{\text{SAFE}}
\newcommand{\safer}{\text{SAFE}_{\text{R}}}

\begin{document}

\renewcommand{\linferPremissSeparation}{\hspace{0.8cm}}

\title{Differential Equation Inductive Robustness Axiomatization}
\author{Andr\'e Platzer}
\author{Long Qian}
\email{platzer@kit.edu}
\email{longq@andrew.cmu.edu}
\orcid{0000-0001-7238-5710}
\orcid{0000-0003-1567-3948}
\affiliation{%
  \institution{Karlsruhe Institute of Technology}
  \city{Karlsruhe}
  \country{Germany}
  }
\affiliation{%
\institution{Carnegie Mellon University}
\city{Pittsburgh}
\country{USA}
}

\begin{abstract}
    This article establishes the completeness of an axiomatization for the robust safety of dynamical systems with polynomial differential equations on bounded time horizons. Safety properties of robust systems are uniformly reduced to a sound axiomatization of polynomial invariants, resulting in reliable logical proofs of correctness. Approximate decidability results are also established: there is a computable algorithm such that, given any perturbation parameter $\delta$, it either produces a symbolic proof of robust safety (hence correctly decides the dynamical system to be robustly safe), or correctly decides that the system is not robustly safe under a perturbation of level $\delta$. In contrast to earlier works, this article crucially leverages results from subanalytic geometry to retain a level of exactness, thereby establishing positive results of provability/decidability allowing for arbitrary bounded (semialgebraic) initial/post conditions even \emph{without positive separation} at their (topological) boundaries. This enables the generation of proofs of \emph{inductive safety} beyond finite time horizons for general \emph{hybrid dynamical systems}. 
\end{abstract}

\begin{CCSXML}
<ccs2012>
<concept>
<concept_id>10003752.10003753.10003765</concept_id>
<concept_desc>Theory of computation~Timed and hybrid models</concept_desc>
<concept_significance>500</concept_significance>
</concept>
<concept>
<concept_id>10003752.10003790.10003793</concept_id>
<concept_desc>Theory of computation~Modal and temporal logics</concept_desc>
<concept_significance>500</concept_significance>
</concept>
<concept>
<concept_id>10003752.10003790.10003806</concept_id>
<concept_desc>Theory of computation~Programming logic</concept_desc>
<concept_significance>500</concept_significance>
</concept>

\end{CCSXML}

\ccsdesc[500]{Mathematics of computing~Ordinary differential equations}
\ccsdesc[500]{Theory of computation~Timed and hybrid models}
\ccsdesc[500]{Theory of computation~Proof Theory}
\ccsdesc[500]{Theory of computation~Modal and temporal logics}
\ccsdesc[500]{Theory of computation~Programming logic}

\keywords{Differential equation axiomatization, robustness, differential dynamic logic}

\maketitle
\section{Introduction}

The verification of safety properties for ordinary differential equations (ODEs) is of fundamental importance and a vital part of verifying cyber-physical systems (CPS) \cite{Alur_2015,DBLP:books/sp/Platzer18}. Such analyses of ODEs can be carried out quantitatively by computing approximations to their solutions from specific initial states \cite{DBLP:conf/cav/ChenAS13, DBLP:conf/cav/FrehseGDCRLRGDM11}, or qualitatively by directly reasoning with their vector fields. This article contributes to the qualitative study of ODEs via their \emph{logical axiomatizations}, where properties of ODEs are deductively proven from a small set of sound core axioms. Such axiomatic treatments are desirable as they yield trustworthy symbolic proofs of correctness, providing reliable guarantees of safety that can be independently verified. 

This article establishes the logical completeness of (bounded, semialgebraic) \emph{robust safety} for ODEs with polynomial vector fields using \emph{differential dynamic logic} (\dL) \cite{DBLP:journals/jar/Platzer08, DBLP:conf/lics/Platzer12b}, providing complete logical foundations for the safety verification of \emph{robust} ODEs - ODEs whose safety does not depend on arbitrarily small perturbations. By simultaneously leveraging both exact symbolics and numerical approximations, completeness is even attained for the subtle case where the set of initial conditions is \emph{not} at a positive distance away from the unsafe states, thus safety cannot be verified by numerical approximations alone as no margin of error is available. Note that similar results for exact completeness without assuming robustness are fundamentally challenging, as they would imply the decidability of the bounded Skolem-Pisot problem \cite{DBLP:journals/tcs/BellDJB10}.

This article is concerned with the fragment of $\dL$ comprising the following formulas expressing safety properties of polynomial differential equations \(x' = p(x)\) for formulas $I, S \in \folr$ of real arithmetic as initial and safety conditions and a time bound $T \in \Q^+$:
\[\safe(I, S) ~\equiv~ I \land t = 0 \rightarrow \dbox{x' = p(x), t' = 1 \& t \leq T}{S}\]
Mathematically, the validity of the formula $\safe(I, S)$ is equivalent to the statement that for every initial condition $x_0 \in \eval{I}$ (where $\eval{I} = \{x \in \R^n \mid \R \models I(x)\}$ is the subset of $\R^n$ defined by the formula $I(x)$), $x_0$ \emph{always remains} in the set $S$ under the flow $x'(t) = p(x(t))$ for $0 \leq t \leq T$, where $p(x) \in \Q[x]$ is a rational polynomial in $x$. The modality $\dbox{x' = p(x), t' = 1 \& t \leq T}{S}$ is a real-time extension of the classical box modality $\Box S$ in modal logic, indicating that the safety property $S$ is always satisfied after all solutions of the differential equation \(x'=p(x), t'=1\) that always satisfy the formula $t \leq T$ restricting the time horizon to $[0, T]$.

This article is concerned with \emph{robust safety}, the property that $\safe(I, S)$ remains true under arbitrarily small perturbations. Such robust safety properties are here defined as follows, where $S^o$ denotes a $\folr$ formula defining the topological interior of $\eval{S}$ and similarly $\overline{I}$ denotes a formula defining the topological closure of $\eval{I}$. 
\[\safer(I, S) ~\equiv~ (I \rightarrow S) \land (\overline{I} \land t = 0 \rightarrow \dbox{x' = p(x), t' = 1 \& t \leq T}{(t > 0 \rightarrow S^o)})\]
That is, the set of initial conditions $I$ is robustly safe if it is safe at time $t = 0$ and the states reachable along the differential equation from the closure $\overline{I}$ are contained in the interior of the safety set $S$ for all \emph{positive times}. This robustness property is also referred to as \emph{topological robustness}. The main result of this article is that $\dL$ is \emph{complete} for $\safer(I, S)$ (for bounded $\eval{I}, \eval{S}$) and there exists a computable correspondence between valid \dL formulas of the form $\safer(I, S)$ and their proofs - given a robust safety formula $\safer(I, S)$ that is valid, one can computably find a deductive proof of it in \dL. 

\paragraph{Comparison.}
Although the topological robustness property $\safer(I, S)$ might appear similar to the stronger condition $\safe(\overline{I}, S^o)$, which also appears to be a natural definition for robust safety, the crucial difference is that at time $t = 0$ the condition $\safer(I, S)$ reduces down to the safety of initial conditions $I \subseteq S$ only, and \emph{not} the stronger safety requirement $\overline{I} \subseteq S^o$. Importantly, this allows for $\safer(I, S)$ to hold even when\footnote{This article uses $d(A, B) \coloneqq \inf_{a \in A}\inf_{b \in B} d(a, b)$} $d(I, S^c) = 0$ and $I$ is \emph{not positively separated} from the complement $S^c$ so that no extra margin of error is available. This seemingly minor change is the difference between allowing for \emph{inductive} safety or not. Intuitively, if robust safety was defined via $\safe(\overline{I}, S^o)$, then no non-trivial $I$ can satisfy $\safe(\overline{I}, I^o)$ as safety is always violated at $t = 0$ because $\overline{I} \nsubseteq I^o$, and therefore no robust inductive invariants exist.

In contrast, such obstructions are not present in the case of topological robustness, precisely because initially at time $t = 0$ only $I \subseteq S$ is required. Fundamentally, $\safer(I, S)$ allows for $I$ to \emph{not be} positively separated from the unsafe sets, whereas $\safe(\overline{I}, S^o)$ \emph{requires} a positive separation and thereby expects additional robustness beyond what its initial condition itself provides, making $\safer(I, S)$ an \emph{inductive} notion of robustness in contrast to the inherently non-inductive $\safe(\overline{I}, S^o)$. 

\paragraph{Inductiveness.}
The inductiveness of topological robustness is particularly pronounced in the safety verification of CPS or \emph{hybrid} dynamical systems, where the underlying continuous dynamics $x' = p(x)$ is generalized to $x' = p(x, u), u' = 0$ with $u$ representing the \emph{control variables} from discrete assignments. Such continuous dynamics are executed over (infinitely) many \emph{discrete} steps, after each execution, the value of the control variables $u$ might change due to actions of discrete assignments. That is, the continuous variables $x$ evolve continuously, whereas the control variables $u$ evolve via discrete transitions, making the overall dynamical system a hybrid dynamical system \cite{DBLP:conf/hybrid/NerodeK92a}. The safety verification of such hybrid dynamical systems amounts to verifying that the desired safety requirement holds after \emph{all possible executions}. The following toy example illustrates one such instance (Figure \ref{fig: cart}).

\begin{figure}
    
    \centering
    \includegraphics[width=\linewidth]{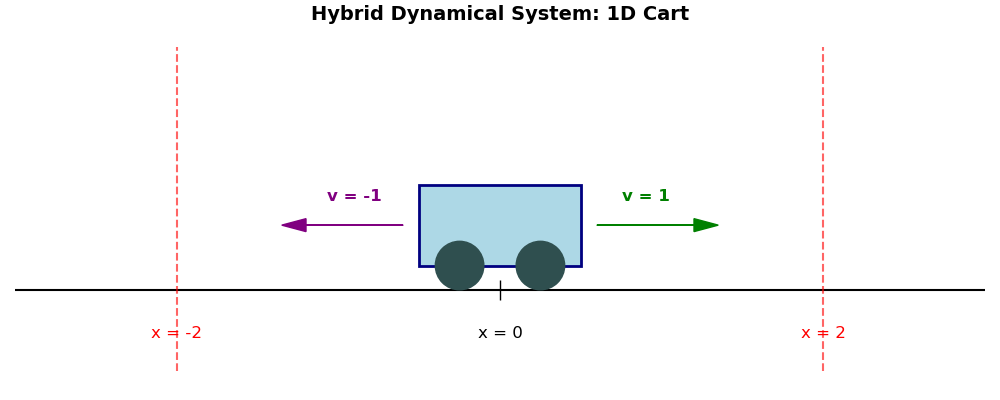}
    \captionof{figure}{1D Cart}
    \label{fig: cart}
\end{figure}

\begin{algorithm}
\caption{Example of Hybrid Dynamical System}
\label{eg: hybrid cart}
\begin{algorithmic}[1]
\State \textbf{Safety requirement:} $-2 \leq x \leq 2$
\State \textbf{Initial conditions:} $x = 0 \land v = 0$
\While{\textbf{True}}
    \If{$x \geq 1$}
        \State $v \gets -1$
    \ElsIf{$x \leq -1$}
        \State $v \gets 1$
    \Else
        \State $v \in [-1, 1]$ (chosen arbitrarily)
    \EndIf
    \State $t \gets 0$
    \State $\left\{x' = v, v' = 0, t' = 1 \& t \leq \frac{1}{2}\right\}$ \Comment{Continuous Dynamics}
\EndWhile
\end{algorithmic}
\end{algorithm}
Variable $v$ evolves discretely following the logical conditionals and assignments, and the variable $x$ evolves continuously along the ODE $x' = v$. The safety requirement of the overall system is $-2 \leq x \leq 2$ with initial condition $x = 0$. The standard approach to safety verification is to use \emph{loop invariants}, constructing a formula $j \in \folr$ that remains invariant under each iteration of the dynamical system. To illustrate the inductiveness of topological robustness, consider the simple loop invariant $j \equiv -2 \leq x \leq 2 \land -2 \leq v \leq 2$, in which case the invariance conditions reduce down to the following \dL formulas:
\begin{enumerate}
    \item Initial condition: $x = 0 \land v = 0 \rightarrow j$
    \item Terminal condition: $j \rightarrow -2 \leq x \leq 2$
    \item Inductive condition:
    \[j \land t = 0 \land -1 \leq v \leq 1 \land \left(x \geq 1 \rightarrow v = -1\right) \land \left(x \leq -1 \rightarrow v = 1\right) \rightarrow \dbox{x' = v, v' = 0, t' = 1 \& t \leq 1/2}{j}\]
\end{enumerate}
The first two conditions are standard to ensure that $j$ is implied by the initial conditions while also being strong enough to ensure the safety condition, the last condition ensures that $j$ is \emph{inductive} and the validity of $j$ is preserved after one execution of the loop body. It can be observed that the constructed invariant $j \equiv -2 \leq x \leq 2 \land -2 \leq v \leq 2$ trivially satisfies the first two proof obligations. Consider condition (3), which is a safety problem of the form $\safe(I, S)$ where the bounded sets $I, S$ are defined via
\begin{align*}
    I &\equiv j \land -1 \leq v \leq 1 \land \left(x \geq 1 \rightarrow v = -1\right) \land \left(x \leq -1 \rightarrow v = 1\right)\\
    S &\equiv j
\end{align*}
Importantly, note that $(x, v) = (-2, 1) \in \eval{I} \cap \eval{S}$ is a point on the \emph{boundary} of $\eval{S}$, therefore $\safe(I, S)$ \emph{cannot} be proven by invoking any completeness result that does not distinguish $\safe(I, S)$ from $\safe(\overline{I}, S^o)$ because the first is valid and the second is invalid. Concretely, this implies that defining $\safe(I, S)$ to be ``robustly valid'' if $\safe(\overline{I}, S^o)$ is valid results in a notion of robustness that is not inductive. By contrast, it is not hard to see that $\safer(I, S)$ is indeed valid, and therefore provable by the completeness result established in this article. While the dynamics of $x' = v$ is simple and solvable in this case, the results of this article establish the general completeness of topological robustness, even for much more complicated non-linear dynamics. 

Consequently, it follows from the results of this article that $\dL$ is complete for robust loop invariants, given a loop invariant $j$ such that the corresponding safety problem arising from the inductive condition is robustly valid, the overall safety will be provable in $\dL$. Fundamentally, the (only) difference between $\safer(I, S)$ and $\safe(\overline{I}, S^o)$ at $t = 0$ of allowing there to be no positive separation is precisely what enables the former notion to be inductive and not the latter.

Similar to how the seemingly minor change at $t = 0$ results in $\safer(I, S)$ being substantially better-behaved and inductive, this change also results in a substantially more complicated proof of completeness compared to $\safe(\overline{I}, S^o)$. Indeed, the completeness of $\safe(\overline{I}, S^o)$ has been established in earlier work \cite{DBLP:journals/jacm/PlatzerQ25}, essentially by proving that \dL is ``complete for numerical approximations'' and utilizing the fact that a positive separation at all times (including $t = 0$) implies that sufficiently accurate numerical approximations to the flow of $x' = p(x)$ will witness the validity of $\safe(\overline{I}, S^o)$. However, such numerical approximations are established in the framework of computable analysis, where computations (of real-numbers) are inherently inexact. Fundamentally, when merely operating on the level of (real) computability one cannot (computably) distinguish a set $S \subseteq \R^n$ from its topological closure $\overline{S}$, and therefore cannot distinguish the validity of the formulas $\safer(I, S), \safer(\overline{I}, S^o)$, but this is precisely the challenge as it is possible for $\safer(I, S)$ to be valid and $\safer(\overline{I}, S^o)$ be invalid due to the different requirements at $t = 0$ ($\eval{I} \subseteq \eval{S}$ compared to $\overline{\eval{I}} \subseteq \eval{S}^o$). Thus, computability-theoretic methods via numerical approximations will not suffice to establish completeness of $\safer(I, S)$.

This article precisely combines the advantages of both computability-theoretic approaches to robust ODEs and axiomatic treatments of ODEs, such that a complete axiomatization for $\safer(I, S)$ is obtained (Figure \ref{fig: proof_strat}). Intuitively, completeness is obtained by suitably partitioning the time horizon of interest $[0, T]$ into two parts $[0, s], [s, T]$ for some $s \in \Q^+$ sufficiently small. On the small interval $[0, s]$, safety is established by leveraging the symbolic nature of polynomial ODEs with sufficiently accurate Taylor approximations and the $\folr$ definability of $I, S$ to prove that the trajectory is away from the unsafe set $S^c$. On the interval $[s, T]$, safety can then be established by computability-theoretic methods with sufficiently accurate numerical approximations. The primary difficulty is in establishing the existence of such a rational constant $s$, which uses results from subanalytic geometry (\L{}ojasiewicz's inequality for continuous subanalytic functions, Theorem \ref{thm: subanalytic inequality}) to derive suitable local progress bounds (Theorem \ref{thm: compact progression characterization}). This combination of deductive logic with computable analysis  provides the desired \emph{complete logical foundation for the safety verification of robust differential equations} justified by subanalytic geometry. 

\begin{figure}
    \centering
    \includegraphics[width=\linewidth]{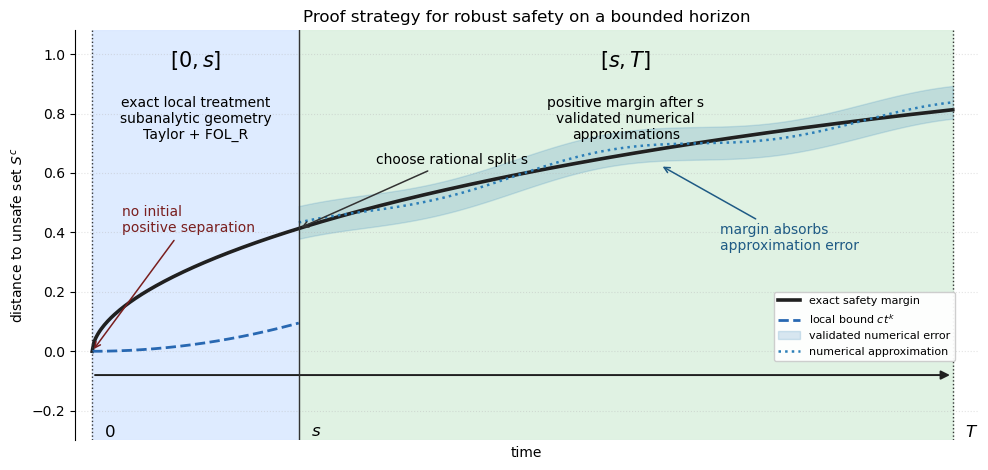}
    \captionof{figure}{Proof Strategy}
    \label{fig: proof_strat}
\end{figure}

\paragraph{Robust Relations.}
While the primary result of this article is to establish a complete axiomatization for robust safety of ODEs that is inductive, it also offers an interesting perspective on the different ways one can define what it means for a safety problem $\safe(I, S)$ to be ``robustly valid'' over a finite time horizon $[0, T]$:

\begin{itemize}
    \item \emph{state robustness:} $\safe(I, S)$ is robustly valid if $\safe(\overline{I}, S^o)$ is valid.
    \item  \emph{dynamics robustness:} $\safe(I, S)$ is robustly valid if safety holds by relaxing the dynamics $x' = p(x)$ to differential inclusions $\norm{x' - p(x)} < \eps$ for some $\eps \in \Q^+$ \cite{DBLP:journals/tac/Ratschan18}. 
    \item \emph{topological robustness:} $\safe(I, S)$ is robustly valid if $\safer(I, S)$ is valid. This robustness notion is presented in this article.  
\end{itemize}

\begin{table}[h]
    \centering
    \renewcommand{\arraystretch}{1.5}%
    \caption{Summary of robustness properties compared to the new notion of topological robustness $\safer(I, S)$}
    \label{tab: robust_comp}
    \begin{tabular}{|l|c|c|c|}
        \hline
        & \textbf{State Robustness} & \textbf{Dynamics Robustness} & \textbf{Topological Robustness} \\ \hline
        
        \textbf{Axiomatization} & $\checkmark$ & $\times$ & $\checkmark$ \\ \hline
        
        \textbf{Inductive}     & $\times$ & $\checkmark$ & $\checkmark$ \\ \hline
    \end{tabular}
\end{table}

State robustness also admits a complete axiomatization \cite{DBLP:journals/jacm/PlatzerQ25} but lacks the inductive property. By comparison, it is not hard to see that dynamics robustness can be made inductive, but is challenging to obtain a complete logical axiomatization due to the use of differential inclusions, which essentially quantifies over all $C^1$ solutions and does not admit adequate logical frameworks like $\dL$. From this perspective, the definition of topological robustness with $\safer(I, S)$ attains the best of both worlds - it is inductive while also exhibiting a complete logical axiomatization (cf. Table \ref{tab: robust_comp}). Furthermore, as the underlying time horizon $[0, T]$ is compact, these different notions of robustness can be directly compared, resulting in the following inclusions.
\[\text{State Robustness} \subsetneq \text{Dynamics Robustness} \subsetneq \text{Topological Robustness}\]

Topological robustness $\safer(I, S)$ is also the weakest and most general notion of robustness among the three. While the definition of topological robustness may at first appear less natural compared to the more familiar notions of state/dynamics robustness, the topological characterization of $\safer(I, S)$ is in fact a natural generalization of dynamics robustness, and can be viewed as the limit of a hierarchy of increasingly more general differential inclusions. Consider the modified notions of dynamics robustness $\safe_n(I, S)$ that hold if the differential inclusion $\norm{x' - p(x)} < \eps t^n$ on $(0, T]$ (where $t$ denotes the time variable) is safe for some $\eps \in \Q^+$, with $n \in \N$. Then $\safe_0(I, S)$ is exactly dynamics robustness, $\safe_1(I, S)$ corresponds to the differential inclusion $\norm{x' - p(x)} < \eps t$ and so on. It is not hard to see that for all $n$, $\safe_n(I, S)$ implies $\safe_{n + 1}(I, S)$ over bounded time horizons, and such implications can also be strict. For example, consider the simple dynamics
\[x' = v, v' = 1\]
where the safety requirement is $S(x, v) \equiv x \geq 0$ on the time interval $[0, 1]$ and the initial conditions are $I(x, v) \equiv x = 0 \land v = 0$. $\safe_0(I, S)$ does not hold since for all $\eps \in \Q^+$ the dynamics $x' = v - \eps, v' = 1$ will be unsafe. By contrast, it is not hard to see that $\safe_1(I, S)$ does hold with a perturbation of $\frac{t}{4}$. In light of this example, it is natural to consider the ``most general'' notion of dynamics robustness, denoted $\safe_\infty(I, S)$, by taking the infinite disjunction of $\safe_n(I, S)$ across all $n$. That is, $\safe_\infty(I, S)$ is valid if and only if there exists some $n \in \N$ such that $\safe_n(I, S)$ is valid.
\[\safe_\infty(I, S) \equiv \bigvee_{n \in \N} \safe_n (I, S)\]
It turns out that this is exactly equivalent to $\safer(I, S)$ (Theorem \ref{thm: limiting characterization of robust safety}). Hence, not only is the definition of $\safer(I, S)$ natural, it can also be viewed as the robust safety of the limiting dynamics of a hierarchy of differential inclusions:
\[\text{State Robustness} \subsetneq \text{Dynamics Robustness} \equiv \safe_0 \subsetneq \safe_1 \subsetneq \dots \subsetneq \safe_\infty \equiv \safer\]

\paragraph{Contributions.}
This article also establishes approximate decidability results for topological robustness, as well as extensions of both completeness and approximate decidability to ODEs with \emph{domain constraints}, where the flow of $x' = p(x)$ is constrained to some semialgebraic set $Q$.
In summary, this article establishes the following main results: 

\begin{enumerate}
    \item {\bf{Completeness for robust safety}}: $\dL$ is complete for robust safety, for all bounded $I, S \in \folr$ and $T \in \Q^+$ validity and provability of $\safer(I, S)$ coincide.
    \[\models \safer(I, S) ~{\iff}~~~ \vdash \safer(I, S)\]
    Furthermore, this completeness is effective. There exists a direct computable algorithm that produces a proof of $\safer(I, S)$ in $\dL$ provided that $\safer(I, S)$ is valid. 

    \item {\bf{$\delta$-decidability for robust safety}}: There exists a computable algorithm such that, given a (computable) perturbation function $\delta \in C^0([0, T], \R^{\geq 0})$ and a robust safety problem $\safer(I, S)$, it correctly outputs one of the following
    \begin{itemize}
        \item $\safer(I, S)$ is valid and $I$ is robustly safe. 
        \item $\safer(I, S)$ is not valid under a perturbation of $\delta$ (not $\delta$-safe).
    \end{itemize}
    Furthermore, this characterization is exact, $\safer(I, S)$ is valid if and only if it is $\delta$-safe for some $\delta$. 

    \item {\bf{Extensions to domain constraints}}: This article also discusses extensions of the results above to the case of constrained evolution, where flows of $x' = p(x)$ are constrained to stop before they leave a given semialgebraic set $Q \in \folr$. Earlier completeness results \cite{DBLP:journals/jacm/PlatzerQ25, DBLP:conf/ijcar/AbouElWafaP26} concerning $\safe(I, S)$ for $I$ compact and $S$ open are also generalized to allow for topologically closed constraint sets that are not necessarily bounded. 
\end{enumerate}

\section{Related Work}

This article establishes a complete logical axiomatization for robust safety of (polynomial) ODEs, utilizing both the symbolic properties of polynomial ODEs and also their computational tameness. To the best of our knowledge this is the first complete axiomatization for such robust safety properties allowing for arbitrary (bounded) semialgebraic sets as initial/post-conditions. Earlier works concerning axiomatizability either do not account for robustness  \cite{DBLP:journals/jacm/PlatzerQ25, DBLP:journals/jacm/PlatzerT20}, only concern completeness relative to a non-computable oracle \cite{DBLP:conf/lics/Platzer12a} or enforce topological restrictions that are non-inductive \cite{DBLP:journals/jacm/PlatzerQ25, DBLP:conf/ijcar/AbouElWafaP26}. Prior works on robust ODEs \cite{DBLP:journals/jsc/Ratschan02, Franek_Ratschan_Zgliczynski_2016, Gao_Kong_Chen_Clarke_2014} primarily concern the computability-theoretic properties of such robust ODEs, in particular the computability of their reachable sets and therefore do not yield logical axiomatizations. 
\\
{\bf{\emph{Computability of robust ODEs}}}: Previous work \cite{DBLP:conf/csl/BlancB24} has essentially shown that the reachability relation of a robust ODE is (Type-Two) computable. However, as alluded to in the introduction, Type-Two computability fails to distinguish a set $S \subseteq \R^n$ from its closure $\overline{S}$. Hence such techniques cannot handle robust safety as presented in this article which allows for safety with no positive separation between the initial conditions and the unsafe sets. I.e. It is possible for a set of initial conditions $I \subseteq \R^n$ to be robustly safe and for $\overline{I}$ to be unsafe. Notions of approximate decidability \cite{Gao_Kong_Clarke_2013-dreal, Gao_Kong_Chen_Clarke_2014} have also been developed utilizing computability-theoretic focused treatments of robustness, but suffer from the same problem where the safety of $I$ and $\overline{I}$ cannot be distinguished, thereby lacking the capability of supporting inductive arguments used in theorem proving. Earlier work \cite{DBLP:journals/tac/Ratschan18} also established converse theorems for safety/barrier certificates assuming the dynamics robustness of ODEs. 
\\
{\bf{\emph{Proof theory of ODEs}}}: The qualitative study of ODEs through their logical axiomatizations was initiated in earlier works \cite{DBLP:journals/jar/Platzer08, DBLP:conf/lics/Platzer12a}, proving properties of ODEs through a small set of core axioms such that every established property is accompanied by a symbolic proof in deductive logic that can be independently checked in theorem provers implementing this logic (e.g. KeYmaera~X \cite{DBLP:conf/cade/FultonMQVP15}). Prior works in this direction established completeness properties for differential invariants \cite{DBLP:journals/jacm/PlatzerT20} and open properties of initial-value problems (IVPs) where the set of initial conditions is compact \cite{DBLP:journals/jacm/PlatzerQ25}, in essence the completeness of state robustness for ODEs. Recently, such completeness properties \cite{DBLP:journals/jacm/PlatzerQ25} were further extended to reachability properties of hybrid systems \cite{DBLP:conf/ijcar/AbouElWafaP26} by identifying a fragment of $\dL$ such that every definable dynamical system is state robust by construction. Importantly, such results \cite{DBLP:journals/jacm/PlatzerQ25, DBLP:conf/ijcar/AbouElWafaP26} are about state robustness and do not concern \emph{inductive} notions of \emph{robust safety}, and therefore do not apply in the setting of this article which concerns the axiomatization of topological robustness. This article can be seen as a generalization of completeness for (compact) IVPs to robust ODEs where the initial/post-conditions are no longer topologically constrained to compact/open sets. Theorem \ref{thm: compact IVP constrained safety completeness} is also a direct generalization of completeness of safety for compact IVPs that allows for unbounded domain constraints \cite{DBLP:journals/jacm/PlatzerQ25, DBLP:conf/ijcar/AbouElWafaP26}. Furthermore, the results of this article can also be seen as a generalization of completeness for differential invariants \cite{DBLP:journals/jacm/PlatzerT20} to completeness of \emph{robust} loop invariants for \emph{hybrid} dynamical systems. A notion that is similar in spirit to robust safety (safety modulo null sets) \cite{DBLP:conf/cade/CordwellP19} has also been studied in earlier work, but does not establish completeness results and is primarily focused on establishing some sound axiomatization.  

\section{Preliminaries}
\subsection{Differential Dynamic Logic}
\label{sec: prelim_dL}
This section provides a brief review of differential dynamic logic ($\dL$) and its axiomatization, focusing on the continuous fragment of $\dL$ \cite{DBLP:conf/lics/Platzer12a}.

\subsubsection{Syntax}
Terms in $\dL$ are formed by the following grammar, where $\V$ denotes the set of all variables, $x \in \V$ is a variable and $c \in \Q$ is a rational constant. Equivalently, terms are (multivariate) polynomials over $\V$ with rational coefficients.
\[p, q\bebecomes x\alternative c\alternative p + q\alternative p\cdot q\]

$\dL$ formulas have the following grammar, where ${\sim} \in \{=, \neq, \geq, >, \leq, <\}$ is a comparison relation and $\alpha$ is a system of differential equations ($\dL$ allows for $\alpha$ to be from the more general class of \emph{hybrid programs} \cite{DBLP:journals/jar/Platzer08}, which is not needed here)
\begin{align*}
    \phi, \psi &\bebecomes p {\sim} q \alternative \phi \land \psi \alternative \phi \lor \psi \alternative \neg \phi \alternative \forall x \phi \alternative \exists x \phi \alternative 
\ddiamond{\alpha}{\phi} \alternative \dbox{\alpha}{\phi}\\
\alpha &\bebecomes \cdots \alternative x' = p(x) \& Q
\end{align*}
This paper only deals with the case $\alpha \equiv x' = p(x) \& Q$, where $x' = p(x)$ represents an autonomous system of ODEs $x_1' = p_1(x), \cdots, x_n' = p_n(x)$ and $x = (x_1, \cdots, x_n)$ is understood to be vectorial. $Q$ here refers to some $\folr$ formula known as the \emph{domain constraint}. Intuitively, this restricts the region for which the ODE $x' = p(x)$ is allowed to evolve. In this article, it is assumed without loss of generality that all ODEs include a clock variable $t' = 1$. 

Lastly, we state some conventions that are used throughout this paper. For terms and formulas that appear in contexts involving ODEs $x' = p(x)$, it is sometimes useful to restrict the variables that they can refer to. When such cases arise, we will indicate such free variables by explicitly writing them as arguments. For example, $t()$ means that the term $t$ cannot refer to any bound variable of the ODE $x' = p(x)$. In contrast, $Q(x)$ (or just $Q$) indicates that all the variables may be referred to as free variables. These variable dependencies can be made formal and rigorous through $\dL$'s uniform substitution calculus \cite{DBLP:journals/jar/Platzer17}.

\subsubsection{Semantics}
A state $\omega$ is a mapping $\omega : \V \to \R$ that assigns a value to every variable. We denote $\S$ as the set of all such states. For a term $p$, its semantics in state $\omega \in \S$ written as $\eval{p}$ is the real value obtained by evaluating the polynomial $p$ at the state $\omega$. For a $\dL$ formula $\phi$, its semantics $\eval{\phi}$ is defined to be the set of all states $\omega \in \S$ such that $\omega \models \phi$, i.e. the formula $\phi$ is true in $\omega$. The semantics of first-order logical connectives are defined as usual, e.g. $\eval{\phi \lor \psi} = \eval{\phi} \cup \eval{\psi}$. For ODEs $\alpha \equiv x' = p(x) \& Q$, the semantics for $\dbox{\alpha}{\phi}$ and $\ddiamond{\alpha}{\phi}$ are defined as follows. For the given ODE $\alpha$ and any state $\omega \in \S$, let $\Phi_\omega : [0, T) \to \S$ be the solution to $x' = p(x)$ extended maximally to the right with $0 < T \leq \infty$ and $\Phi_\omega(0) = \omega$. We then have:
\begin{align*}
    &\omega \in \eval{\dbox{\alpha}{\phi}} \text{ iff for all $0 \leq \tau < T$ such that $\Phi_\omega(\xi) \models Q$ for all $0 \leq \xi \leq \tau$, we have $\Phi_\omega(\tau) \models \phi$}\\
    &\omega \in \eval{\ddiamond{\alpha}{\phi}} \text{ iff there exists some $0 \leq \tau < T$ such that $\Phi_\omega(\xi) \models Q$ for all $0 \leq \xi \leq \tau$ and $\Phi_\omega(\tau) \models \phi$}
\end{align*}
Intuitively, the formula $\dbox{\alpha}{\phi}$ expresses a \emph{safety} property, that $\phi$ holds along all flows of the ODE $x' = p(x)$ that remain inside the domain constraint $Q$. Dually, the formula $\ddiamond{\alpha}{\phi}$ expresses a \emph{liveness} property, that there is some flow along $x' = p(x)$ staying within $Q$ eventually reaching a state where $\phi$ is true. 

Finally, a formula $\phi$ is said to be valid if $\eval{\phi} = \S$, i.e. it is true in all states. For a formula $I$, we say it is a \emph{differential invariant} of the ODE $x' = p(x) \& Q$ if the formula $I \rightarrow [x' = p(x) \& Q]I$ is valid. One important fact is that $\dL$ is (effectively) complete for differential invariants in $\folr$ \cite{DBLP:journals/jacm/PlatzerT20}. In other words, if $I$ is a differential invariant, then one can effectively find a syntactic proof of $I \rightarrow [x' = p(x) \& Q]I$.

\subsubsection{Proof calculus}
The derivations in this paper are presented in a standard, classical sequent calculus with the usual rules for manipulating logical connectives and sequents. For a \emph{sequent} $\Gamma \vdash \phi$, its semantics is equivalent to the formula $\left(\bigwedge_{\psi \in \Gamma} \psi\right) \rightarrow \phi$, and the sequent is called valid if its corresponding formula is valid. For a sequent $\Gamma \vdash \phi$, formulas $\Gamma$ are called antecedents, and $\phi$ the succedent. Completed proof branches are marked with $*$ in a sequent proof, and since $\R$ has a decidable theory via quantifier elimination \cite{Tarski_1948}, statements that follow from real arithmetic are proven with the rule \irref{qear}. An axiom (schema) is called \emph{sound} iff all of its instances are valid, and a proof rule is sound if the validity of all its premises entails the validity of its conclusion. Axioms and proof rules are \emph{derivable} if they can be proven from $\dL$ axioms and proof rules via the aforementioned sequent calculus. Derivable axioms are automatically sound due to the soundness of $\dL$'s axiomatization \cite{DBLP:journals/jar/Platzer08, DBLP:journals/jacm/PlatzerT20}.

This article uses a fragment of the base axiomatization of $\dL$ \cite{DBLP:conf/lics/Platzer12b} (focusing on the continuous case) along with an extended axiomatization developed in prior works used to handle ODE invariants and liveness properties \cite{DBLP:journals/jacm/PlatzerT20, DBLP:journals/fac/TanP21}. A complete list of the axioms used is provided in Appendix \ref{sec: dL axiomatization}.

\subsection{Computability and Computable analysis}
\label{sec: Computable analysis prelims}
The completeness properties established in this article are \emph{effective}, there is a direct (computable) correspondence between the valid formulas and their proofs. That is, there is a computable algorithm taking valid formulas as inputs and outputting corresponding proofs in $\dL$. To achieve the desired completeness results effectively, it is necessary to utilize the computability-theoretic properties of ODEs framed in the language of \emph{computable analysis}. The following provides the required background on computable analysis, under the standard framework of \emph{Type Two Theory of Effectivity} (TTE) \cite{Weihrauch_2000}.  

\begin{definition}[Name]
    Let $x \in \R$ be any real number, a \emph{name} for $x$ is a sequence of rationals $(q_i)_i \subseteq \Q$ such that
    \[\boldsymbol{\forall} i \in \N~(\abs{q_i - x} < 2^{-i})\]
    This definition naturally extends to $\R^n$ by requiring names to reside in $\Q^n$ and using the standard Euclidean norm. For $x \in \R^n$, we denote the set of all names of $x$ as $\Gamma(x)$.
\end{definition}

For a real number $x \in \R^n$, its names act as the ``descriptions'' of $x$. A real $x$ is defined to be computable if it has a computable name. 

\begin{definition}[Type-two computable number]
    \label{def: computable real}
    Let $x \in \R^n$ be any real number, $x$ is \emph{Type-Two computable} if it has a computable name. That is, there is some computable sequence $(q_i)_i \subseteq \Q^n$ that is a name for $x$. 
\end{definition}

Intuitively, a real $x \in \R^n$ is (Type-Two) computable if and only if it can be computably approximated by a sequence of vectors of rational numbers with a computable rate of convergence. From now on, whenever we refer to the computability of numbers in $\R^n$, we mean Type-Two computability. 

\begin{definition}
An \emph{oracle machine $M$} is a Turing machine that allows for an additional one-way read-only input tape that represents some input oracle used. The machine is allowed to read this input tape up to arbitrary, but finite, lengths.
\end{definition}

Oracle machines are Turing machines with some access to outside information, the ``oracle'' input tape. The machine may use any finite amount of information on this tape. For an oracle machine $M$, and an infinite binary sequence $\rho \in 2^{\omega}$, $M^\rho$ denotes the oracle machine $M$ with oracle $\rho$. 

The following definition provides a notion of computability on the closed subsets of $\R^n$.

\begin{definition}[{\cite[Corollary~5.1.8]{Weihrauch_2000}}]
    A closed subset $E \subseteq \R^n$ is \emph{computable} if its corresponding distance function $x \mapsto \inf_{y \in E} \norm{x - y}$ is computable.
\end{definition}

It can be easily seen that every $\folr$ definable closed set is computable.

\begin{theorem}[\cite{DBLP:journals/jacm/PlatzerQ25}]
    \label{thm: definable sets are computable}
    If $E \subseteq \R^n$ is a closed subset defined by the $\folr$ formula $\phi(x)$, then it is a computable closed set and its distance function is computable uniformly in $\phi(x)$.
\end{theorem}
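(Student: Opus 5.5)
The plan is to compute the distance function $d_E(x) = \inf_{y \in E}\norm{x - y}$ by reducing it to exact real arithmetic. The tools are quantifier elimination over $\R$ and the fact that $d_E$ is $1$-Lipschitz.

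First, $d_E$ is itself $\folr$-definable, uniformly in $\phi$. The graph of $d_E$ is defined by
\[\psi(x, r) \equiv r \geq 0 \land \forall y\,\big(\phi(y) \rightarrow \norm{x-y}^2 \geq r^2\big) \land \forall \eps > 0\, \exists y\,\big(\phi(y) \land \norm{x-y}^2 < (r+\eps)^2\big).\]
If $E$ is empty, $d_E \equiv \infty$; this case is detected by deciding the sentence $\exists y\,\phi(y)$ and handled by convention. Because $E$ is closed, the infimum is attained, so the last conjunct can even be replaced by $\exists y\,(\phi(y) \land \norm{x-y}^2 = r^2)$.

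Second, to compute $d_E$ at the rational points $q \in \Q^n$ to precision $2^{-i}$, I would bisect using decisions of closed sentences. For rationals $a < b$, decide $\exists r\,(\psi(q, r) \land a \leq r \leq b)$ by Tarski's quantifier elimination. Start from an upper bound, obtained for example by searching for some $b \in \N$ with $\exists y\,(\phi(y) \land \norm{q-y}^2 \leq b^2)$. Then halve the interval until its width is below $2^{-i}$. Every step is a decidable sentence with rational parameters, and the procedure depends only on $\phi$, so it is uniform.

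Third, to extend from rational inputs to arbitrary reals given by a name $(q_j)_j$ of $x$, use the $1$-Lipschitz estimate $\abs{d_E(x) - d_E(q_j)} \leq \norm{x - q_j} < 2^{-j}$. To output $d_E(x)$ within $2^{-i}$, the machine reads $q_{i+1}$ from the oracle tape and computes $d_E(q_{i+1})$ within $2^{-(i+1)}$; the two errors add to less than $2^{-i}$. This gives an oracle machine computing $d_E$ in the TTE sense, so $E$ is a computable closed set by the cited characterization (Weihrauch, Corollary 5.1.8).

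The main obstacle is largely bookkeeping: ensuring uniformity in $\phi$, including the empty-set case, and making sure the upper-bound search terminates. For nonempty $E$ the search terminates trivially, since some finite $b$ exists. Everything else follows from the decidability of real-closed fields and the Lipschitz property.
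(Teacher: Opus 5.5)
Your proof is correct. It uses quantifier elimination to decide metric questions about $E$ at rational parameters (your bisection on $\exists r\,(\psi(q,r) \land a \leq r \leq b)$ works, and deciding $\exists y\,(\phi(y) \land \norm{q-y}^2 < m^2)$ directly would spare you the graph formula $\psi$), and then the $1$-Lipschitz bound transfers the approximation from rational points to arbitrary names. The paper gives no proof of its own and imports this fact from Platzer and Qian; your argument is the standard one for it.
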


The following definition relates the use of oracle machines to computable functions in TTE.

\begin{definition}[Computable function]
    A function $f : E \subseteq \R^n \to \R^m$ with $E$ a computable closed set is \emph{computable} if there is some oracle machine $M$ such that 
    \[\boldsymbol{\forall}x \in E~\boldsymbol{\forall} \rho \in \Gamma(x)~((M^\rho(i))_i \in \Gamma(f(x)))\]
    I.e. $M$ maps names of $x$ to names of $f(x)$ for all $x \in E$. 
\end{definition}

A useful result of computable analysis is that the classical extreme value theorem holds computably \cite[Corollary~6.2.5]{Weihrauch_2000}.

\begin{theorem}[Computable extreme value theorem {\cite[Corollary~6.2.5]{Weihrauch_2000}}]
    \label{thm: computable extreme value theorem}
    Let $f : K \to \R$ be a computable function on the compact set $K \subset \R^n$ defined by some $\folr$ formula $\phi(x)$. Then $\max_{x \in K} (f(x))$ and $\min_{x \in K} (f(x))$ are uniformly computable in $f, \phi(x)$. 
\end{theorem}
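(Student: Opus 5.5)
The plan is to recover the computable extreme value theorem from the effective compactness of $K$ and the fact that computable functions have computable moduli of continuity on compact sets. I only treat $\max_{x \in K} f(x)$, since $\min_{x \in K} f(x) = -\max_{x \in K}(-f(x))$ and negation is computable.

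\textbf{Step 1: effective covers of $K$.} Since $K$ is defined by $\phi(x)$, \prettyref{thm: definable sets are computable} shows that the distance function $d_K$ is computable uniformly in $\phi(x)$. From $\phi$ we also obtain, by quantifier elimination, a rational bound $R$ with $K \subseteq [-R,R]^n$. For each precision $k$, we enumerate the finitely many rational grid points $q$ of mesh $2^{-k}$ in $[-R,R]^n$ and approximate $d_K(q)$ to within $2^{-k}$. We retain those $q$ whose approximate distance is below $2 \cdot 2^{-k}$. Every retained point is then within $3\cdot 2^{-k}$ of $K$, and every point of $K$ is within $\sqrt{n}\,2^{-k}$ of some retained point. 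This gives a uniformly computable finite $2^{-k+c}$-net of $K$, made of rational points near $K$ rather than points in $K$.

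\textbf{Step 2: a computable modulus of continuity.} Suppose the oracle machine $M$ computes $f$, and fix a target precision $2^{-m}$. For each $x\in K$, the run of $M$ on a name of $x$ that outputs the $m$-th approximation reads only a finite prefix of the name, so it fixes $f$ up to $2^{-m}$ on a neighbourhood. By compactness of $K$ (König's lemma on the finitely branching tree of rational names over the nets from Step 1), finitely many such runs suffice. Searching over prefix lengths therefore finds some $k(m)$ such that every point of $K$ is within $2^{-k(m)}$ of a net point carrying a certified approximation of $f$ to within $2^{-m}$.

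This is the main obstacle. Net points need not lie in $K$, and $f$ is defined only on $K$. To handle this, I would feed $M$ names built from rational points converging into $K$, which is possible because $d_K$ is computable. Equivalently, I would invoke the standard TTE fact that a computable $f$ on a computably compact domain has a computable modulus of uniform continuity; this is the content of \cite[Corollary~6.2.5]{Weihrauch_2000}.

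\textbf{Step 3: compute the maximum.} With $k=k(m)$ fixed, I evaluate approximations $a_q$ of $f$ at the retained net points, each to precision $2^{-m}$, and output $\max_q a_q$. By uniform continuity, every value $f(x)$ with $x\in K$ is within $2\cdot 2^{-m}$ of some $a_q$, and every $a_q$ is within $2\cdot2^{-m}$ of some value of $f$ on $K$. Hence the output is within $O(2^{-m})$ of $\max_K f$, and the resulting sequence is a name for the maximum. Every step is uniform in $f$ and $\phi$, so the claim follows.
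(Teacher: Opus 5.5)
The paper does not prove this statement. It imports it wholesale from \cite[Corollary~6.2.5]{Weihrauch_2000}. You instead reconstruct the standard direct argument: a rational bound on $K$ by quantifier elimination, effective nets from the computable distance function of Theorem~\ref{thm: definable sets are computable}, a modulus of uniform continuity via K\"onig's lemma on names, and evaluation on the net. That route is sound. It gives a self-contained proof that shows exactly where the $\folr$-definability of $K$ is used. The citation gives brevity and generality, since it needs no semialgebraic structure on the domain. Several places need tightening so that the argument does not lean on what it is proving.

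First, drop the fallback that the modulus fact ``is the content of \cite[Corollary~6.2.5]{Weihrauch_2000}''. That is exactly the reference the paper cites for this theorem, so Step~2 must rest on the K\"onig argument, and you need to say why the search halts. Take as tree the sequences $(q_0,\dots,q_L)$ of retained level-$i$ net points satisfying a decidable consistency condition such as $\norm{q_i-q_j}\le 2^{-i-1}$ for $i<j$. Every infinite branch is then Cauchy. Its limit $x$ satisfies $d_K(x)=\lim_i d_K(q_i)=0$, so $x\in K$, and $\norm{q_i-x}\le 2^{-i-1}$, so the branch is a name of $x$. Hence the prefixes $\sigma$ on which $M$ has not emitted its $m$-th output within $|\sigma|$ steps form a finite subtree. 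An infinite one would yield, by K\"onig's lemma, a branch naming a point of $K$ on which $M$ never emits that output. The depth $L$ of this subtree gives a finite bar. If the nets are fine enough that any two points of $K$ closer than $2^{-L-c}$ have branches agreeing up to depth $L$, the bar yields the modulus.

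Note that $M$ only ever reads finite prefixes here, so the obstacle you flag does not actually arise in Step~2. It does arise in Step~3. Each $a_q$ must approximate $f$ at a genuine point $y_q\in K$ near $q$. For example, take a real algebraic point of $K\cap B(q,r)$ produced by quantifier elimination (cf.\ Lemma~\ref{lem: real algebraic numbers are dense}), or a point computed from $d_K$ by successive approximation. It must not be the output of $M$ on a prefix ending in $q$. Such a prefix may have no extension into $K$, so its output is arbitrary, and your claim that every $a_q$ is close to a value of $f$ on $K$ then fails; a spurious large $a_q$ would overshoot $\max_K f$.

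Finally, the retention threshold $2\cdot 2^{-k}$ in Step~1 has to scale with $\sqrt{n}$, e.g.\ $(\sqrt{n}/2+2)\,2^{-k}$. Otherwise, in high dimension, no grid point near a given $x\in K$ need be retained, and the net can even be empty.
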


\subsection{Subanalytic Geometry}
This section provides the results needed from subanalytic geometry \cite{MR972342}.
\begin{definition}[Semianalytic set \cite{MR972342, Kosiba_2023}]
    A set $A \subseteq \R^n$ is called \emph{semianalytic} if for all $x \in \R^n$ there is a neighborhood $x \in U$ and real analytic functions $f_i, g_{i, j}$ in $U$ such that 
    \[A \cap U = \bigcup^p_{i}\bigcap^q_j \{x \in U ~\vert~ f_i(x) = 0, g_{ij}(x) > 0\}\]
\end{definition}

\begin{definition}[Subanalytic set \cite{MR972342, Kosiba_2023}]
    A set $E \subseteq \R^n$ is \emph{subanalytic} if for all $x \in \R^n$, there is a neighborhood $x \in U$ such that $E \cap U = \pi(A)$, where $\pi : \R^{n + m} \to \R^n$ is the natural projection and $A \subseteq \R^{n + m}$ is a semianalytic set relatively compact in $\R^{n + m}$. 
\end{definition}

The family of subanalytic sets is closed under the standard Boolean operations of unions, intersections and complements \cite{MR972342}. 

\begin{definition}[Subanalytic function \cite{Kosiba_2023}]
    A function $f : U \to \R$ where $U \subseteq \R^n$ is a subanalytic set is subanalytic if its graph is a subanalytic set in $\R^{n + 1}$. 
\end{definition}
In particular, semianalytic sets are subanalytic and real analytic functions are subanalytic functions when restricted to compact subanalytic subsets of their domain. It then follows that taking the pointwise min-max of analytic functions results in subanalytic functions.

\begin{lemma}
    \label{lemma: pointwise min subanalytic}
    Let $K$ be a compact semialgebraic set and $(f_{i, j})_{i \leq n, j \leq m} : U \to \R$ be a finite collection of real analytic functions with $K \subset U$, then the function $h \coloneqq \min_{i \leq n} \max_{j \leq m}  f_{i, j} $ is subanalytic on $K$. 
\end{lemma}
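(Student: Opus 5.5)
The plan is to show that the graph of $h$ over $K$ is a subanalytic subset of $\R^{n_0+1}$, where $n_0$ denotes the ambient dimension of $K$. We write it as a finite Boolean combination of sets that are already known to be subanalytic, and then use closure of subanalytic sets under unions, intersections and complements.

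First, for each $i,j$ we consider the graph $G_{i,j} = \{(x,y) \mid x \in K,\ y = f_{i,j}(x)\}$ and the sets $A_{i,j}^{\geq} = \{(x,y) \mid x\in K,\ f_{i,j}(x) \geq y\}$ and $A_{i,j}^{\leq} = \{(x,y) \mid x \in K,\ f_{i,j}(x) \leq y\}$. Since $K$ is compact semialgebraic and $f_{i,j}$ is analytic on the open set $U \supset K$, these sets are semianalytic near every point. Near a point $(x_0,y_0)$ with $x_0 \in U$, each is a finite Boolean combination of the polynomial (semialgebraic) conditions defining $K$ together with the analytic conditions $y - f_{i,j}(x) = 0$, $>0$ or $<0$, which are valid on a neighborhood $U' \times \R$. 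Near points with $x_0 \notin U$, we use that $K$ is closed and $x_0 \notin K$, so a small neighborhood misses $K$ and the set is locally empty. One subtlety needs care. The definition of subanalytic requires a projection of a \emph{relatively compact} semianalytic set, and since $y$ is unbounded, I would restrict to $K \times [-M, M]$ with $M$ exceeding $\max_{i,j}\max_K |f_{i,j}|$, which is finite by compactness. This makes everything bounded, and every semianalytic set that is bounded locally is trivially subanalytic via the identity projection.

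Second, the set $\{(x,y) \mid x\in K,\ y = \max_{j} f_{i,j}(x)\}$ equals $\bigcap_j A_{i,j}^{\leq} \cap \bigcup_j G_{i,j}$. This says that $y$ is at least every $f_{i,j}(x)$ and equals one of them. Call this set $H_i$, the graph of $g_i = \max_j f_{i,j}$ on $K$. Similarly, the graph of $h = \min_i g_i$ is $\bigcap_i\{(x,y) \mid y \leq g_i(x)\} \cap \bigcup_i H_i$. Here $\{(x,y) \mid x \in K,\ y \le g_i(x)\} = \bigcup_j A_{i,j}^{\geq}$, since $y \leq \max_j f_{i,j}(x)$ holds iff $y \le f_{i,j}(x)$ for some $j$. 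Hence the graph of $h$ is a finite union/intersection of the sets from the first step, and it is therefore subanalytic by the stated closure properties. The domain $K$ itself is semialgebraic, hence semianalytic and subanalytic, as the definition of a subanalytic function requires.

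The only step needing real care is the first one: checking the local semianalytic description and the relative compactness in the definition. In particular, we must ensure that the analytic functions are only used inside $U$, and we handle this by the closedness of $K$ argument above. The remaining steps are purely set-theoretic identities.
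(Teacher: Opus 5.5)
Your proof is correct and matches the paper's argument: the same Boolean decomposition, writing $\graph(g_i)$ as $\bigcup_j \{y = f_{i,j}(x)\} \cap \bigcap_j \{y \geq f_{i,j}(x)\}$ and $\graph(h)$ as $\bigcup_i \graph(g_i) \cap \bigcap_i \bigcup_j \{y \leq f_{i,j}(x)\}$, followed by closure of subanalytic sets under Boolean operations. Your check of local semianalyticity over $K$ (points outside $U$ handled via closedness of $K$) fills in detail the paper leaves implicit; restricting $y$ to $[-M, M]$ is harmless but not needed, since the definition of subanalytic is local and intersecting with a small ball already gives relative compactness.
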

\begin{proof}
    For $i \leq n$, denote $g_i \coloneqq \max_{j \leq m}f_{i, j}$, note that the graphs of such $g_i$'s can be written as 
    \[\graph(g_i) = \bigcup_{j \leq m} \graph(f_{i, j}) \cap \bigcap_{j \leq m} \{(x, y) ~\vert~ y - f_{i, j}(x) \geq 0\}\]
    which is a Boolean combination of subanalytic sets as the functions $f_{i, j}$ are real analytic on some neighborhood of $K$ and therefore subanalytic. Similarly, we have $h(x) = y$ if and only if
    \[(x, y) \in \bigcup_{i \leq n} \graph(g_i) \cap \bigcap_{i \leq n} \{(x, y) ~\vert~ g_i(x) - y \geq 0\}\]
    For a fixed $i \leq n$, further note that $g_i(x) - y \geq 0$ holds if and only if
    \[(x, y) \in \bigcup_{j \leq m} \{(x, y) | f_{i, j}(x) - y \geq 0\}\]
    Chaining these together we obtain
    \[\graph(h) = \bigcup_{i \leq n} \graph(g_i) \cap \bigcap_{i \leq n}\bigcup_{j \leq m} \{(x, y) | f_{i, j}(x) - y \geq 0\}\]
    which is a Boolean combination of subanalytic sets, thus $h : K \to \R$ is a subanalytic function. 
\end{proof}

The following lemma shows that the distance of an analytic function to a semialgebraic set is subanalytic, a regularity result. This result will be useful in establishing local progression of such distance functions, its proof is provided for completeness. 

\begin{lemma}
    \label{lemma: distance function is subanalytic}
    Let $K$ be a compact semialgebraic set, $f : U \to \R^n$ be a real analytic function with $K \subset U$, and $\emptyset \neq Q \subseteq \R^n$ be a semialgebraic set, then the distance function 
    \[x \mapsto d(f(x), Q)\]
    for $x \in K$ is subanalytic.
\end{lemma}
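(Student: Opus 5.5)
The plan is to show that the graph of $h(x) \coloneqq d(f(x), Q)$ over $K$ is a subanalytic subset of $\R^{n'+1}$, where $n'$ is the dimension of the ambient space of $K$. We write it as a projection of a relatively compact semianalytic set, or as a Boolean combination of such projections, and then use closure of subanalytic sets under Boolean operations and proper projections.

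The first step reduces $Q$ to a compact set. Since $K$ is compact and $f$ is continuous on a neighborhood of $K$, the image $f(K)$ lies in a ball $B(0,R)$. Fix some $q_0 \in Q$ and set $M \coloneqq R + \norm{q_0}$, so that $d(f(x), Q) \le M$ for all $x \in K$. Nearest points can therefore be sought in $Q_R \coloneqq \overline{Q} \cap \overline{B}(0, R+M+1)$, which gives
\[d(f(x),Q) = d(f(x),\overline Q) = \min_{q \in Q_R} \norm{f(x) - q}.\]
The closure $\overline Q$ is semialgebraic, since it is definable in $\folr$, so $Q_R$ is a compact semialgebraic set and the minimum is attained.

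The second step describes the graph. We claim
\[\graph(h) = \{(x,y) : x \in K,\ y \ge 0,\ \exists q \in Q_R\ (y^2 = \norm{f(x)-q}^2)\} \setminus \{(x,y) : x\in K,\ \exists q \in Q_R\ (\norm{f(x)-q}^2 < y^2)\}.\]
Consider the set $A_1 = \{(x,y,q) : x \in K, q\in Q_R, y \in [0, M], y^2 - \norm{f(x)-q}^2 = 0\}$. It is a bounded intersection of a semialgebraic set with the zero set of an analytic function on a neighborhood of $K \times \R \times \R^n$, so it is semianalytic and relatively compact. Its projection to $(x,y)$ is therefore subanalytic. The set $A_2$, defined by the strict inequality $y^2 - \norm{f(x)-q}^2 > 0$ together with the same compactness bounds, is handled the same way. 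The graph is the difference of these two projections intersected with $K \times [0,M]$, and this is subanalytic by closure under Boolean operations.

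The main technical point, which is mild, is checking that the definition's local representation and relative compactness hold. The analytic functions are only defined on $U \times \R \times \R^n$, so near points outside that region we use the fact that the sets are locally empty. Relative compactness follows from the bounds on $x \in K$, $y \in [0,M]$ and $q \in Q_R$. Correctness of the graph description relies on the minimum over $Q_R$ being attained and equal to $d(f(x), Q)$, which the first step provides.
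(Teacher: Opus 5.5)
Your proof is correct, but it takes a different route from the paper's.

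\textbf{The paper's route.} It eliminates the quantifier over $q$ \emph{before} composing with $f$. It invokes the fact that $d(\cdot,Q)$ is itself a semialgebraic function, writes its graph as $\bigwedge_i\bigvee_j p_{i,j}(z,t)\succeq 0$, and substitutes $z=f(x)$. The graph of $x\mapsto d(f(x),Q)$ over $K$ is then a Boolean combination of sets cut out by analytic inequalities, hence semianalytic and a fortiori subanalytic. No truncation of $Q$, no attained minimum and no projection are needed.

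\textbf{Your route.} You compose with $f$ first and keep the existential quantifier over $q$. Because semianalytic sets are not closed under projection, this puts you in the subanalytic setting and forces extra work:
\begin{itemize}
\item you truncate to the compact $Q_R$ so that $A_1$ and $A_2$ are relatively compact and their projections are subanalytic;
\item you need the minimum to be attained so the graph equals $\pi(A_1)\setminus\pi(A_2)$;
\item the set difference needs closure of subanalytic sets under complement (Gabrielov's theorem).
\end{itemize}
That last fact is much deeper than the elementary closure of semianalytic sets under Boolean operations that the paper's argument uses. Your use of it is legitimate here, since the paper states closure under Boolean operations as a cited fact.

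\textbf{What each buys.} The paper's approach is shorter and gives the stronger conclusion that the graph is semianalytic. Yours never uses semialgebraicity of the distance function, only that $\overline Q$ is semialgebraic. The same truncate-and-project argument would therefore adapt readily to subanalytic rather than semialgebraic $Q$.

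Your handling of points outside $U$ also fills in a detail the paper leaves implicit: since $K$ is closed, such points have neighborhoods on which the sets are empty.
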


\begin{proof}
    Note that since $Q$ is semialgebraic, its distance function is also semialgebraic and therefore has a semialgebraic graph \cite{Bochnak_Coste_Roy_2013}. I.e. 
    \[\graph(d(\cdot, Q)) \equiv \left\{(x, t) \in \R^{n + 1} ~\biggr\vert~ \bigwedge_i\bigvee_j p_{i, j}(x, t) \succeq 0\right\}\]
    where each $p_{i, j}$ is a polynomial and $\succeq~\in \rels$. Hence, the graph of $x \mapsto d(f(x), Q)$ is
    \[\left\{(x, t) \in K \times \R ~\biggr\vert~ \bigwedge_i\bigvee_j p_{i, j}(f(x), t) \succeq 0\right\} = \bigcap_i\bigcup_j \left\{(x, t) \in K \times \R ~\vert~ p_{i, j}(f(x), t) \succeq 0\right\}\]
    which is a Boolean combination of semianalytic sets and therefore subanalytic, as desired. 
\end{proof}

The following variant of \L{}ojasiewicz's inequality for continuous subanalytic functions is needed. Intuitively, it establishes that continuous subanalytic functions admit a quantitative lower bound on its growth. 

\begin{theorem}[\L{}ojasiewicz's inequality {\cite[Proposition~3.17]{denkowska2008ensembles}}]
    \label{thm: subanalytic inequality}
    Let $f, g : K \to \R$ be continuous subanalytic functions on the compact set $K \subset \R^n$ such that $f^{-1}(0) \subseteq g^{-1}(0)$. Then there exist constants $C, N > 0$ such that 
    \[\abs{f(x)} \geq C\abs{g(x)}^N\]
    for all $x \in K$. 
\end{theorem}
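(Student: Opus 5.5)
The plan is to reduce to a one-variable statement and use the fact that subanalytic functions of one real variable have Puiseux-type asymptotics at $0$. Throughout, I will assume that $K$ is itself subanalytic, which holds in every application in this paper since $K$ is compact semialgebraic. If $g \equiv 0$ on $K$ there is nothing to prove, so let $M \coloneqq \max_{x \in K}\abs{g(x)} > 0$.

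\textbf{Step 1: the auxiliary function.} For $s \in (0, M]$ define
\[\phi(s) \coloneqq \min\{\abs{f(x)} ~\vert~ x \in K,~ \abs{g(x)} \geq s\}.\]
The set $K_s \coloneqq \{x \in K \mid \abs{g(x)} \geq s\}$ is compact and nonempty, so the minimum exists. It is strictly positive: if $f(x)=0$ then $g(x)=0$ by the hypothesis $f^{-1}(0) \subseteq g^{-1}(0)$, and no such $x$ lies in $K_s$. Since $K_s$ shrinks as $s$ grows, $\phi$ is nondecreasing. By construction, $\abs{f(x)} \geq \phi(\abs{g(x)})$ whenever $g(x) \neq 0$.

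\textbf{Step 2: $\phi$ is subanalytic.} The graph of $\phi$ is cut out from $\{(s,y)\}$ by two conditions:
\[\exists x \in K\,(\abs{g(x)} \geq s \land \abs{f(x)} = y) \quad\text{and}\quad \forall x \in K\,(\abs{g(x)} \geq s \rightarrow \abs{f(x)} \geq y).\]
The sets inside the quantifiers are Boolean combinations of the subanalytic graphs of $f$ and $g$, pulled back along coordinate maps. The existential quantifier is a projection along the compact factor $K$, and such a projection is proper. Subanalytic sets are preserved under proper projections, so the first condition defines a subanalytic set. The universal quantifier is the complement of a proper projection of a complement, so it is handled the same way. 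Hence $\graph(\phi)$ is subanalytic.

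This is the step I expect to be the main obstacle. Subanalytic sets are not closed under arbitrary projections, so the compactness of $K$ (properness) must be invoked carefully. The cleanest route is to cite Gabrielov's complement theorem together with Hironaka's theorem that proper images of subanalytic sets are subanalytic.

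\textbf{Step 3: one-variable asymptotics.} A subanalytic function of one variable on $(0,\eps)$ admits a convergent Puiseux expansion near $0$. Since $\phi>0$ on $(0,M]$, there are $a > 0$ and a rational $\alpha \geq 0$ with $\phi(s) = a s^{\alpha}(1+o(1))$ as $s \to 0^+$. Choose $N \geq \max(\alpha, 1)$, which guarantees $N > 0$. Then there is $s_0 \in (0,M]$ with
\[\phi(s) \geq \tfrac{a}{2} s^{\alpha} \geq \tfrac{a}{2} M^{\alpha - N} s^{N} \quad\text{for } s \in (0,s_0],\]
where the second inequality uses $s \leq M$ and $\alpha - N \leq 0$. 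On $[s_0, M]$, monotonicity gives $\phi(s) \geq \phi(s_0) \geq \phi(s_0) M^{-N} s^N$. Taking
\[C \coloneqq \min\left(\tfrac{a}{2} M^{\alpha - N},~ \phi(s_0)M^{-N}\right) > 0\]
yields $\phi(s) \geq C s^N$ on $(0,M]$.

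\textbf{Step 4: conclusion.} For $x \in K$ with $g(x) \neq 0$, Step 1 and Step 3 give $\abs{f(x)} \geq \phi(\abs{g(x)}) \geq C\abs{g(x)}^N$. For $g(x) = 0$ the inequality reads $\abs{f(x)} \geq 0$, which is trivial. This proves the claim on all of $K$.
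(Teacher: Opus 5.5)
Your proof is correct. The paper gives no argument of its own here: it imports the inequality as a black box from Denkowska--Stasica, and applies it only with $K$ compact semialgebraic and $g(x_0,t)=t$.

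\textbf{Comparison with the citation.} What you supply is the standard one-variable reduction familiar from the semialgebraic \L{}ojasiewicz inequality. It rests on two structural facts:
\begin{itemize}
\item closure of subanalytic sets under Boolean operations and proper projections, which makes $\phi$ subanalytic;
\item the Puiseux expansion of a bounded one-variable subanalytic germ, which gives $\phi(s)\geq \tfrac{a}{2}s^{\alpha}$ near $0$.
\end{itemize}
Compared with the bare citation, your route shows where each hypothesis enters:
\begin{itemize}
\item Compactness of $K$ gives both the existence of the minimum and the properness of the projection.
\item The inclusion $f^{-1}(0)\subseteq g^{-1}(0)$ gives $\phi>0$.
\item Subanalyticity is used only through polynomial boundedness in one variable.
\end{itemize}
That last point is exactly what fails for $f(x)=e^{-1/x^2}$ (with $f(0)=0$) and $g(x)=x$ on $[-1,1]$, where no such $C,N$ exist. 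Using the superlevel sets $\{\abs{g}\geq s\}$ rather than the level sets $\{\abs{g}=s\}$ is a good choice. It makes $\phi$ monotone, so the range $[s_0,M]$ is handled for free and empty level sets never arise.

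\textbf{Minor points.}
\begin{itemize}
\item Assuming $K$ is subanalytic costs nothing. The paper's definition of a subanalytic function already presupposes a subanalytic domain. In any case, $K$ is the image of the compact subanalytic set $\graph(f)$ under a proper projection.
\item In Step~2, the condition $\abs{g(x)}\geq s$ is not literally a pullback of $\graph(g)$ along a coordinate map. Introduce auxiliary coordinates $z,w$ with $(x,z)\in\graph(g)$, $(x,w)\in\graph(f)$, $\abs{z}\geq s$, $\abs{w}=y$, and project out $(x,z,w)$. This projection is still proper because continuous functions on a compact set have compact graphs, so your argument goes through unchanged.
\item In Step~3, $\alpha\geq 0$ follows from $\phi\leq\max_K\abs{f}$, though your bound does not actually need it.
\end{itemize}
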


\section{Unconstrained Safety}
\label{sec: unconstrained safety}
This section first establishes completeness and decidability results for safety problems without the presence of domain constraints, which are later generalized in Section \ref{sec: constrained safety} to allow for domain constraints. 
\begin{definition}[Conventions]
    The following standard conventions/results will be used throughout the article.
    \begin{itemize}
        \item $\folr$ denotes the set of all first-order formulas of real arithmetic without parameters. 
        \item A formula $S \in \folr$ is also identified with the set it defines (i.e. $\eval{S} = \{x \in \R^n ~\vert~ \R \models S(x)\}$).
        \item By standard results in real algebraic geometry \cite{Bochnak_Coste_Roy_2013}, it is well-known that every open (closed) semialgebraic set $S$ can be defined only using the atomic relation $>$ ($\geq$) in the following form (likewise for $\geq$)
        \[S \equiv \bigwedge_i \bigvee_j e_{i, j} > 0\]
        This article therefore assumes that definitions of open (closed) semialgebraic sets are in this form.
        \item For a formula $S \in \folr$, $S^o$ denotes a formula defining its topological interior (hence only uses $>$), and $\overline{S}$ a formula defining its topological closure (hence only uses $\geq$). 
        \item $\Q[x]$ denotes the ring of (multivariate) polynomials with rational coefficients in the variables $x = (x_1, \dots, x_n)$. 
        \item For a term $e(x)$ and a vector field $x' = p(x)$ with $p \in \Q[x]$, the $n$-th Lie derivative of $e(x)$ is denoted $\lied[n]{\genDE{x}}{e}$, defined inductively using the gradient $\nabla$ via
        \begin{align*}
            &\lied[0]{\genDE{x}}{e} = e\\
            &\lied[n + 1]{\genDE{x}}{e} = \nabla (\lied[n]{\genDE{x}}{e}) \cdot p(x)
        \end{align*}
        \item $d(x, y)$ always denotes the standard Euclidean distance $\norm{x - y}$ between $x, y \in \R^n$, for $A, B \subseteq \R^n$ we define $d(A, B) \coloneqq \inf_{x \in A, y \in B} d(x, y)$. This article uses the convention that the distance to an empty set is $\infty$, i.e. $d(x, \emptyset) = \infty$ for all $x$. 
    \end{itemize}
\end{definition}

\begin{definition}[Reachable Set]
    For an $n$-dimensional ODE $x' = p(x)$, with initial condition $x(0) = x_0$, $p(x)$ continuous and time $T \in \R^+$, the corresponding reachable set at time $T$ is denoted by $R^T_{p}(x_0)$ where $y \in R^T_{p}(x_0)$ if and only if there exists $\phi \in C^1([0, T], \R^n)$ such that
    \begin{itemize}
        \item $\phi(0) = x_0$.
        \item $\phi(T) = y$.
        \item $\phi'(t) = p(\phi(t))$ for all $t \in [0, T]$. 
    \end{itemize}
    $R^{[t_1, t_2]}_{p}(x_0)$ denotes $\bigcup_{t \in [t_1, t_2]} R^{t}_{p}(x_0)$ and $R^T_{p}(A)$ denotes $\bigcup_{x_0 \in A} R^T_{p}(x_0)$ for $A \subseteq \R^n$.
\end{definition}

Note that by standard results on the uniqueness of solutions to ODEs it follows that $R^T_{p}(x_0)$ contains at most one element. 

\begin{definition}[Unconstrained Safety Problem]
    A \emph{(bounded, unconstrained) polynomial safety problem} is a tuple 
    \[(p, I, S, T) \in \Q[x] \times \folr \times \folr \times \Q^{+}\]
    where $p(x)$ represents the vector field of the ODE $x' = p(x)$, $I(x)$ defines the (bounded, semialgebraic) initial conditions, $S(x)$ defines the (bounded, semialgebraic) set of safe states and $T$ denotes the bounded time horizon $[0, T]$. Throughout this article, safety problems are always assumed to be bounded. 
\end{definition}

The following definition defines what it means for a safety problem to be (robustly) safe. Intuitively, a safety problem is \emph{robustly valid} if all flows progress into the \emph{interior} of the safety set (Figure \ref{fig: robust safe sim}). 

\begin{definition}[Robust Safety]
    \label{def: robust safety}
    Let $(p, I, S, T)$ be a (bounded, unconstrained) safety problem. It is said to be:
    \begin{itemize}
        \item Safe if $R^{[0, T]}_{p}(I) \subseteq S$.
        \item Robustly safe if $I \subseteq S$ and $R^{(0, T]}_{p}(\overline{I}) \subseteq S^o$, where $\overline{I}$ denotes the topological closure of $I$ and $S^o$ denotes the topological interior of $S$. 
    \end{itemize}
\end{definition}

\begin{figure}
    \centering
    \includegraphics[width=0.65\linewidth]{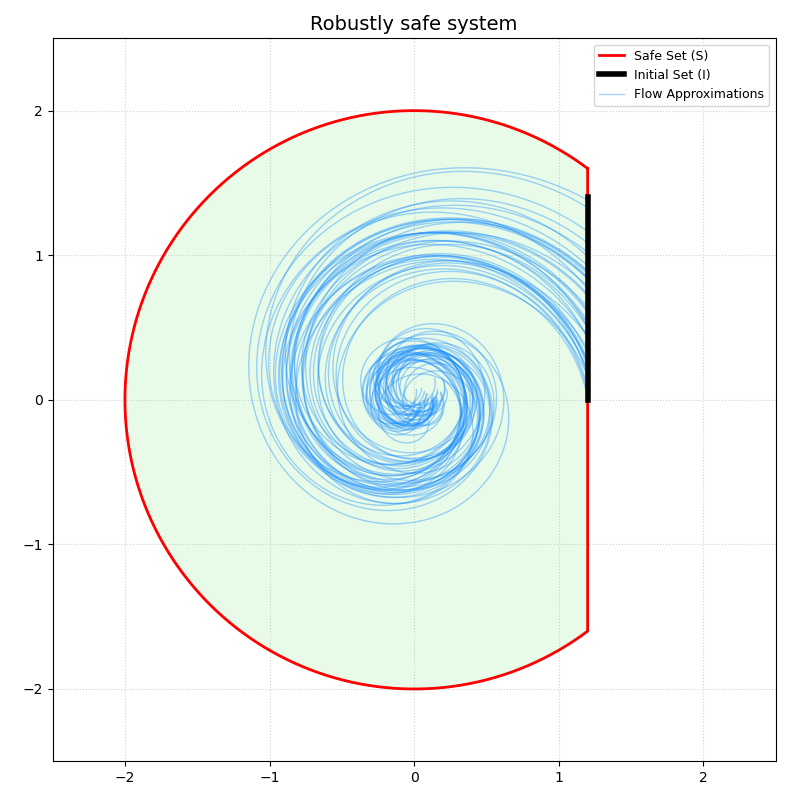}
    \captionof{figure}{Example of robust safety - flows progress into interior of safety set}
    \label{fig: robust safe sim}
\end{figure}

Importantly, note that the closure of the initial conditions $\overline{I}$ is \emph{not required} to be contained in the interior of the safety set $S^o$ at $t = 0$. This subtle yet crucial difference implies that it is possible for a safety problem to be robustly safe while satisfying $d(I, S^c) = 0$ (i.e. $\overline{I} \cap \overline{S^c} \neq \emptyset$), there need not be a (positive) separation between the initial condition $I$ and the unsafe set $S^c$. 

Since the reachable set $R^{[0, T]}_p(\overline{I})$ is computable for polynomial ODEs, semi-decidability/provability of $R^{[s, T]}_p(\overline{I}) \subseteq S^o$ for any rational $s \in (0, T)$ essentially follows from earlier works on compact initial-value problems \cite{DBLP:conf/csl/BlancB24,DBLP:journals/jacm/PlatzerQ25}. The main difficulty lies in proving/deciding the safety requirement $R^{(0, s]}_p(\overline{I}) \subseteq S^o$ for some initial segment $(0, s]$. Such properties are fundamentally challenging to handle using (Type-Two) computability-theoretic tools only, since computation with real-numbers is inexact, and any inexact approximation of $I$ could lead to $I \cap S^c \neq \emptyset$ thereby rendering the problem unsafe due to its approximation error. Consequently, we will establish such safety properties by leveraging the symbolic, qualitative properties of safety problems. The following definition is of central importance. 

\begin{definition}[Local Progression]
    Given an ODE $x' = p(x)$ and a set $S \subseteq \R^n$, $x_0$ \emph{locally progresses into $S$} \cite{DBLP:conf/emsoft/LiuZZ11, DBLP:journals/jacm/PlatzerT20} if there exists $s > 0$ such that 
    \[R^{(0, s]}_{p}(x_0) \subseteq S\]
    Note that $x_0 \in S$ is not required. Similarly, for a set of initial conditions $I \subseteq \R^n$, $I$ locally progresses into $S$ if there exists (a single) $s > 0$ such that 
    \[R^{(0, s]}_{p}(I) \subseteq S\]
\end{definition}

\begin{remark}
    Note that local progression for a set $I$ is \emph{not equivalent} to local progression of every point in $I$, since the duration of progression cannot be made uniform in general. Perhaps surprisingly, this fails to hold even when $I$ is compact as illustrated by the following example, as the maximal duration of progression $I \ni x_0 \mapsto \sup\{t > 0 | R^{(0, t]}_{p}(x_0) \subseteq S\}$ is not lower semicontinuous. 
\end{remark}

\begin{example}[Non-uniform local progression]
    This example shows that the local progression of every point in a set is in general \emph{strictly weaker} than the local progression of the entire set. I.e. there exist compact sets $I, S$ and ODE $x' = p(x)$ such that the following hold
    \begin{itemize}
        \item Every point in $I$ locally progresses into $S$:
         \[\forall x_0\in I~\exists s > 0~R^{(0, s]}_{p}(x_0) \subseteq S\]
        \item $I$ does not locally progress into $S$:
        \[\forall s > 0~\exists x_0 \in I~R^{(0, s]}_{p}(x_0) \nsubseteq S\]
    \end{itemize}
    The construction is as follows (on $\R^2$):
    \begin{align*}
        I(x, y) &\equiv 0 \leq x \leq 1 \land y = 0\\
        S(x, y) &\equiv (0 \leq x \leq 1 \land 0 \leq y \leq x) \lor  (-1 \leq x \leq 0 \land -1 \leq y \leq 0)\\
        (x', y') &= (-1, x)
    \end{align*}
    It can be observed that as $x_0 \to 0$, the time it takes for the corresponding flow $\phi((x_0, 0), t)$ to exit the set $S$ approaches $0$. However, the exit time at the origin $(0, 0)$ is exactly $1$, hence every point in $I$ locally progresses into $S$, but the set $I$ \emph{does not locally progress} into $S$, as no uniform lower bound on the exit times exists (Figure \ref{fig: non_uni_prog}).
    \begin{figure}
        \centering
        \begin{subfigure}[b]{0.6\textwidth}
            \centering
            \includegraphics[width=\textwidth]{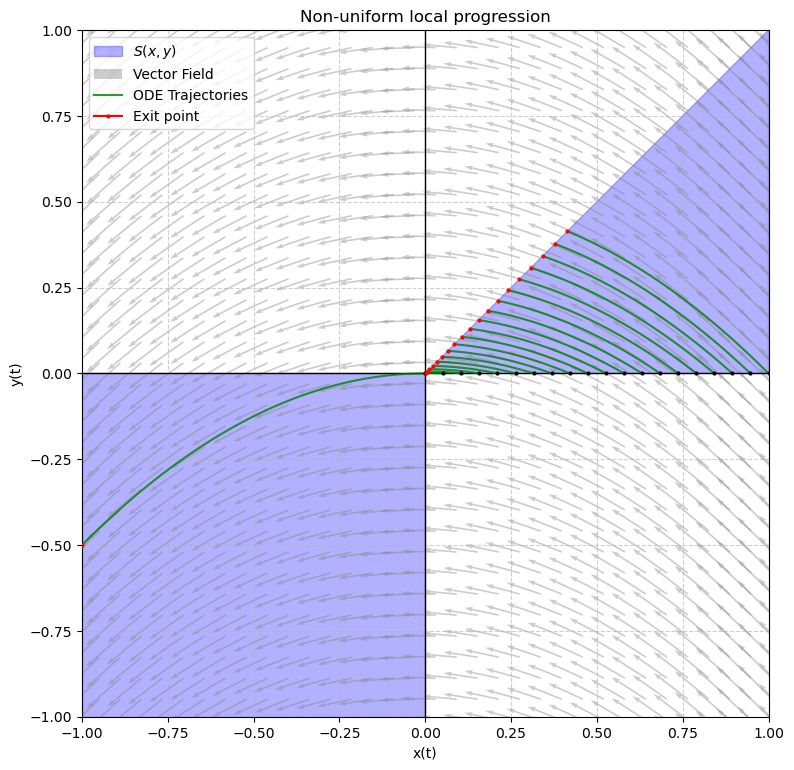}
        \end{subfigure}
        \caption{Non-uniform local progression.}
        \label{fig: non_uni_prog}
    \end{figure}
\end{example}

The notion of local progression for points has been previously studied \cite{DBLP:conf/emsoft/LiuZZ11,DBLP:journals/jacm/PlatzerT20} to prove that global (semialgebraic) differential invariants are decidable/complete for polynomial ODEs, whereas local progression for sets remains unexplored. Interestingly, this article shows that local progression for sets provides key insight into establishing decidability/completeness properties for robust safety. The following definition provides a quantitative measure of the ``degree of truth'' of a $\folr$ formula. 

\begin{definition}[Degree of Truth \cite{DBLP:journals/jsc/Ratschan02}]
    Let $\phi(\vec{x}) \in \folr$ be formula with $n$ free variables, its \emph{degree of truth function} $\theta_\phi : \R^n \to \R$ is defined as follows (note that $\theta_\phi$ is a constant when $n = 0$):
    \begin{itemize}
        \item By quantifier elimination, assume without loss of generality that $\phi(\vec{x})$ contains no quantifiers and is in conjunctive normal form:
        \[\phi(\vec{x}) \equiv \bigwedge_{i}\bigvee_{j} e_{i, j} \succeq 0\]
        where ${\succeq} \in \rels$. 
        \item Define $\theta_\phi(\vec{x})$ as:
        \[\theta_\phi(\vec{x}) = \min_{i}\{\max_{j}\{e_{i, j}(\vec{x})\}\}\]
    \end{itemize}
\end{definition}

The following observation is immediate by construction.
\begin{proposition}
    \label{prop: degree of truth characterization}
    Let $\phi(\vec{x}) \in \folr$ be a formula and $\theta_\phi$ its truth function. Then for all $\vec{z} \in \R^n$, the following hold
    \begin{itemize}
        \item $\theta_\phi(\vec{z}) > 0 \implies \vec{z} \in \eval{\phi}$. In particular, if $\phi$ is of the form
        \[\phi = \bigwedge_{i}\bigvee_j e_{i, j} > 0\]
        then $\theta_\phi(\vec{z}) > 0 \iff \vec{z} \in \eval{\phi}$. 
        \item $\theta_\phi(\vec{z}) < 0 \implies \vec{z} \notin \eval{\phi}$. In particular, if $\phi$ is of the form
        \[\phi = \bigwedge_{i}\bigvee_j e_{i, j} \geq 0\]
        then $\theta_\phi(\vec{z}) \geq 0 \iff \vec{z} \in \eval{\phi}$. 
    \end{itemize}
\end{proposition}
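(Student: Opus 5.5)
The plan is to unfold the definition of $\theta_\phi$ and reason clause by clause. Write $\phi \equiv \bigwedge_i \bigvee_j e_{i,j} \succeq_{i,j} 0$ with each $\succeq_{i,j} \in \rels$. Then $\theta_\phi(\vec z) = \min_i \max_j e_{i,j}(\vec z)$, and we use two elementary facts about finite min/max. A minimum is positive (respectively nonnegative) iff every entry is. A maximum is positive (respectively nonnegative) iff some entry is.

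For the first item, assume $\theta_\phi(\vec z) > 0$. Then for every $i$ we have $\max_j e_{i,j}(\vec z) > 0$, so there is some $j$ with $e_{i,j}(\vec z) > 0$. Strict positivity satisfies the atom $e_{i,j} \succeq 0$ whether $\succeq$ is $>$ or $\geq$. Hence every conjunct holds and $\vec z \in \eval{\phi}$. When all relations are $>$, the converse goes the same way. If $\vec z \in \eval{\phi}$, each conjunct $i$ has some $j$ with $e_{i,j}(\vec z) > 0$. So each inner max is positive, and the min of finitely many positive numbers is positive.

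For the second item, assume $\theta_\phi(\vec z) < 0$. Then some $i$ has $\max_j e_{i,j}(\vec z) < 0$, so $e_{i,j}(\vec z) < 0$ for all $j$. This falsifies every atom of that clause whether it uses $>$ or $\geq$. That conjunct fails, so $\vec z \notin \eval{\phi}$. When all relations are $\geq$, the equivalence follows from the same bookkeeping with non-strict inequalities. We have $\theta_\phi(\vec z) \geq 0$ iff for all $i$ some $j$ satisfies $e_{i,j}(\vec z) \geq 0$, which is exactly satisfaction of $\phi$.

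There is no genuine obstacle here; the only point needing care is that $\theta_\phi$ is defined relative to a chosen quantifier-free CNF of $\phi$. The proposition should therefore be read for that fixed CNF representation. This is legitimate because quantifier elimination and CNF conversion preserve $\eval{\phi}$, and the "in particular" cases refer to the conventional $>$/$\geq$-only forms fixed in the Conventions.
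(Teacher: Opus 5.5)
Your proof is correct and is the same direct unfolding of the min/max definition of $\theta_\phi$ that the paper has in mind; the paper gives no written proof and just calls the observation immediate by construction. Your remark that $\theta_\phi$ is relative to the fixed CNF representation is a worthwhile clarification that the paper leaves implicit.
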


The following theorem provides an important characterization for local progression of compact sets into open sets. 

\begin{theorem}[Characterization of Compact-Open Local Progression]
        \label{thm: compact progression characterization}
    Let $p \in \Q[x]$ be a polynomial vector field, $K, S \in \folr$ be semialgebraic sets with $K$ compact and $S$ open.  Then the following are equivalent. 
    \begin{enumerate}
        \item  $K$ locally progresses into $S$ under $x' = p(x)$.
        \item There exists $s > 0$ such that the following hold:
            \begin{itemize}
                \item The flow $\phi(x, t)$ of $x' = p(x)$ is well defined on $K \times [0, s]$ (i.e. does not blow-up). 
                \item There exists some $k \in \N$ such that for all $(x_0, t) \in K \times [0, s]$, $\theta_S(\phi(x_0, t)) \geq t^k$, where $\theta_S$ is the degree of truth function of $S$. 
            \end{itemize}
        \item There exists $s > 0$ such that the following hold:
            \begin{itemize}
                \item The flow $\phi(x, t)$ of $x' = p(x)$ is well defined on $K \times [0, s]$ (i.e. does not blow-up). 
                \item There exists some $k \in \N$ such that for all $(x_0, t) \in K \times [0, s]$, $d(\phi(x_0, t), S^c) \geq t^k$, where $d(\cdot, \cdot)$ is the standard (Euclidean) distance function. 
            \end{itemize}
    \end{enumerate}
\end{theorem}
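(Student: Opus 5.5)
We prove the cycle of implications (2) $\Rightarrow$ (1), (3) $\Rightarrow$ (1), and (1) $\Rightarrow$ (2), (1) $\Rightarrow$ (3).

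\textbf{The easy directions.} These follow directly from the definitions. If $\theta_S(\phi(x_0,t)) \geq t^k > 0$ for $t \in (0,s]$, then $\phi(x_0,t) \in S$ by Proposition \ref{prop: degree of truth characterization}, since $S$ is written with strict inequalities only. Likewise, $d(\phi(x_0,t), S^c) \geq t^k > 0$ forces $\phi(x_0,t) \notin S^c$. In both cases $R^{(0,s]}_p(K) \subseteq S$, which is (1).

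\textbf{Setup for (1) $\Rightarrow$ (2),(3).} Assume $R^{(0,s_0]}_p(K) \subseteq S$ for some $s_0>0$.
\begin{itemize}
\item Every point of $K$ has a solution on $(0,s_0]$, hence on $[0,s_0]$, so the flow does not blow up there.
\item Choose $s \le s_0$ with $s<1$. Because $p$ is polynomial, the flow $\phi$ is real analytic on an open neighbourhood $U$ of the compact set $K \times [0,s]$. Here we use compactness of $K\times[0,s]$ and openness of the maximal domain of the flow.
\item The set $K \times [0,s]$ is compact and semialgebraic.
\end{itemize}

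\textbf{The two auxiliary functions.}
\begin{itemize}
\item For (2), set $f(x,t) = \theta_S(\phi(x,t))$. This is a min--max of the analytic functions $e_{i,j}\circ\phi$, so it is subanalytic by Lemma \ref{lemma: pointwise min subanalytic}.
\item For (3), set $f(x,t) = d(\phi(x,t), S^c)$. This is subanalytic by Lemma \ref{lemma: distance function is subanalytic}; if $S^c=\emptyset$ the claim is trivial.
\end{itemize}
Both functions are continuous. Their zero sets in $K\times[0,s]$ lie in $\{t=0\}$: for $t\in(0,s]$ we have $\phi(x,t) \in S$, and $S$ is open, so $\theta_S>0$ and $d(\cdot,S^c)>0$ there.

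\textbf{Applying \L{}ojasiewicz.} Take $g(x,t)=t$, so $f^{-1}(0)\subseteq g^{-1}(0)$. However, $f$ may be negative at $t=0$, because $K\not\subseteq S$ is allowed. In that case $\theta_S<0$ at such points, and the inequality $|f|\ge C|g|^N$ says nothing about the sign of $f$.

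The remedy is to apply Theorem \ref{thm: subanalytic inequality} to $\max(f,0)$ instead. This function is still continuous and subanalytic: its graph is a Boolean combination of $\graph(f)$ with half-spaces, exactly as in Lemma \ref{lemma: pointwise min subanalytic}. For $t>0$ it coincides with $f$, since $f>0$ there. Its zero set is contained in $\{t=0\}$ as well.

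Theorem \ref{thm: subanalytic inequality} then gives $C,N>0$ with $f(x,t) \ge C t^N$ for all $t\in(0,s]$. At $t=0$ the bound $t^k=0$ must also hold. For $d$ this is automatic since $d\ge0$. For $\theta_S$ it is not, so we require the estimate only on $t\in(0,s]$; the intended statement presumably excludes $t=0$, or else assumes $K\subseteq \overline{S}$ under the usual convention.

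\textbf{Removing the constant $C$.} Pick an integer $k > N$. Then $t^k \le C t^N$ iff $t^{k-N}\le C$, which holds for $t \le C^{1/(k-N)}$. Shrinking $s$ to $\min(s, C^{1/(k-N)})$ yields $f \ge t^k$ on $(0,s]$.

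\textbf{Main obstacle.} The key step is making the \L{}ojasiewicz application legitimate. This requires two things: analyticity of the flow on a neighbourhood of the compact set $K\times[0,s]$, and correct handling of the sign of $f$ at $t=0$ via the truncation $\max(f,0)$.
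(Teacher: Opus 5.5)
Your overall route is the paper's: subanalyticity of $\theta_S\circ\phi$ and $d(\phi,S^c)$ via the two lemmas, then \L{}ojasiewicz with $g(x,t)=t$ on the compact set $K\times[0,s]$, then absorbing the constant. However, the proposal does not prove (2) as stated. You stop at $t\in(0,s]$ and suggest the statement may be wrong at $t=0$, but it is not. Condition (1) forces $K\subseteq\overline{S}$, because each $x_0\in K$ is the limit of $\phi(x_0,t)\in S$ as $t\to0^+$. Since $\theta_S$ is continuous and positive on $S$, it is $\geq 0$ on $\overline{S}$; on $K\setminus S$ it equals $0$, not $<0$ as you claim. More directly, once you have $f(x_0,t)\ge Ct^N$ for $t\in(0,s]$, continuity gives $f(x_0,0)=\lim_{t\to0^+}f(x_0,t)\ge 0=0^k$ for $k\ge1$. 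So $f\ge0$ on all of $K\times[0,s]$, which is exactly the paper's remark that $\theta_S(\phi(x_0,t))\ge0$ ``by construction''. Your truncation $\max(f,0)$ therefore coincides with $f$, and the bound holds including $t=0$.

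Second, your no-blow-up step is unjustified. With the paper's definition of $R^t_p$, the reachable set is empty past the blow-up time, so $R^{(0,s_0]}_p(K)\subseteq S$ can hold vacuously even when some solution from $K$ does not survive to $s_0$. For example, take $x'=x^2$, $K=\{1\}$, $S=\{x>0\}$, $s_0=2$: the solution $1/(1-t)$ blows up at $t=1$, yet the inclusion holds. You must first shrink $s$ below $\min_{x\in K}t^+(x)$. This minimum is positive because $t^+$ is lower semicontinuous and $K$ is compact; equivalently, use openness of the flow's domain plus the tube lemma, which you invoke only for analyticity. The paper does exactly this before applying \L{}ojasiewicz. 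With these two repairs your argument is complete.
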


\begin{proof}
    The implications $(3), (2) \implies (1)$ are immediate from Proposition \ref{prop: degree of truth characterization}, so it suffices to establish $(1) \implies (2), (3)$. We first prove $(1) \implies (2)$. Since the right interval of maximal existence is lower semi-continuous \cite[Theorem~3.2]{Hartman_2002} in the space variables, we may assume without loss of generality that the flow $\phi$ of $x' = p(x)$ is well-defined on $K \times [0, s]$ for some sufficiently small $s > 0$. Furthermore, since $p(x)$ is analytic, the flow $\phi$ is also analytic (for sufficiently small $s$) on some neighborhood of $K \times [0, s]$, consequently $\theta_S(\phi(x_0, t)) : K \times [0, s] \to \R$ is subanalytic by Lemma \ref{lemma: pointwise min subanalytic} since $\theta_S$ is the pointwise min-max of analytic functions and $K \times [0, s]$ is compact semialgebraic. Finally, further pick $s > 0$ sufficiently small such that $R^{(0, s]}_p(K) \subseteq S$ which is possible by (1). This construction of $s$ implies that for any $(x_0, t) \in K \times [0, s]$, $\theta_S$ evaluates to $0$ only when $t = 0$:
    \[\theta_S(\phi(x_0, t)) = 0 \implies \phi(x_0, t) \notin S \implies t = 0\]
    Since $K \times [0, s]$ is compact, \L{}ojasiewicz's inequality (Theorem \ref{thm: subanalytic inequality}) implies that there exists $C, N > 0$ such that 
    \[\abs{\theta_S(\phi(x_0, t))} \geq Ct^N\]
    for all $(x_0, t) \in K \times [0, s]$. By increasing $N$ if necessary, and noting that $\theta_S(\phi(x_0, t)) \geq 0$ by construction, it follows that $\theta_S(\phi(x_0, t)) \geq t^N$ holds for all $(x_0, t) \in K \times [0, s]$, as desired. The proof of $(1) \implies (3)$ is identical by noting that the function $(x_0, t) \mapsto d(\phi(x_0, t), S^c)$ is subanalytic (Lemma \ref{lemma: distance function is subanalytic}) and the same argument applies. 
\end{proof}

\begin{remark}
    Theorem \ref{thm: compact progression characterization} can be viewed as a statement concerning ``quantitative progression'', in the sense that if $K$ locally progresses into $S$, then it does so at (at least) some polynomial rate $t^k$. 
\end{remark}

Local progression is also naturally definable in $\dL$, a key component in the axiomatization of robust safety. 

\begin{proposition}[$\dL$-Definability of Local Progression]
    \label{prop: LP dL def}
    Let $K, S \in \folr$ and $p(x) \in \Q[x]$ a polynomial vector field, the local progression of $K$ into $S$ is equivalent to the validity of the following \dL formula (where $t$ is a fresh variable)
    \[\lp_p(K, S) \equiv \lexists{s {>} 0} \lforall{x} (K(x) \land t = 0 \rightarrow \dbox{x' = p(x), t' = 1 \& t \leq s}{(t > 0 \rightarrow S(x))})\]
\end{proposition}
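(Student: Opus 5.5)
The plan is to unfold the semantics of the \dL formula $\lp_p(K, S)$ and match it, clause by clause, with the definition of local progression $\exists s > 0~R^{(0, s]}_p(K) \subseteq S$. The only point that needs care is how the box modality $\dbox{x' = p(x), t' = 1 \& t \leq s}{\cdot}$ treats maximal solutions that blow up before time $s$. The formula is closed, since $s$, $x$ and $t$ are all bound, so it is valid iff it is true in some (equivalently every) state. Validity therefore amounts to the existence of a real $s > 0$ such that for every $x_0 \in \eval{K}$ the state with $x = x_0$, $t = 0$ satisfies the box formula.

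\textbf{Unfolding the box.} Fix $s > 0$ and $x_0 \in \eval{K}$, and let $\Phi : [0, T_{\max}) \to \S$ be the maximal solution of $x' = p(x), t' = 1$ starting at $x = x_0, t = 0$. Since $t$ is a fresh clock with $t' = 1$ and initial value $0$, at time $\tau$ the variable $t$ equals $\tau$. Hence the domain constraint $t \leq s$ holds on $[0, \tau]$ exactly when $\tau \leq s$. The box formula then says that for all $\tau \in [0, \min(s, T_{\max}))$, if $\tau > 0$ then the $x$-component of $\Phi(\tau)$ lies in $\eval{S}$. That is, $R^{\tau}_p(x_0) \subseteq \eval{S}$ for all $0 < \tau \leq s$ with $\tau < T_{\max}$.

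\textbf{Comparing with the reachable set.} By the definition of $R^\tau_p(x_0)$ as the set of endpoints of $C^1$ solutions on $[0, \tau]$, together with uniqueness, $R^\tau_p(x_0)$ is the singleton given by the solution when $\tau < T_{\max}$, and empty when $\tau \geq T_{\max}$ because no solution exists on $[0, \tau]$. So $R^{(0, s]}_p(x_0) \subseteq \eval{S}$ is exactly the condition obtained from the box. Quantifying over $x_0 \in \eval{K}$ and then over $s > 0$ gives both directions at once: any $s$ witnessing local progression also witnesses the \dL formula, and conversely.

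\textbf{Main obstacle.} The only subtle step is the blow-up case. Local progression as defined via $R^{(0,s]}_p$ does not demand that the flow exist up to time $s$, and the box modality likewise quantifies only over existing solution prefixes, so the two agree. I will state this correspondence explicitly rather than appeal to Theorem \ref{thm: compact progression characterization}, which would add a well-definedness requirement that is unnecessary here. If the reachable-set convention were instead read as requiring existence, one could shrink $s$ using the lower semicontinuity of the existence time when $K$ is compact. For general $K$ the direct semantic match above is the intended argument.
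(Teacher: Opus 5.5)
Your proposal is correct and takes the same route as the paper, whose proof just says the equivalence follows from the semantics of $\dL$. Your explicit blow-up bookkeeping fills in what that proof leaves implicit: $R^\tau_p(x_0)$ is empty for $\tau \geq T_{\max}$, matching the box's quantification over existing solution prefixes.

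Two small slips are worth fixing. First, $t$ occurs free in the antecedent $t = 0$, so the formula is not closed, and ``true in some state'' is not equivalent to validity. States with $t \neq 0$ satisfy the formula vacuously, though, so your reduction to states with $t = 0$ still holds. Second, the range you first wrote, $[0,\min(s,T_{\max}))$, wrongly drops $\tau = s$; your subsequent condition $0 < \tau \leq s$, $\tau < T_{\max}$ is the correct one.
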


\begin{proof}
    This follows directly by the semantics of $\dL$ \cite{DBLP:journals/jar/Platzer08}.
\end{proof}

A consequence of Theorem \ref{thm: compact progression characterization} is that $\dL$ is complete for such local progression properties for compact initial conditions progressing into open sets. 

\begin{theorem}[Completeness of Compact-Open Local Progression]
    \label{thm: LP completeness}
    $\dL$ is complete for compact progression into open sets. That is, for all $K, S \in \folr$ and $p \in \Q[x]$ such that $K$ is compact and $S$ is open, $\lp_p(K, S)$ is provable if and only if it is valid.
    \[\models \lp_p(K, S) \iff\, \vdash \lp_p(K, S)\]
    Furthermore, in either (hence both) case, a corresponding $s \in \Q^+$ can be computed such that $R^{(0, s]}_p(K) \subseteq S$ (and this is further $\dL$-provable). 
\end{theorem}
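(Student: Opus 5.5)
Soundness gives $\vdash \lp_p(K,S) \Rightarrow \models \lp_p(K,S)$, so only the converse needs work. The plan is to exhibit a witness $s \in \Q^+$ and then prove the universally quantified box formula by a deductive argument whose certificate can be found by search. Since validity of $\lp_p(K,S)$ is only semi-decidable in principle, the algorithm enumerates candidate pairs $(s,k) \in \Q^+ \times \N$ and, for each, tries to close a fixed proof schema. Theorem \ref{thm: compact progression characterization} ($(1)\Rightarrow(2)$) guarantees that some pair succeeds whenever $\lp_p(K,S)$ is valid. The search terminates and outputs $s$ together with the $\dL$ proof.

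For a fixed $(s,k)$, the target property is the quantitative one from (2): $\theta_S(x) \ge t^k$ along the flow for $t\in[0,s]$, starting from $K \wedge t=0$. Recall $\theta_S = \min_i\max_j e_{i,j}$ with $S \equiv \bigwedge_i\bigvee_j e_{i,j}>0$. Because $\theta_S>0$ implies $S$ (Proposition \ref{prop: degree of truth characterization}), and $t^k>0$ for $t>0$, a monotonicity step (\irref{MbW} with \irref{qear}) reduces the goal $t>0\rightarrow S(x)$ to proving
\[K(x)\wedge t=0 \rightarrow \dbox{x'=p(x),t'=1 \& t\le s}{\textstyle\bigwedge_i\bigvee_j e_{i,j}\ge t^k}.\]
This is not yet closed under the flow symbolically, so I would not try to prove it directly as an invariant. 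Instead I would use the robust slack: by (2) with a larger exponent, $\theta_S(\phi(x_0,t)) \ge t^{k}$ also gives $\ge 2t^{k+1}$ for small $t$. So one can strengthen to a target with strict margin $\theta_S(x) - t^{k+1} > 0$ on $(0,s]$, which is an open property holding with positive separation away from $t=0$.

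The main obstacle is the boundary at $t=0$, where the margin vanishes and numerical methods fail. Here I would split $[0,s]$ into $[0,s_0]$ and $[s_0,s]$. On $[s_0,s]$ the set reached from compact $K$ is compact, and it sits at positive distance from the closed complement $\{\theta_S \le t^{k+1}\}$. There I would invoke the earlier completeness of safety for compact IVPs with open postconditions \cite{DBLP:journals/jacm/PlatzerQ25}, whose proof is effective.

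On $[0,s_0]$ I would use a Taylor expansion in $t$ with explicit remainder. Each $e_{i,j}(\phi(x_0,t)) = \sum_{m\le M}\lied[m]{\genDE{x}}{e_{i,j}}(x_0)\,t^m/m! + R$, with $|R|\le C t^{M+1}$ uniformly on $K$. The bound on $C$ comes from a computable bound on higher Lie derivatives over a compact neighborhood reached by time $s_0$. That neighborhood is obtained from a $\dL$-provable bounding box (a differential invariant with polynomial growth bound, provable by the completeness for differential invariants \cite{DBLP:journals/jacm/PlatzerT20}). The remainder bound is proved in $\dL$ by iterated differential inequalities (\irref{dI}/\irref{dC} chains on $\lied[m]{\genDE{x}}{e_{i,j}}$). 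The Taylor polynomial part reduces the claim to a pure $\folr$ statement about $x_0\in K$ and $t\in[0,s_0]$:
\[\textstyle\bigwedge_i\bigvee_j \sum_{m\le M}\lied[m]{\genDE{x}}{e_{i,j}}(x_0)\tfrac{t^m}{m!} - Ct^{M+1} \ge t^{k+1}.\]
This is decidable by \irref{qear}. It holds for suitable $M>k+1$ and small enough rational $s_0$, by comparing with the $t^k$ lower bound from Theorem \ref{thm: compact progression characterization} applied to the true flow.

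Verifying that last existence claim, namely that finitely many Lie derivatives uniformly capture the $t^{k}$ growth on all of $K$, is where the \L{}ojasiewicz exponent is essential. It is the step I expect to be most delicate. Effectiveness again comes from searching over $(s_0,M,C)$. The two pieces are glued with a differential cut at $t=s_0$, and the resulting $s$ is the computed rational witness.
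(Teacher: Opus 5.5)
Your proposal is correct and is essentially the paper's proof. It uses the \L{}ojasiewicz rate from Theorem~\ref{thm: compact progression characterization}, Taylor approximants whose remainder is $\dL$-provable (Theorem~\ref{thm: provable taylor}) on a provably bounded region, and a pure $\folr$ inequality on $K \times [0,s_0]$ closed by real arithmetic, all found by an effective search; the paper bounds the region via completeness for open safety of compact IVPs rather than a growth invariant and splits $\bigwedge_i\bigvee_j$ with the box--conjunction axiom, but these differences are cosmetic. The one real difference is your extra numerical segment $[s_0,s]$ and the gluing at $t=s_0$, which are unnecessary: local progression only needs some rational witness, so you can output $s_0$ directly, exactly as the paper outputs its $\tau$.
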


Intuitively, the proof of Theorem \ref{thm: LP completeness} follows from applying the equivalence $(1) \iff (2)$ in Theorem \ref{thm: compact progression characterization} and approximating the degree of truth function $\theta_S$ via Taylor polynomials. The following theorem establishing the provability of Taylor's Theorem in $\dL$ will be useful to complete the proof. 

\begin{theorem}[Provable Taylor bounds]
    \label{thm: provable taylor}
    Let $e(x)$ be a term, $Q(x, t) \in \folr$, $n \in \N$ and $T_n(x_0, t)$ denote the $n$-th Taylor approximant of $e$, i.e.
    \[T_n(x_0, t) = \sum_{i = 0}^n \frac{\lied[i]{\genDE{x}}{e}(x_0)}{i!}t^{i}\]
    Then the following axiom is derivable for all $n \in \N$.

    \begin{calculus}
        \cinferenceRule[taylor|TA]
        {Taylor approximants}
        {
         \linferenceRule[impll2]
         {x = x_0 \land t = 0 \land \dbox{x' = p(x), t' = 1 \& Q(x, t)}{\abs{\lied[n + 1]{\genDE{x}}{e}(x)}\leq M } }
         {\dbox{x' = p(x), t' = 1 \& Q(x, t)}{\abs{T_n(x_0, t) - e(x)} \leq \frac{Mt^{n + 1}}{(n + 1)!}}}
        }{}
    \end{calculus}
\end{theorem}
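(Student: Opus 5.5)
The plan is to derive \irref{taylor} by a chain of differential cuts, each closed by a differential invariant on a nonstrict polynomial inequality, working downward through the Lie derivatives from order $n+1$ to order $0$. Throughout, $x_0$ is treated as a constant: it is not bound by $x' = p(x), t' = 1$, so it may occur freely in invariants. For $0 \leq k \leq n$ and $m = n+1-k$, set
\[f_k(x_0, x, t) \coloneqq \sum_{i=0}^{n-k} \frac{\lied[k+i]{\genDE{x}}{e}(x_0)}{i!}t^{i} - \lied[k]{\genDE{x}}{e}(x).\]
This is the Taylor remainder of $\lied[k]{\genDE{x}}{e}$ at order $n-k$, and $f_0 = T_n(x_0,t) - e(x)$. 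For $k = n+1$ we set $f_{n+1} \coloneqq -\lied[n+1]{\genDE{x}}{e}(x)$.

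The key computation is syntactic and routine. Since $x_0$ is constant and $t' = 1$, the Lie derivative of $f_k$ along $x' = p(x), t' = 1$ is exactly $f_{k+1}$. Differentiating the polynomial in $t$ shifts the Taylor sum down by one index, and the subtracted term $\lied[k]{\genDE{x}}{e}(x)$ becomes $\lied[k+1]{\genDE{x}}{e}(x)$. Moreover, under $x = x_0 \land t = 0$ every $f_k$ with $k \leq n$ evaluates to $\lied[k]{\genDE{x}}{e}(x_0) - \lied[k]{\genDE{x}}{e}(x_0) = 0$.

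The derivation proceeds in the following steps.
\begin{enumerate}
\item Use the assumption $\dbox{x' = p(x), t' = 1 \& Q}{\abs{\lied[n+1]{\genDE{x}}{e}(x)} \leq M}$ in the antecedent to differentially cut $\abs{f_{n+1}} \leq M$ into the domain constraint. This uses the \irref{dC} axiom together with monotonicity/K, keeping the box formula in the context.
\item For $k = n, n-1, \dots, 0$ in turn, differentially cut the two inequalities
\[\frac{Mt^{m}}{m!} - f_k \geq 0 \quad\text{and}\quad \frac{Mt^{m}}{m!} + f_k \geq 0, \qquad m = n+1-k.\]
Each is proved by \irref{dI} for $\geq$. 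The initial condition holds because $t = 0$ and $f_k = 0$, using $x = x_0 \land t = 0$ in the antecedent and the \irref{qear} rule. The inductive condition is the Lie derivative $\frac{Mt^{m-1}}{(m-1)!} \mp f_{k+1} \geq 0$, which is exactly the bound already cut into the domain at the previous stage (for $k = n$ it is $M \mp f_{n+1} \geq 0$).
\item After the stage $k = 0$, the domain contains $\abs{T_n(x_0,t) - e(x)} \leq \frac{Mt^{n+1}}{(n+1)!}$, and the postcondition follows by \irref{dW} and \irref{qear}.
\end{enumerate}

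The step I expect to need the most care is bookkeeping rather than mathematics. First, \irref{dI} must be applied only after the earlier bounds have been cut into the domain, so that its premise (the derivative condition under the domain $Q \land \dots$) is provable purely by real arithmetic. Second, the initial-state facts $x = x_0 \land t = 0$ must remain available as constant context, since $x_0$ is not bound by the ODE. The sign conventions in the Lie-derivative identity $\lied{}{f_k} = f_{k+1}$ (including the edge case $k = n$, where the Taylor sum has only its constant term and $f_{n+1} = -\lied[n+1]{\genDE{x}}{e}$) should be checked explicitly once. Beyond that, the argument is a uniform schema in $n$, so derivability holds for every $n \in \N$.
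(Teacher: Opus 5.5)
Your proposal is correct and is essentially the paper's argument. Both use iterated differential induction on the Taylor remainder and its successive Lie derivatives $f_k$ (the paper's $R^{(k)}$), each bounded by $\frac{Mt^{n+1-k}}{(n+1-k)!}$ and bottoming out at the assumed bound on $\lied[n+1]{\genDE{x}}{e}$; the only difference is bookkeeping. The paper chains \irref{DI} steps top-down with box premises until the assumption closes the last one, whereas you cut the bounds into the domain bottom-up with \irref{dC} and make explicit the split of each absolute-value bound into two nonstrict polynomial inequalities, which the paper leaves implicit.
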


The proof of Theorem \ref{thm: provable taylor} is provided in Appendix \ref{sec: provable taylor approximation}. Note that the bound $M$ is a symbolic variable rather than a fixed rational constant. 

\begin{proof}[Proof of Theorem \ref{thm: LP completeness}]
    The converse implication follows from the soundness of $\dL$ \cite{DBLP:journals/jar/Platzer08}, hence it suffices to establish the forward implication. Assume that $\lp_p(K, S)$ is valid, Theorem \ref{thm: compact progression characterization} then implies the existence of some $N \in \N$ and $s \in \Q^+$ such that 
    \[E(x_0, t) \coloneq \theta_S(\phi(x_0, t)) \geq t^N\]
    holds for $(x_0, t) \in K \times [0, s]$ and the flow $\phi$ is well-defined on $K \times [0, s]$, further choose $R \in \Q^+$ large enough such that $\phi(K, [0, s]) \subset B(0, R)$. Since $S \in \folr$ defines an open set, recall that it is (without loss of generality) of the form 
    \[S = \bigwedge_i \bigvee_j e_{i, j} > 0\]
    By defining $S_i = \bigvee_j e_{i, j} > 0$ and noting that box modalities commute with conjunctions via axiom \irref{band}, it follows that conjunctions provably commute with local progression, i.e. the following is provable
    \[\vdash_{\dL} \lp_p(K, S) \leftrightarrow \bigwedge_i \lp_p(K, S_i)\]
    Thus, it suffices to establish completeness for each of the $S_i$'s, and we may assume without loss of generality that $S$ is of the form 
    \[S = \bigvee_{j} e_j > 0\]
    For $k \in \N$, denote by $E_k$ the corresponding function defined with the $k$-th order Taylor approximations, i.e.
    \[E_k(x_0, t) \equiv \max_{j} T^j_k(x_0, t)\]
    where $T^j_k$ is the $k$-th order Taylor approximation of $e_j(x_0, t)$ as defined in Theorem \ref{thm: provable taylor}. Now carry out an a priori unbounded search for a pair $(k, \tau) \in \N \times \Q^+$ such that the following conditions hold:
    \begin{enumerate}
        \item $E_k(x_0, t) \geq t^{k}$ for all $(x_0, t) \in K \times [0, \tau]$.
        \item $\frac{M}{(k + 1)!}\tau < \frac{1}{2}$ and $\phi(K, [0, \tau]) \subset B(0, R)$ where $M \coloneq \max_j \abs{\lied[k + 1]{\genDE{x}}{e_j}(x)}_{x \in B[0, R]}$.\\
        Note that this implies $\abs{E_k(x_0, t) - E(x_0, t)} \leq \frac{t^{k}}{2}$ for $(x_0, t) \in K \times [0, \tau]$. 
    \end{enumerate}
    It follows from Theorem \ref{thm: compact progression characterization} that this search is bounded, since taking $k \geq N + 2$ and $\tau \in \Q^+$ sufficiently small necessarily satisfies both conditions. Furthermore, as the conditions are $\folr$-definable, this search is effective and computes an interval $[0, \tau]$ on which $K$ locally progresses into $S$. 
    
    It remains to establish the provability claim and show that local progression on $[0, \tau]$ can be deductively established in $\dL$. The derivation first begins with standard arithmetic manipulations to replace $S$ with its degree of truth $\theta_S$, $\alpha \equiv x' = p(x), t' = 1$ abbreviates the ODE for brevity
    \begin{sequentdeduction}
        \linfer[K+qear]{
            {\lsequent{}{K(x) \land t = 0 \rightarrow \dbox{\alpha \& t \leq \tau}{\theta_S(x) \geq \sfrac{t^k}{2}}}}
        }
        {\lsequent{}{K(x) \land t = 0 \rightarrow \dbox{\alpha \& t \leq \tau}{(t > 0 \rightarrow S(x))}}}
    \end{sequentdeduction}
    where $(k, \tau)$ are the witnesses to the search above. The derivation continues by approximating $\theta_S(x)$ with $E_k(x_0, t)$. 
    \begin{sequentdeduction}
        \linfer[cut+existsl]{
            \linfer[K+qear]{
                \linfer[band]{
                    \linfer[dW]
                    {
                        \linfer[qear]{
                            \lclose
                        }
                        {\lsequent{K(x_0), 0 \leq t \leq \tau}{E_k(x_0, t) \geq t^k}}
                    }
                    {\lsequent{}{x = x_0 \land K(x_0) \land t = 0 \rightarrow \dbox{\alpha \& t \leq \tau}{E_{k}(x_0, t) \geq t^k}}}
                    &
                    \text{\textcircled{A}}
                }
                {\lsequent{}{x = x_0 \land K(x_0) \land t = 0 \rightarrow \dbox{\alpha \& t \leq \tau}{(E_{k}(x_0, t) \geq t^k \land \abs{E_k(x_0, t) - \theta_S(x)} \leq \sfrac{t^k}{2})}}}
            }
            {\lsequent{}{x = x_0 \land K(x_0) \land t = 0 \rightarrow \dbox{\alpha \& t \leq \tau}{\theta_S(x) \geq \sfrac{t^k}{2}}}}
        }
        {\lsequent{}{K(x) \land t = 0 \rightarrow \dbox{\alpha \& t \leq \tau}{\theta_S(x) \geq \sfrac{t^k}{2}}}}
    \end{sequentdeduction}
    where the left premise closes by condition $(1)$ in the search for $(k, \tau)$, and premise \textcircled{A} concerns the error bound $\abs{E_k(x_0, t) - \theta_S(x)} \leq \sfrac{t^k}{2}$. The derivation of this error bound first begins by cutting in the domain constraint $B(0, R)$ defined via $\norm{x}^2 < R^2$. 
    \begin{sequentdeduction}
        \linfer[dC]{
            \linfer[]{
                \lclose
            }
            {\lsequent{}{x = x_0 \land K(x_0) \land t = 0 \rightarrow \dbox{\alpha \& t \leq \tau}{\norm{x}^2 < R^2}}}
            &
            \text{\textcircled{B}}
        }
        {\lsequent{}{x = x_0 \land K(x_0) \land t = 0 \rightarrow \dbox{\alpha \& t \leq \tau}{\abs{E_k(x_0, t) - \theta_S(x)} \leq \sfrac{t^k}{2}}}}
    \end{sequentdeduction}
and
\[\text{\textcircled{B}} \equiv  {\lsequent{}{x = x_0 \land K(x_0) \land t = 0 \rightarrow \dbox{\alpha \& t \leq \tau \land \norm{x}^2 < R^2}{\abs{E_k(x_0, t) - \theta_S(x)} \leq \sfrac{t^k}{2}}}}\]
where the left premise closes automatically as $\dL$ is complete for (bounded) open safety properties \cite{DBLP:journals/jacm/PlatzerQ25}. Note that condition $(2)$ guarantees the existence of some $W \in \Q^+$ such that $\frac{W}{(k + 1)!}\tau < \frac{1}{2}$ and $\max_j \abs{\lied[k + 1]{\genDE{x}}{e_j}(x)}_{x \in B[0, R]} \leq W$, hence \irref{taylor} applied to all $e_j$'s will yield the desired bounds.

\begin{sequentdeduction}
    \linfer[implyr+taylor+K]{
        \linfer[dC+dW+qear]{
            \lclose
        }
        {\lsequent{x = x_0, K(x_0), t = 0}{\dbox{\alpha \& t \leq \tau \land \norm{x}^2 < R^2}{\sfrac{Wt^{k + 1}}{(k + 1)!} \leq \sfrac{t^k}{2}}}}
        &
        \text{\textcircled{I}}
    }
    {\lsequent{}{x = x_0 \land K(x_0) \land t = 0 \rightarrow \dbox{\alpha \& t \leq \tau \land \norm{x}^2 < R^2}{\abs{E_k(x_0, t) - \theta_S(x)} \leq \sfrac{t^k}{2}}}}
\end{sequentdeduction}
with 
\[\text{\textcircled{I}} \equiv {\lsequent{}{\dbox{\alpha \& t \leq \tau \land \norm{x}^2 < R^2}{\max_j \abs{\lied[k + 1]{\genDE{x}}{e_j}(x)} \leq W}}}\]
the left premise closes with real-arithmetic rule \irref{qear} (after cutting in $0 \leq t \leq \tau$) since $\frac{W}{(k + 1)!}\tau < \frac{1}{2}$ holds by choice of $W$, and similarly \textcircled{I} also closes by \irref{dW+qear} as the choice of $W$ guarantees that the inequality is satisfied for $x \in B[0, R]$. As all premises are proven, this completes the proof. 
\end{proof}

\subsection{Completeness}
This section establishes that $\dL$ is complete for robust safety, establishing an equivalence between robust safety and \emph{provable} robust safety in $\dL$. The following proposition shows that (robust) safety can be naturally defined in $\dL$.

\begin{proposition}
    \label{prop: dL safety}
    Let $(p, I, S, T)$ be a safety problem, the following hold:
    \begin{itemize}
        \item It is safe if and only if the following $\dL$-formula is valid 
        \[\safe(p, I, S, T) \equiv I \land t = 0 \rightarrow \dbox{x' = p(x), t' = 1 \& t \leq T}{S}\]
        \item It is robustly safe if and only if the following $\dL$-formula is valid
        \[\safer(p, I, S, T) \equiv \left(I \rightarrow S\right) \land (\overline{I} \land t = 0 \rightarrow \dbox{x' = p(x), t' = 1 \& t \leq T}{\left(t > 0 \rightarrow S^o\right)})\]
        where $S^o \in \folr$ defines the topological interior of $S$ and $\overline{I}$ the topological closure of $I$.
    \end{itemize}
\end{proposition}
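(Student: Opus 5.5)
The plan is to unfold the $\dL$ semantics of the box modality from Section~\ref{sec: prelim_dL} and match each conjunct against Definition~\ref{def: robust safety}. Throughout I will use two facts. First, the clock $t'=1$ is decoupled from $x'=p(x)$ and starts at $t=0$, so along any solution $t$ equals elapsed time. Second, solutions of polynomial ODEs are unique, so $R^{\tau}_p(x_0)$ is either empty or the single point $\Phi_{x_0}(\tau)$ of the maximal solution.

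\emph{Safety.} Fix a state with $x=x_0\in\eval{I}$ and $t=0$. Because $t$ tracks time, the domain constraint $t\le T$ holds on $[0,\tau]$ exactly when $\tau\le T$. The box modality therefore says that $\Phi_{x_0}(\tau)\in\eval{S}$ for every $\tau\in[0,T]$ inside the maximal interval of existence. If the solution blows up before $T$, there is no $C^1$ solution on $[0,\tau]$ for larger $\tau$, so $R^{\tau}_p(x_0)=\emptyset$ there. Hence the quantification over $\tau$ in $\dL$ ranges over the same points as $R^{[0,T]}_p(x_0)$. Taking the union over $x_0\in\eval{I}$ shows that validity is equivalent to $R^{[0,T]}_p(I)\subseteq S$. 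Other variables in the state are irrelevant, since the formula only mentions $x$ and $t$.

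\emph{Robust safety.} The first conjunct, $I\rightarrow S$, is valid exactly when $\eval{I}\subseteq\eval{S}$. For the second conjunct, repeat the argument above with $\overline{I}$ in place of $I$ and postcondition $t>0\rightarrow S^o$. Since $t=\tau$ along the flow, the postcondition requires $\Phi_{x_0}(\tau)\in\eval{S^o}$ exactly for $\tau\in(0,T]$. By the conventions, $\eval{\overline{I}}$ is the topological closure of $\eval{I}$ and $\eval{S^o}$ is the interior of $\eval{S}$. So the second conjunct is valid iff $R^{(0,T]}_p(\overline{I})\subseteq S^o$, and together with the first conjunct this is exactly robust safety.

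The only step needing care is aligning the $\dL$ semantics, which uses the right-maximal solution over $[0,T_{\max})$, with the reachable-set definition, which requires a $C^1$ solution on the closed interval $[0,\tau]$. These coincide for $\tau<T_{\max}$, and both are vacuous beyond $T_{\max}$. Otherwise the proof is routine.
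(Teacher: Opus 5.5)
Your proposal is correct and takes the same approach as the paper, whose proof simply states that the equivalence follows directly from the semantics of $\dL$. You have spelled out that unfolding in detail: the clock $t$ tracks elapsed time, and the right-maximal solution in the $\dL$ semantics matches the $C^1$ reachable-set definition on closed intervals, with both vacuous beyond blow-up.
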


\begin{proof}
    Follows directly from the semantics of $\dL$ \cite{DBLP:journals/jar/Platzer08}. 
\end{proof}

By Proposition \ref{prop: dL safety}, (robust) safety properties are definable in $\dL$. The following theorem shows that $\dL$ is complete for robust safety.

\begin{theorem}[Completeness of Unconstrained Robust Safety]
\label{thm: complete for unconstrained safety}
$\dL$ is complete for (unconstrained) robust safety. That is, for all safety problems $(p, I, S, T)$, provability and validity coincide:
\[\models \safer(p, I, S, T) \iff\, \vdash_{\dL} \safer(p, I, S, T)\]
\end{theorem}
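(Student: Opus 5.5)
Soundness of $\dL$ gives the direction from provability to validity, so the work is the forward direction: assume $\safer(p, I, S, T)$ is valid and build a proof, following the splitting strategy of Figure \ref{fig: proof_strat}. The first conjunct $I \rightarrow S$ is a valid $\folr$ formula and closes directly by \irref{qear}. For the second conjunct, set $K \equiv \overline{I}$, which is compact semialgebraic because $I$ is bounded, and let $S^o$ be open. Validity says $R^{(0,T]}_p(K) \subseteq S^o$, and in particular $K$ locally progresses into $S^o$. Theorem \ref{thm: LP completeness} then computes an $s \in \Q^+$ with $s < T$, together with a $\dL$ proof of
\[K(x) \land t = 0 \rightarrow \dbox{x' = p(x), t' = 1 \& t \leq s}{(t > 0 \rightarrow S^o(x))}.\]
Validity also guarantees that the flow exists on $K \times [0,T]$, since otherwise a blow-up would leave the bounded set $S$.

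The derivation splits the time horizon at $s$. I would first cut in the intermediate formula
\[J \equiv t \leq s \rightarrow \big(t > 0 \rightarrow S^o(x)\big),\]
which is handled on the segment $[0,s]$ by the proof above. I would then reduce the remaining horizon $[s,T]$ to a compact initial-value problem. Let $K_s \equiv R^{s}_p(K)$ be the time-$s$ image, which is compact and lies inside $S^o$. Because this set is generally not semialgebraic, I would not name it directly. Instead I would use the set
\[K' \equiv \exists x_0\, \big(K(x_0) \land \text{the state at time } s \text{ reached from } x_0 \text{ is } x\big),\]
expressed in $\dL$ through the diamond modality for the flow, and prove $\dbox{\alpha \& t \leq T}{(t \geq s \rightarrow S^o)}$ from it. A cleaner alternative, which I prefer, is to avoid the exact time-$s$ image altogether. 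Start the main formula from $K \land t = 0$ with the domain $t \leq T$, and derive the region $t \geq s$ as a state-robust safety problem on the compact set $K \times [0,T]$ of the augmented system with the clock. The target is the \emph{open} postcondition $t < s \lor S^o(x)$, which is implied by $t \le 0 \lor t < s \lor S^o(x)$. This postcondition is satisfied by every reachable state with $t \in [0,T]$: if $t < s$ it holds trivially, and if $t \geq s > 0$ it holds because $R^{[s,T]}_p(K) \subseteq S^o$.

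With this choice the problem $K \land t = 0 \rightarrow \dbox{\alpha \& t \leq T}{(t < s \lor S^o)}$ has a compact initial set and an open postcondition. The domain $t \leq T$ is unproblematic: by the standard reduction it can be replaced by a slightly larger open time bound $t < T+\eta$, using compactness and continuity of the flow to keep the open postcondition valid up to $T+\eta$. The completeness of $\dL$ for bounded open safety properties of compact initial-value problems \cite{DBLP:journals/jacm/PlatzerQ25}, already used in the proof of Theorem \ref{thm: LP completeness}, then provides a proof. Finally, combine the two boxes with \irref{band} and propositional reasoning inside the box (\irref{K+qear}). The first box gives $S^o$ whenever $0 < t \leq s$, and the second gives $S^o$ whenever $t \geq s$, so together they yield $t > 0 \rightarrow S^o$ on all of $[0,T]$.

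The main obstacle is the existence and computability of a rational $s$ whose initial segment can be proven symbolically despite $d(I, S^c) = 0$. Theorems \ref{thm: compact progression characterization} and \ref{thm: LP completeness} already resolve this. What remains delicate is checking that the open-postcondition completeness result of \cite{DBLP:journals/jacm/PlatzerQ25} applies to the closed time domain $t \leq T$, or else padding the horizon as sketched above. Effectiveness follows because $s$ is computed and both subproofs are produced by algorithms.
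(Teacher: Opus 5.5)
Your proposal is correct and is essentially the paper's proof: the computable $s$ from Theorem~\ref{thm: LP completeness} settles $(0,s]$, and your open postcondition $t < s \lor S^o$ is just the paper's formula $t \geq s \rightarrow S^o$ on $[s,T]$. You discharge it with the compact-open/approximation completeness of \cite{DBLP:journals/jacm/PlatzerQ25} as a black box, treating $t$ as a state variable and intersecting the postcondition with a large ball if boundedness is required, whereas the paper inlines that argument with an explicit polynomial approximation of error $\sfrac{\eps}{3}$; the two segments are then combined. The one step you gloss over is that the $[0,s]$ proof is under the domain $t \leq s$, so before the box-conjunction axiom applies you must transport it to $\dbox{\alpha \& t \leq T}{J}$ using monotonicity of the clock, which the paper does inside a proof by contradiction via \cite[Lemma~5.9]{DBLP:journals/jacm/PlatzerQ25}; conversely, padding $t \leq T$ to an open horizon is unnecessary, because the cited completeness result is stated for exactly that closed time domain.
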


Before proceeding to the proof, we first give a sketch of the proof strategy. From earlier works, it is known that the reachable set of a polynomial ODE is (Type-Two) computable, which allows one to check for safety on intervals of the form $[s, T]$ when $s > 0$. However, this parameter $s$ cannot be made symbolic, and numerical approximations alone cannot prove the desired claim at time $t = 0$ where it is possible to have $I \cap S^{c} \neq \emptyset$. To handle this, we compute some numerical value $s > 0$ satisfying the following properties:

\begin{enumerate}
    \item On the time interval $[0, s]$, the set $\overline{I}$ locally progresses into $S^o$, which is provable due to completeness of local progressions (Theorem \ref{thm: compact progression characterization}).
    
    \item On the time interval $[s, T]$, safety can be proven by appealing to numerical approximations and computability of reachable sets, of which $\dL$ is also complete for \cite{DBLP:journals/jacm/PlatzerQ25}. 
\end{enumerate}

\begin{proof}
    The ``$\leftarrow$'' direction directly follows from soundness of $\dL$, so it suffices to assume the validity of $\safer(p, I, S, T)$ and establish provability in $\dL$. Note that the initial condition $I \rightarrow S$ is a $\folr$ formula hence provable by axiom \irref{qear}. The nontrivial part of $\safer(p, I, S, T)$ will be proved independently on the time intervals $[0, s]$ and $[s, T]$, where $s \in \Q^+$ is some witness to the local progression of $\overline{I}$ into $S^o$ as computed by Theorem \ref{thm: LP completeness}. The provability of $\safer(p, I, S, s)$ for $t \in [0, s]$ then follows by the definition of local progression and construction of $s$, as $\overline{I}$ progressing into $S^o$ implies robust safety. It remains to establish robust safety on $[s, T]$ and prove the following formula
    \[\overline{I} \land t = 0 \rightarrow \dbox{x' = p(x), t' = 1 \& t \leq T}{\left(t \geq s \rightarrow S^o\right)}\]
    As $\safer(p, I, S, T)$ is assumed to be valid, this necessarily implies $R^{[s, T]}_p(\overline{I}) \subset S^o$. Since $S^o$ is open and $R^{[s, T]}_p(\overline{I})$ is compact, this further implies the existence of some $\eps \in \Q^+$ such that $B(R^{[s, T]}_p(\overline{I}), \eps) \subseteq S^o$. By completeness of $\dL$ for approximations \cite{DBLP:journals/jacm/PlatzerQ25}, we may compute some polynomial $\eta \in \Q[x_0, t]$ such that $\eta(x_0, t)$ provably approximates the flow of $x' = p(x)$ with error $\sfrac{\eps}{3}$ on $\overline{I} \times [0, T]$, i.e. the following \dL formula is provable
    \[x = x_0 \land \overline{I}(x_0) \land t = 0 \rightarrow \dbox{x' = p(x), t' = 1 \& t \leq T}{\norm{\eta(x_0, t) - x}^2 < \sfrac{\eps^2}{3^2}}\]
    As $B(R^{[s, T]}_p(\overline{I}), \eps) \subseteq S^o$ and $\sfrac{\eps}{3} + \sfrac{\eps}{3} = \sfrac{2\eps}{3} < \eps$, the following is also provable using axioms \irref{K+qear}
    \[x = x_0 \land \overline{I}(x_0) \land t = 0 \rightarrow \dbox{x' = p(x), t' = 1 \& t \leq T}{(t \geq s \rightarrow B(\eta(x_0, t), \sfrac{\eps}{3}) \subseteq S^o)}\]
    Combining this formula with the previous then proves
    \begin{align}
        \label{eqn: t_geq_s_robust_safe}
        \overline{I}(x) \land t = 0 \rightarrow \dbox{x' = p(x), t' = 1 \& t \leq T}{(t \geq s \rightarrow S^o)}
    \end{align}
    as desired, establishing robust safety for $t \in [s, T]$. Finally, it remains to combine the proofs of robust safety on $[0, s]$ and $[s, T]$ into one proof for $\safer(p, I, S, T)$. This can be achieved by contradiction, if the system is not robustly safe, then it must exit $S^o$ at some point, i.e. the following formula is satisfiable
    \[\overline{I}(x) \land t = 0 \land \ddiamond{x' = p(x), t' = 1 \& t \leq T}{(t > 0 \land \neg S^o)}\]
    The proof then naturally splits into two cases corresponding to $t \leq s, t \geq s$ depending on when $x$ reaches $(S^o)^c$. The latter case results in a contradiction since robust safety for $t \geq s$ has been proven as formula (\ref{eqn: t_geq_s_robust_safe}). For the case of $t \leq s$, monotonicity of $t' = 1$ can be utilized to refine the domain constraint \cite[Lemma~5.9]{DBLP:journals/jacm/PlatzerQ25}, establishing the satisfiability of the following formula
    \[\overline{I}(x) \land t = 0 \land \ddiamond{x' = p(x), t' = 1 \& t \leq s}{(t > 0 \land \neg S^o)}\]
    However, this is precisely the negation of the (provably valid) formula $\safer(p, I, S, s)$, hence a contradiction. Since both cases result in a contradiction, this completes the proof. 
\end{proof}

\subsection{Approximate Decidability}
\label{sec: unconstrained decidability}

This section establishes approximate decidability results concerning robust safety. By Theorem \ref{thm: complete for unconstrained safety} and the fact that $\dL$'s axiomatization is computable, it naturally follows that robust safety is c.e. (computably enumerable). Recent works have shown that many c.e. problems that are undecidable in general are naturally ``approximately decidable'' \cite{Franek_Ratschan_Zgliczynski_2016, Gao_Avigad_Clarke_2012, DBLP:conf/csl/BlancB24}. There exists an algorithm that either correctly decides the truth of the problem, or the problem is sensitive to arbitrarily small perturbations. This section establishes results in a similar style for robust safety, that robust safety is decidable modulo small perturbations. It is important to note that such results \emph{do not follow} from earlier works as the initial condition $I$ is \emph{not} required to be positively separated away from the unsafe sets $S^c$.  

While the notion of robust safety for a safety problem $(p, I, S, T)$ is naturally defined, it is perhaps not as clear what the correct notion of ``perturbation'' to such problems should be. One natural interpretation is to note that robust safety is equivalent to requiring $d(R^{t}_p(\overline{I}), S^c) > 0$ for all $t \in (0, T]$. Taking this point of view, it is natural to consider perturbations of the form $d(R^{t}_p(I), S^c) \geq \delta(t)$ where $\delta(t)$ is the time-varying level of perturbation, leading to the following definition.

\begin{definition}[Perturbations of safety]
    \label{def: unconstrained_perturbation}
    Let $(p, I, S, T)$ be a safety problem. 
    \begin{itemize}
        \item A (computable) $\delta$-perturbation of the problem is some (computable) function $\delta \in C^0([0, T], \R^{\geq 0})$ such that $\delta(t) > 0$ for all $t \in (0, T]$.

        \item The problem is \emph{$\delta$-safe} if $I \subseteq S$ and $d(R^t_p(x), S^c) \geq \delta(t)$ for all $(x, t) \in \overline{I} \times (0, T]$. 
    \end{itemize}
\end{definition}

An attractive feature of Definition \ref{def: unconstrained_perturbation} is that a safety problem is robustly safe if and only if it is $\delta$-safe for some $\delta$-perturbation, hence such approximations ``converge'' as the level of perturbation $\delta$ tends to $0$.   

\begin{proposition}[Characterization of Robust Safety]
    \label{prop:robust safety char}
    A safety problem $(p, I, S, T)$ is robustly safe if and only if it is $\delta$-safe for some perturbation $\delta$.
\end{proposition}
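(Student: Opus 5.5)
The reverse direction is immediate from the definitions. If the problem is $\delta$-safe, then $I \subseteq S$ holds by assumption. For every $(x,t) \in \overline{I}\times(0,T]$ we have $d(R^t_p(x), S^c) \geq \delta(t) > 0$. Since $S^c \supseteq \overline{S^c}$ is at positive distance, the point $R^t_p(x)$ lies outside $\overline{S^c} = (S^o)^c$, that is, in $S^o$. This is exactly the robust safety condition $R^{(0,T]}_p(\overline{I}) \subseteq S^o$ of Definition \ref{def: robust safety}. (Implicit here is that the flow exists on $\overline{I}\times[0,T]$; otherwise the distance condition is vacuous at blow-up points. In both notions I will read the reachable-set condition as requiring the flow to be defined up to $T$, which robust safety entails for bounded $S$ anyway.)

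For the forward direction, assume robust safety. I would construct $\delta$ from the function $m(t) \coloneqq \min_{x \in \overline{I}} d(\phi(x,t), S^c)$ on $[0,T]$. Since $S$ is bounded and $R^{(0,T]}_p(\overline{I}) \subseteq S^o$, the flow cannot blow up on $\overline{I}\times(0,T]$. Hence $\phi$ is continuous on the compact set $\overline{I}\times[0,T]$, and $(x,t)\mapsto d(\phi(x,t),S^c)$ is continuous there because distance functions are $1$-Lipschitz. Compactness of $\overline{I}$ then makes $m$ continuous in $t$: a minimum over a compact parameter set of a jointly continuous function is continuous. For each $t\in(0,T]$ the minimum is attained at some point $\phi(x^*,t)\in S^o$. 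Points of $S^o$ have positive distance to $S^c$: an open ball around them avoids $S^c$, so the distance is positive even though $S^c$ need not be closed. Thus $m(t)>0$ on $(0,T]$, and $m(0)\geq 0$. Setting $\delta \coloneqq m$ gives an element of $C^0([0,T],\R^{\geq 0})$ that is positive on $(0,T]$, and $\delta$-safety holds by construction together with the hypothesis $I\subseteq S$.

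The main point needing care is continuity of $m$, and in particular its behaviour near $t=0$ and the well-definedness of the flow. Continuity near $0$ follows from joint uniform continuity of $(x,t)\mapsto d(\phi(x,t),S^c)$ on the compact set $\overline{I}\times[0,T]$. Existence of the flow up to time $T$ I would justify by contradiction. If the maximal existence time for some $x\in\overline{I}$ were at most $T$, the solution would leave every compact set, and in particular the bounded set $S$, at some positive time before blow-up. That contradicts $R^{(0,T]}_p(\overline{I})\subseteq S^o\subseteq S$ under the semantics in which robust safety constrains all times of existence up to $T$. If only a computable $\delta$ were demanded, one could alternatively take the piecewise-linear lower envelope $\min(t,1)\cdot t^k$ given by Theorem \ref{thm: compact progression characterization}(3) near $0$. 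That is not needed here, since the proposition only asks for some perturbation.
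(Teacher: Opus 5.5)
Your proof is correct, and it takes a genuinely different route from the paper.

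The paper splits the horizon. On an initial segment $[0,s]$ it invokes Theorem \ref{thm: compact progression characterization}(3), and hence \L{}ojasiewicz's inequality, to obtain $d(R^t_p(\overline{I}), S^c) \geq t^k$. On $[s,T]$ it uses compactness to obtain a uniform margin $\eps$. It then glues the two as $\delta(t) = \min(\eps, t^k)$.

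You instead take the exact margin $m(t) = \min_{x \in \overline{I}} d(\phi(x,t), S^c)$. You need only joint continuity of $(x,t) \mapsto d(\phi(x,t), S^c)$ on the compact set $\overline{I}\times[0,T]$, after ruling out blow-up, which you do correctly from boundedness of $S$. This works because a perturbation may vanish at $t=0$ at any rate, so the polynomial rate from subanalytic geometry is not needed for this proposition at all. Your argument also gives a sharper statement: the problem is $\delta$-safe if and only if $I \subseteq S$ and $\delta \leq m$ on $(0,T]$, so $m$ is the largest admissible perturbation. It uses nothing beyond compactness of $\overline{I}$, boundedness of $S$ and continuity of the flow.

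What the paper's route buys is an explicit closed form $\min(\eps, t^k)$ with rational $\eps$. That form is obviously computable and exhibits the quantitative $t^k$ progression rate, which the paper reuses in Theorem \ref{thm: limiting characterization of robust safety}.

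A few minor slips:
\begin{itemize}
\item In the reverse direction you wrote $S^c \supseteq \overline{S^c}$; the inclusion is the other way. The facts you actually use are $d(y, S^c) = d(y, \overline{S^c})$ and $\overline{S^c} = (S^o)^c$.
\item If $I = \emptyset$, then $m \equiv +\infty$ is not in $C^0([0,T], \R^{\geq 0})$. Cap it, e.g.\ use $\min(m, 1)$, or treat this case separately.
\item Your closing aside is off and can be dropped. $\min(t,1)\cdot t^k$ is not piecewise linear. For $T > 1$ it is not a valid lower bound on $[s,T]$ either, since it can exceed the true margin there. The correct explicit form is the paper's $\min(\eps, t^k)$, with $\eps$ the margin on $[s,T]$.
\end{itemize}
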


\begin{proof}
    It is clear that $\delta$-safety implies robust safety, so it suffices to prove the converse. Suppose that $(p, I, S, T)$ is robustly safe, which in particular implies that $\overline{I}$ locally progresses into $S^o$. By Theorem \ref{thm: compact progression characterization}, there exists some $s \in (0, T]$ and $k \in \N$ such that $d(R^t_p(\overline{I}), S^c) \geq t^k$ for all $t \in (0, s]$. Furthermore, as the problem is robustly safe, there exists some positive $\eps > 0$ such that $d(R_p^{[s, T]}(\overline{I}), S^c) > \eps$. The $\delta$-perturbation defined as follows (which is positive for $t > 0$)
    \[\delta(t) = \min(\eps, t^k)\]
    is a valid bound for which the problem is $\delta$-safe.
\end{proof}

It is now possible to prove that robust safety is approximately decidable in the following sense.

\begin{theorem}[Approximate Decidability of Unconstrained Safety]
    \label{thm: approx decidability of unconstrained safety}
    There exists a computable algorithm such that given a safety problem $(p, I, S, T)$ and a computable $\delta$-perturbation, it correctly outputs one of the following:
    \begin{itemize}
        \item $(p, I, S, T)$ is robustly safe.
        \item $(p, I, S, T)$ is not $\delta$-safe. 
    \end{itemize}
    If both cases are true, then the algorithm can output either of the two. 
\end{theorem}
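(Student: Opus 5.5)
The plan is to run two semi-decision procedures in parallel (dovetailed) and output the verdict of whichever halts first. Correctness of each output comes from soundness of the procedure that produced it. Termination comes from Proposition \ref{prop:robust safety char}, which guarantees that at least one procedure halts.

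The first procedure enumerates all $\dL$ proofs and halts with ``robustly safe'' once it finds a proof of $\safer(p, I, S, T)$. This is effective because $\dL$'s axiomatization is computable. By Theorem \ref{thm: complete for unconstrained safety} it halts whenever the problem is robustly safe, and by soundness its output is always correct. A more direct alternative is to search for $s, k$ as in Theorem \ref{thm: LP completeness} together with an $\eps$-certificate on $[s,T]$, but proof search already suffices.

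The second procedure semi-decides ``not $\delta$-safe''. Since $I \to S$ is decidable by quantifier elimination, it first checks $I \subseteq S$; if this fails, the problem is not $\delta$-safe and we output that. Otherwise it must certify the existence of $(x, t) \in \overline{I} \times (0, T]$ with $d(R^t_p(x), S^c) < \delta(t)$. This is an open condition in $(x,t)$, because the flow is continuous and $\delta$ and the distance to the closed set $\overline{S^c}$ are continuous; note $d(\cdot, S^c)=d(\cdot,\overline{S^c})$. By Theorem \ref{thm: definable sets are computable}, $d(\cdot, \overline{S^c})$ is computable, and by \cite{DBLP:journals/jacm/PlatzerQ25,DBLP:conf/csl/BlancB24} the flow is computable wherever it exists. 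The procedure therefore enumerates rational pairs $(x, t)$ with $t \in \Q \cap (0, T]$ and $x$ a rational point near $\overline{I}$. For each pair it evaluates to increasing precision and halts once it certifies $d(\phi(x, t), S^c) + \text{error} < \delta(t) - \text{error}$ with $d(x, \overline{I})$ small enough that, by continuity, some genuine point of $\overline{I}$ also witnesses the violation.

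The main obstacle is this last step. A rational $x$ need not lie in $\overline{I}$, and $\overline{I}$ may have empty interior, so a nearby rational point does not automatically give a witness. I would first handle it by using the modulus of continuity of the flow on a compact neighbourhood, which is computable from Lipschitz bounds on $p$ over a ball containing the computed flowpipe. From a certified strict violation at rational $x$ with $d(x,\overline{I})$ below that modulus, we get a violation at the nearest point of $\overline{I}$. Conversely, if a true witness $(x^*, t^*)$ exists, the set of violating pairs is open in $\overline{I}\times(0,T]$ and rationals approximate $x^*$ arbitrarily well, so the search eventually succeeds.

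There is one more subtlety: the flow may blow up before time $T$. If the solution from some point of $\overline{I}$ leaves every bounded set within $[0,T]$, then the problem is not safe because $S$ is bounded. Certifying that the flowpipe leaves a large ball containing $S$ is itself semi-decidable, so this case is absorbed into the second procedure. Dovetailing the two procedures yields the algorithm. If both outcomes are true, whichever procedure halts first determines the output, which is permitted by the statement.
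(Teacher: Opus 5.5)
Your proposal is correct and follows the paper's proof: robust safety is c.e.\ by Theorem~\ref{thm: complete for unconstrained safety}; ``not $\delta$-safe'' is c.e.\ by searching for either $I\cap S^c\neq\emptyset$ or a witness $(x,s)$ with $s<t^+(x)$ and $d(R^s_p(x),S^c)<\delta(s)$; and the two searches are dovetailed. The one difference is how you obtain an exact witness in $\overline{I}$: the paper enumerates real algebraic points of $\overline{I}$, which are dense by Lemma~\ref{lem: real algebraic numbers are dense}, so it needs no transfer step, whereas your search over rational points needs the extra Gr\"onwall bootstrap argument (which does go through) to move a certified violation to a nearby point of $\overline{I}$.
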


\begin{proof}
    Theorem \ref{thm: complete for unconstrained safety} implies that robust safety is c.e., so it suffices to show that not $\delta$-safe is c.e. as well. Indeed, we show that the following are equivalent
    \begin{enumerate}
        \item $(p, I, S, T)$ is not $\delta$-safe.
        \item $I \cap S^c \neq \emptyset$ or there exists $s \in (0, T)$ and $x \in \overline{I}$ such that $R^{s}_p(x) \neq \emptyset$ and $d(R^{s}_p(x), S^c) < \delta(s)$. 
    \end{enumerate}
    The implication $(2) \implies (1)$ is immediate, and $(1) \implies (2)$ follows from the fact that $\delta$-safety fails if and only if there exists some $s > 0$ with $d(R^{s}_p(x), S^c) - \delta(s) < 0$. Note that the ``no blow-up'' condition $R^{t}_p(x) \neq \emptyset$ in (2) is sound by the continuity of $d(R^{t}_p(x), S^c)$ and boundedness of $S$. Thus, it remains to establish that (2) is a c.e. relation. The condition $I \cap S^c \neq \emptyset$ is decidable and therefore c.e. For the second part, standard continuity arguments and density of the real-algebraic numbers $\tilde{\Q}$ in $\overline{I}$ (Lemma \ref{lem: real algebraic numbers are dense}) reduces the condition to:
    \[\exists~s \in \Q \cap (0, T)~\exists~x \in \overline{I} \cap \tilde{\Q} ~\left(s < t^+(x) \land d(R^{s}_p(x), S^c) - \delta(s) < 0 \right)\]
    where $t^+(x)$ denotes the (right) interval of maximal existence at $x$. Thus, the second part of (2) is equivalent to the condition above, which is c.e. as $s < t^+(x)$ is a uniformly c.e. relation \cite{Graca_Zhong_Buescu_2009}. As the flow $R^s_p(x)$ is computable for $s < t^+(x)$, the overall condition is c.e., thereby completing the proof as desired. 
\end{proof}

Similar to earlier works on approximate decidability \cite{Gao_Avigad_Clarke_2012,Franek_Ratschan_Zgliczynski_2016}, one can naturally define a notion of $\delta$-robustness for safety problems, and the robust safety of such problems will be \emph{exactly decidable}.

\begin{definition}[$\delta$-Robust Safety Problem]
    Let $(p, I, S, T)$ be a safety problem and $\delta(t)$ a computable perturbation. The safety problem is $\delta$-robust if exactly one of the following holds:
    \begin{itemize}
        \item $(p, I, S, T)$ is not robustly safe. 
        \item $(p, I, S, T)$ is $\delta$-safe. 
    \end{itemize}
\end{definition}

Intuitively, a safety problem is $\delta$-robust if its safety remains unchanged by a perturbation of level $\delta$. It follows from Theorem \ref{thm: approx decidability of unconstrained safety} that the robust safety of a $\delta$-robust safety problem can be decided from the computable perturbation $\delta(t)$, as the approximate algorithm either correctly decides that the problem is robustly safe, or decides that the problem is not $\delta$-safe, in which case it is also not robustly safe as it is assumed to be $\delta$-robust. 

\begin{corollary}[Decidability of $\delta$-Robust Safety]
    There exists a computable algorithm such that given a safety problem and a computable $\delta$-perturbation, correctly decides the robust safety of the problem provided that the problem is $\delta$-robust. 
\end{corollary}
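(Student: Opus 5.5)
The plan is to run the algorithm of Theorem~\ref{thm: approx decidability of unconstrained safety} on the given safety problem $(p, I, S, T)$ and perturbation $\delta$, and read its output directly as a decision about robust safety. That theorem supplies a computable procedure which always halts and outputs either ``robustly safe'' or ``not $\delta$-safe'', and the output is correct in each case. It is uniform in the problem data and in (a name for) $\delta$, so the composed procedure is computable. Concretely, it runs the two semi-decision procedures in parallel: the enumeration of $\dL$ proofs of $\safer(p, I, S, T)$ given by Theorem~\ref{thm: complete for unconstrained safety}, and the c.e.\ search for a witness to failure of $\delta$-safety.

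The decision rule is as follows. If the algorithm outputs ``robustly safe'', we answer that the problem is robustly safe; this is correct by the correctness guarantee of Theorem~\ref{thm: approx decidability of unconstrained safety}. If it outputs ``not $\delta$-safe'', we answer that the problem is not robustly safe. To justify this second case, use the assumption that the problem is $\delta$-robust: exactly one of ``not robustly safe'' and ``$\delta$-safe'' holds. Since the problem is genuinely not $\delta$-safe, the second alternative fails, so the first must hold and the problem is not robustly safe.

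The two answers are also consistent with each other. By Proposition~\ref{prop:robust safety char}, or directly from Definition~\ref{def: unconstrained_perturbation}, $\delta$-safety implies robust safety because $\delta(t) > 0$ on $(0,T]$. Hence on a $\delta$-robust problem, robust safety is equivalent to $\delta$-safety, and the two possible outputs are mutually exclusive and exhaustive. The algorithm's permission to output either answer when both are true therefore never arises, since that situation would require the problem to be robustly safe and not $\delta$-safe, which $\delta$-robustness excludes.

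There is no substantive obstacle here. The only point to check is the uniformity and termination claim inherited from Theorem~\ref{thm: approx decidability of unconstrained safety}: the disjunction ``robustly safe or not $\delta$-safe'' always holds, because a problem that is $\delta$-safe is robustly safe. One of the two c.e.\ searches therefore always terminates.
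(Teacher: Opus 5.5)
Your proposal is correct and follows the same approach as the paper: run the algorithm of Theorem~\ref{thm: approx decidability of unconstrained safety}, accept a ``robustly safe'' output as is, and use $\delta$-robustness to turn a ``not $\delta$-safe'' output into ``not robustly safe.'' Your additional observations are also accurate: $\delta$-robustness makes robust safety equivalent to $\delta$-safety, and the disjunction guaranteeing termination always holds.
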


\begin{proof}
    It suffices to run the algorithm given in Theorem \ref{thm: approx decidability of unconstrained safety}. If the algorithm decides that the safety problem is robustly safe, then we are done. Otherwise, the problem is not $\delta$-safe, since the problem is $\delta$-robust, it must therefore not be robustly safe. 
\end{proof}

A similar notion of $\delta$-robustness appeared in earlier works \cite{Gao_Avigad_Clarke_2012, Gao_Kong_Chen_Clarke_2014} concerning the $\delta$-decidability of $\R$. The results established above are a strict generalization in the context of safety problems for polynomial ODEs, as no positive separation between the initial conditions and the unsafe sets is required (Example \ref{eg: hybrid cart}).  

\section{Constrained Safety}
\label{sec: constrained safety}

This section develops strengthenings of the results presented in Section \ref{sec: unconstrained safety} to safety problems with domain constraints. Intuitively, domain constraints limit the region on which the flow can evolve, resulting in more general modeling capabilities and intricate dynamics. 

\begin{definition}[Constrained Reachable Set]
    \label{def: constrained reachable set}
    For an $n$-dimensional ODE $x' = p(x)$, with initial condition $x(0) = x_0$, time $T \in \R^+$ and domain constraint $Q \subseteq \R^n$, the corresponding constrained reachable set is denoted by $R^T_{p, Q}(x_0)$ where $y \in R^T_{p, Q}(x_0)$ if and only if there exists $\phi \in C^1([0, T], \R^n)$ such that
    \begin{itemize}
        \item $\phi(0) = x_0$.
        \item $\phi(T) = y$.
        \item $\phi'(t) = p(\phi(t))$ for all $t \in [0, T]$. 
        \item For all $t \in [0, T]$, $\phi(t) \in Q$. 
    \end{itemize}
    $R^{[t_1, t_2]}_{p, Q}(x_0)$ denotes $\bigcup_{t \in [t_1, t_2]} R^{t}_{p, Q}(x_0)$ and $R^T_{p, Q}(A)$ denotes $\bigcup_{x_0 \in A} R^T_{p, Q}(x_0)$ for $A \subseteq \R^n$.
\end{definition}

This article only considers the case where $Q$ is $\folr$-definable. Constrained safety problems are defined as follows.

\begin{definition}[Constrained Safety Problem]
    \label{def: constrained safety problem}
    A \emph{(bounded, constrained) polynomial safety problem} is a tuple 
    \[(p, I, S, Q, T) \in \Q[x] \times \folr \times \folr \times \folr \times \Q^{+}\]
    where $p(x)$ represents the vector field of the ODE $x' = p(x)$, $I(x)$ defines the (bounded, semialgebraic) initial conditions, $S(x)$ defines the (bounded, semialgebraic) set of safe states, $Q(x)$ defines the (semialgebraic) domain constraint and $T$ denotes the bounded time horizon $[0, T]$. 
\end{definition}

Note that every unconstrained safety problem is a constrained safety problem with $Q \equiv \top$. From now on safety problems are always assumed to be constrained. 

\begin{definition}[Constrained Robust Safety]
    \label{def: constrained robust safety}
    Let $(p, I, S, Q, T)$ be a safety problem. It is said to be:
    \begin{itemize}
        \item Safe if $R^{[0, T]}_{p, Q}(I) \subseteq S$.
        \item Robustly safe if $I \cap Q \subseteq S$, $R^{(0, T]}_{p, \overline{Q}}(\overline{I}) \subseteq S^o$ and $\overline{I}$ locally progresses into $Q^o$. 
    \end{itemize}
\end{definition}

That is, a constrained safety problem $(p, I, S, Q, T)$ is robustly safe if the initial state $\overline{I}$ can locally progress into the interior of the domain constraint $Q^o$, and the problem is furthermore robustly safe similar to Definition \ref{def: robust safety}. Intuitively, the condition of $\overline{I}$ locally progressing into $Q^o$ ensures that no initial point is stuck and constrained by $Q$ to not evolve at all. 

\subsection{Completeness}
Similar to Proposition \ref{prop: dL safety}, constrained safety properties are also definable in $\dL$.

\begin{proposition}
    \label{prop: constrained dL safety}
    Let $(p, I, S, Q, T)$ be a safety problem, the following hold:
    \begin{itemize}
        \item It is safe if and only if the following $\dL$-formula is valid 
        \[\safe(p, I, S, Q, T) \equiv I \land t = 0 \rightarrow \dbox{x' = p(x), t' = 1 \& t \leq T \land Q}{S}\]
        \item It is robustly safe if and only if the following $\dL$-formula is valid
        \begin{align*}
            \safer(p, I, S, Q, T) \equiv &\left(I \land Q \rightarrow S\right)\\
            \land &\lp_p(\overline{I}, Q^o)\\
            \land & \left(\overline{I} \land t = 0 \rightarrow \dbox{x' = p(x), t' = 1 \& t \leq T \land \overline{Q}}{\left(t > 0 \rightarrow S^o\right)}\right)         
        \end{align*}
    \end{itemize}
\end{proposition}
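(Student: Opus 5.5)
The plan is to unfold the $\dL$ semantics conjunct by conjunct and match each one with a clause of Definition \ref{def: constrained robust safety}, as in Proposition \ref{prop: dL safety}.

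The safety part is immediate. For a state $\omega$ with $x_0 = \omega(x) \in \eval{I}$ and $t = 0$, the box modality quantifies over all $\tau \le T$ such that the solution stays in $\eval{Q}$ and satisfies $t \le T$ on $[0,\tau]$. These are exactly the points of $R^{[0,T]}_{p,Q}(x_0)$, since the clock $t' = 1$ makes $t = \tau$. Hence validity is equivalent to $R^{[0,T]}_{p,Q}(I) \subseteq S$.

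For robust safety, I will treat the three conjuncts separately.

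\begin{enumerate}
\item The first conjunct $I \land Q \rightarrow S$ is a $\folr$ formula. Its validity is literally $I \cap Q \subseteq S$.
\item The second conjunct $\lp_p(\overline{I}, Q^o)$ contains no free state variables beyond the fresh clock. By Proposition \ref{prop: LP dL def}, its validity is equivalent to $\overline{I}$ locally progressing into $Q^o$.
\item For the third conjunct, the same semantic unfolding as in the safe case, now with domain $t \le T \land \overline{Q}$, shows validity is equivalent to the following: for every $x_0 \in \overline{I}$ and every $\tau \in (0,T]$ for which a solution exists and stays in $\overline{Q}$ on $[0,\tau]$, the endpoint $\phi(x_0,\tau)$ lies in $S^o$.
\end{enumerate}

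The condition in item 3 is precisely $R^{(0,T]}_{p,\overline{Q}}(\overline{I}) \subseteq S^o$, and the conjunction of the three items gives the equivalence. I have to handle the conjunct $t>0$ and the existence of solutions correctly. The $\dL$ semantics only ranges over $\tau$ below the maximal existence time. The reachable set $R^\tau$ is likewise empty beyond blow-up, since it requires a $C^1$ solution on $[0,\tau]$. So the two notions agree.

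The main obstacle, to the extent there is one, is the bookkeeping of variable conventions. The clock $t$ must be fresh and reset to $0$, and $x$ is universally quantified inside $\lp_p$ while it is free in the other conjuncts. Validity of the whole conjunction is equivalent to validity of each conjunct, because validity distributes over $\land$. For the closed-form formulas the state of the remaining variables is irrelevant.

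I will also note that $\overline{Q}$ and $Q^o$ are $\folr$-definable, as in the conventions, so the formula is a legitimate $\dL$ formula. No deeper argument than the semantics of $\dL$ \cite{DBLP:journals/jar/Platzer08} should be needed.
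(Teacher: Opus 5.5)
Your proposal is correct and follows the same route as the paper. The paper's proof is a one-line appeal to the $\dL$ semantics through Proposition~\ref{prop: LP dL def} and Proposition~\ref{prop: dL safety}, which is the conjunct-by-conjunct unfolding you carry out; your remarks on blow-up times, the fresh clock $t$, and validity distributing over $\land$ spell out details that the one-liner leaves implicit.
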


\begin{proof}
    This follows from Proposition \ref{prop: LP dL def} and Proposition \ref{prop: dL safety}.
\end{proof}

To establish completeness of constrained problems we first show that $\dL$ is complete for topologically open safety properties with compact initial conditions and closed domain constraints. This is a result of independent interest that generalizes earlier works \cite{DBLP:journals/jacm/PlatzerQ25, DBLP:conf/ijcar/AbouElWafaP26} on the completeness of initial value problems by allowing for (closed) unbounded domain constraints. 

\begin{theorem}[Completeness for Constrained Open Safety]
    \label{thm: compact IVP constrained safety completeness}
    $\dL$ is complete for safety problems with compact initial conditions, closed constraints and (bounded) open safe sets. That is, for all (bounded) safety problems $(p, I, S, Q, T)$ with $I$ compact, $S$ (bounded) open and $Q$ closed, provability and validity coincide:
    \[\models \safe(p, I, S, Q, T) \iff \vdash_{\dL} \safe(p, I, S, Q, T)\]
\end{theorem}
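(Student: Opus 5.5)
The ``$\leftarrow$'' direction is soundness of $\dL$, so I assume $\safe(p, I, S, Q, T)$ is valid and construct a proof. The idea is to reduce to the known bounded case \cite{DBLP:journals/jacm/PlatzerQ25}, where the domain constraint is effectively bounded, by adding a bounded closed constraint. Since $S$ is bounded, choose $R \in \Q^+$ with $\eval{S} \subseteq B(0, R)$. Any trajectory that stays in $Q$ and is still safe lies in $B(0,R)$ up to the current time. So safety should be unaffected if we restrict the domain to the compact semialgebraic set $Q_R \equiv Q \land \norm{x}^2 \leq (R+1)^2$.

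The first step is the semantic claim that $\safe(p, I, S, Q, T)$ is valid if and only if $\safe(p, I, S, Q_R, T)$ is valid. The forward direction is immediate, since shrinking the domain removes trajectories. For the converse, take a $Q$-trajectory from $I$ that first leaves $S$ at some time $\tau$. By continuity, it stays in $B(0,R+1)$ for a short while after the last time it was in $S \subseteq B(0,R)$. If $\tau$ is the first unsafe time, then the prefix up to $\tau$ lies in $\overline{S} \subseteq \overline{B(0,R)}$, so it also respects $Q_R$, which is a contradiction. The openness of $S$ is useful here because the exit point lies outside $S$, while the trajectory up to that time is in $\overline{S}$.

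The second step is to carry out the same argument inside $\dL$. I would prove
\[I \land t = 0 \rightarrow \dbox{x' = p(x), t' = 1 \& t \leq T \land Q}{\norm{x}^2 < (R+1)^2}\]
by contradiction, as in the proof of Theorem~\ref{thm: complete for unconstrained safety}. I would first cut in the boundedness $\norm{x}^2 < (R+1)^2$ using the safety property restricted to $Q_R$, and then use a domain-refinement step. Concretely, \irref{dC} requires $[\cdots \& Q]\,\norm{x}^2<(R+1)^2$. To get it, I would apply the open-set real induction / escape argument (\cite[Lemma~5.9]{DBLP:journals/jacm/PlatzerQ25}-style domain refinement). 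Any diamond witness to reaching $\norm{x}^2 \geq (R+1)^2$ within $Q$ must first pass through a state in $Q_R \setminus S$ while still inside $Q_R$, and this contradicts the $Q_R$ safety. After cutting in the ball, \irref{dW}-style monotonicity turns the domain into $t \leq T \land Q_R$.

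The third step is to invoke completeness for compact initial conditions with bounded closed domain constraints and open postconditions \cite{DBLP:journals/jacm/PlatzerQ25,DBLP:conf/ijcar/AbouElWafaP26} on the problem $(p, I, S, Q_R, T)$, which is valid by the first step. That result yields the proof.

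The main obstacle is the second step. The diamond-based contradiction needs a provable ``first exit'' argument across the closed constraint $Q$, where the trajectory may stop abruptly. I would try the dual formulation $\ddiamond{\cdot \& Q}{\norm{x}^2 \geq (R+1)^2}$ and use the existing $\dL$ axiom for real induction with closed domains to extract a time at which $\neg S$ holds while the ball constraint still holds on the prefix.
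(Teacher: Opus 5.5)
Step 3 carries the whole theorem, and you do not justify it. Your Steps 1--2 show that completeness for closed $Q$ follows from completeness for compact $Q$; the converse is trivial. So invoking ``completeness for compact initial conditions with bounded closed constraints'' amounts to invoking the statement itself. The prior result the paper compares against is explicitly the \emph{unconstrained} counterpart. There, boundedness of $S$ forces the flow to exist on $I\times[0,T]$, and that is what makes provable polynomial approximations $\Phi_k$ of the flow available.

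Bounding the constraint does not remove the difficulty the paper singles out, namely finite-time blow-up of the unconstrained flow, with safety holding only because the constraint stops the trajectory. Take $x'=x^2$, $I=[1,2]$, $Q=[-3,3]$, $S=(0,4)$, $T=2$. The problem is valid and $Q$ is already compact, yet every solution from $I$ blows up before $T$. In general, blow-up can also occur for some points of $I$ and not others.

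The missing content is what the paper's proof supplies:
\begin{itemize}
\item The uniform cover lemma splits $I$ into finitely many semialgebraic pieces. On each piece, either $t^+(z)>T$ uniformly, or $\norm{\phi(z,s)}>M$ at a common rational $s<T$, where $M$ is chosen so that $\norm{x}\geq M\rightarrow\neg S$.
\item On pieces of the first kind, a compactness argument yields $(\eps,k)$ such that $\text{ERROR}(\eps,k)$ is provable and the following $\folr$ condition holds: if $B(\Phi_k(z,\sigma),\eps)$ meets $Q$ for all $\sigma\in[0,\tau]$, then $B(\Phi_k(z,\tau),\eps)\cap S^c=\emptyset$. This step uses closedness of $Q$ and openness of $S$.
\item On pieces of the second kind, one proves $\neg S$ at $t=s$. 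From this, \irref{enclosure} gives $t<s$ throughout the $Q$-constrained evolution, and \irref{dC} reduces to safety on $[0,s]$.
\end{itemize}

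By contrast, the step you flag as the main obstacle is routine. Apply \irref{enclosure} once with $e\equiv(R+1)^2-\norm{x}^2$ on the domain $t\leq T\land Q$. Its second premise then has domain exactly $t\leq T\land Q_R$, so it follows from $Q_R$-safety, $S\rightarrow\norm{x}^2<R^2$ and \irref{K}. Then \irref{dC} and domain refinement finish.

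Also, the converse in your Step 1 is false as stated. A point of $I\cap Q$ outside $B(0,R+1)$ is unsafe at time $0$ for $Q$ but vacuously safe for $Q_R$. You need $I\land Q\rightarrow S$. The assumed validity provides it, and it also discharges the first premise of \irref{enclosure}, after splitting on whether $Q$ holds initially.
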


While Theorem \ref{thm: compact IVP constrained safety completeness} appears similar to its unconstrained counterpart \cite{DBLP:journals/jacm/PlatzerQ25}, it is considerably more subtle due to the possibility of finite time blow-ups. In the unconstrained case, since the set of safe states $S$ is bounded, the safety of the system necessarily implies that the overall flow must be bounded (hence well-defined) on $[0, T]$. In the constrained case, this is not necessarily the case, since safety could still hold even when the flow blows up on $[0, T]$ due to the constraint $Q$. Furthermore, it could be the case that the flow blows up for certain initial values in $I$, and exists on $[0, T]$ for other values. We first show that $I$ can be partitioned such that the blow-ups occur uniformly. The following well-known property of semialgebraic compact sets is needed, the proof is provided for completeness.

\begin{lemma}
    \label{lem: real algebraic numbers are dense}
    Let $K \in \folr$ be a compact semialgebraic set and let $\tilde{\Q}$ denote the real algebraic numbers. Then $\tilde{\Q} \cap K$ is dense in $K$.
\end{lemma}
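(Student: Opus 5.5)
Fix $x \in K$ and $\eps > 0$. The goal is to produce a real algebraic point of $K$ within distance $\eps$ of $x$. I would argue by induction on the ambient dimension $n$, using the fact that $\folr$ formulas are parameter-free: every existential statement over $\R$ expressible in $\folr$ has a witness that is definable without parameters.

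The main tool is the standard consequence of the Tarski--Seidenberg principle. If a nonempty set $A \subseteq \R$ is defined by a parameter-free $\folr$ formula, then $A$ is a finite union of points and intervals whose endpoints are real algebraic numbers. Consequently $A$ contains a real algebraic point: either one of its isolated points, or, for an interval, a rational point in its interior. Similarly, if $A$ is defined by a formula with real algebraic parameters, the endpoints remain real algebraic, because the real algebraic numbers form a real closed field. For the same reason, $\tilde{\Q}$ is an elementary substructure of $\R$ (Tarski), so any $\folr$ formula with parameters in $\tilde{\Q}$ that is satisfiable in $\R$ is already satisfiable in $\tilde{\Q}$.

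The cleanest route uses this elementary substructure fact directly. Choose a rational point $q \in \Q^n$ and a rational $r$ such that $\norm{x - q} < r < \eps/2$. The formula $\exists y\,(K(y) \land \norm{y - q}^2 < r^2)$ has rational parameters and is true in $\R$, since $y = x$ witnesses it. Because $\tilde{\Q}$, with the induced ordering, is a real closed field and an elementary substructure of $\R$, the same formula is true in $\tilde{\Q}$. This yields a witness $y \in \tilde{\Q}^n$ with $K(y)$ holding in $\tilde{\Q}$. Since $K$ is quantifier-free after quantifier elimination, $K(y)$ also holds in $\R$, and $\norm{y - x} < 2r < \eps$. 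This argument is short and uses only model completeness of real closed fields.

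The main subtlety is citing precisely that $\tilde{\Q}$, the real algebraic closure of $\Q$, is real closed and that real closed fields are model complete. If I wanted to avoid model theory, the fallback is the elementary induction on coordinates. In that approach one projects the set $K \cap B(q, r)$ to the first coordinate and picks an algebraic point $a_1$ in that one-dimensional semialgebraic set, which is possible by the one-dimensional fact above applied with rational parameters. One then fixes $x_1 = a_1$ and recurses on the fiber, whose defining formula now has algebraic parameters, relying again on the fact that the endpoints stay algebraic. Compactness of $K$ is not actually needed; semialgebraicity alone suffices.
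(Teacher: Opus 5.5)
Your proof is correct, but it takes a different route from the paper's. The paper uses compactness of $K$ to fix, for each $n$, a finite cover of $K$ by $N$ balls of radius $2^{-n}$ centred in $K$. It treats the existence of such $N$ centres as one first-order sentence and transfers it down to $\tilde{\Q}$ to get algebraic centres $y_i$. It then transfers the universal statement $K \subseteq \bigcup_{i \leq N} B(y_i, 2^{-n})$, which now has parameters $y_i \in \tilde{\Q}$, back up to $\R$.

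You instead transfer, pointwise, the existential sentence $\exists y\,(K(y) \land \norm{y - q}^2 < r^2)$ with rational parameters down to $\tilde{\Q}$. The return trip is then automatic, because after quantifier elimination the matrix is quantifier-free. Both arguments rest on $\tilde{\Q} \preceq \R$. Yours needs only that $\tilde{\Q}$ is existentially closed in $\R$. The paper's needs elementarity twice in a nontrivial way: an $\exists\forall$ sentence going down and a universal formula going up.

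Your version also shows compactness is irrelevant: it gives density of $\tilde{\Q}^n \cap K$ for every parameter-free semialgebraic $K$ (indeed every $\tilde{\Q}$-definable one). The paper's version yields a finite algebraic $2^{-n}$-net of $K$ as a byproduct. However, its later uses of this lemma (the uniform cover lemma and the approximate decidability arguments) only invoke pointwise density, which your argument supplies. The coordinate-by-coordinate fallback you sketch is also sound, but unnecessary given the direct transfer.
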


\begin{proof}
    As $K$ is compact, for every $n \in \N$, there exists a finite cover of $K$ of the form $\{B(x_i, 2^{-n})~\vert~i \leq N, x_i \in K\}$. Since this is a first-order property and $\tilde{\Q}$ is an elementary substructure of the real closed field $\R$, there exists $\{y_i\}_{i \leq N} \subseteq \tilde{\Q} \cap K$ such that
    \[\tilde{\Q} \models K \subseteq \bigcup_{i \leq N} B(y_i, 2^{-n}) \]
    Again utilizing that $\tilde{\Q}$ is an elementary substructure of $\R$, we obtain 
    \[\R \models  K \subseteq \bigcup_{i \leq N} B(y_i, 2^{-n}) \]
    hence every $x \in K$ is of distance at most $2^{-n}$ to $\tilde{\Q} \cap K$ for all $n$, completing the proof. 
\end{proof}

We can now establish a ``uniform cover'' of $I$. For a vector field $x' = p(x)$, $t^{+}(x_0)$ denotes the (right) duration of maximal existence of the solution starting at $x = x_0$. 

\begin{lemma}[Uniform cover]
    \label{lem: uniform partition}
    Let $K \in \folr$ be a compact semialgebraic set, $p \in \Q[x]$ be a vector field, and $M, T \in \Q^+$. Then there exists a finite list of pairs $\{(x_i, r_i)\}_{i \leq N} \subseteq (\tilde{\Q} \cap K) \times \Q^+$ such that:
    \begin{itemize}
        \item $K \subseteq \bigcup_{i \leq N} B(x_i, r_i)$.
        \item For every $i \leq N$, at least one of the following is true:
        \begin{itemize}
            \item $t^+(z) > T$ for all $z \in B(x_i, r_i)$.
            \item There exists $0 < s < T$ such that for all $z \in B(x_i, r_i)$, $s < t^+(z)$ and $\norm{\phi(z, s)} > M$, where $\phi(z, s)$ denotes the flow of $x' = p(x)$.  
        \end{itemize}
    \end{itemize}
\end{lemma}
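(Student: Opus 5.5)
The plan is a compactness argument. For each point $x \in K$ we produce an open neighborhood on which one of the two alternatives holds uniformly. We then extract a finite subcover and finally adjust the centers and radii so that they are real algebraic and rational, respectively. (Whether such a cover can be \emph{computed} is not asserted by the statement; existence suffices here, though the same construction could be made effective by a search, as in the proof of Theorem \ref{thm: approx decidability of unconstrained safety}.)

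Fix $x \in K$ and split into two cases. If $t^+(x) > T$, we use lower semicontinuity of the maximal existence time $t^+$ \cite[Theorem~3.2]{Hartman_2002}: the set $\{z \mid t^+(z) > T\}$ is open, so some ball $B(x, \rho_x)$ satisfies the first alternative. Otherwise $t^+(x) \leq T$, so the solution from $x$ blows up in finite time, i.e.\ $\norm{\phi(x, s)} \to \infty$ as $s \to t^+(x)$. Hence there is some $s_x \in (0, t^+(x)) \subseteq (0, T)$ with $\norm{\phi(x, s_x)} > M$. Since $s_x < t^+(x)$, lower semicontinuity again gives $s_x < t^+(z)$ for all $z$ near $x$. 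Continuity of the flow $\phi(\cdot, s_x)$ on the open set where it is defined then gives $\norm{\phi(z, s_x)} > M$ on some ball $B(x, \rho_x)$, so the second alternative holds with $s = s_x$ uniformly on that ball. Note that both alternatives are inherited by any subset of the ball.

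Next, the balls $B(x, \rho_x/2)$ for $x \in K$ cover the compact set $K$, so finitely many suffice, say those centered at $x_1, \dots, x_N$. For each $i$, Lemma \ref{lem: real algebraic numbers are dense} provides $y_i \in \tilde{\Q} \cap K$ with $\norm{y_i - x_i} < \rho_{x_i}/4$. Choose a rational $r_i$ with $\rho_{x_i}/2 < r_i < 3\rho_{x_i}/4$.
\begin{itemize}
\item The new ball lies inside the old one: $B(y_i, r_i) \subseteq B(x_i, \rho_{x_i}/4 + r_i) \subseteq B(x_i, \rho_{x_i})$, so the alternative chosen for $x_i$ still holds on $B(y_i, r_i)$.
\item The new ball still covers the corresponding piece of $K$: $B(x_i, \rho_{x_i}/2) \subseteq B(y_i, \rho_{x_i}/4 + \rho_{x_i}/2) \subseteq B(y_i, r_i)$, since $3\rho_{x_i}/4 > r_i$ is not needed here, only $r_i \geq \rho_{x_i}/4 + \rho_{x_i}/2$ would be. To make this inclusion valid, take $\norm{y_i - x_i} < \rho_{x_i}/8$ and $r_i \in (5\rho_{x_i}/8,\, 7\rho_{x_i}/8)$ instead.
\end{itemize}
With these tightened choices both inclusions hold, so $\{(y_i, r_i)\}_{i \leq N}$ is the required list.

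The only genuinely delicate step is the blow-up case. It uses the standard fact that $t^+(x) < \infty$ forces $\norm{\phi(x, s)}$ to be unbounded as $s \to t^+(x)$, together with the choice of $s_x$ strictly below $t^+(x)$. Choosing $s_x < t^+(x)$ is what allows continuity of the flow to be applied around a point where the flow is actually defined.
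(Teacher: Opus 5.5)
Your proposal is correct and follows essentially the same route as the paper. Both arguments split pointwise on $t^+(x) > T$ versus $t^+(x) \le T$, using lower semicontinuity of $t^+$, finite-time blow-up and continuity of $\phi(\cdot, s)$; both then take a finite subcover by smaller concentric balls and shift the centers to real algebraic points of $K$ via density. Your tightened choice $\norm{y_i - x_i} < \rho_{x_i}/8$, $r_i \in (5\rho_{x_i}/8,\, 7\rho_{x_i}/8)$ correctly gives $B(x_i, \rho_{x_i}/2) \subseteq B(y_i, r_i) \subseteq B(x_i, \rho_{x_i})$, which plays exactly the role of the paper's $B(x_i, r_i) \subseteq B(y_i, 4r_i/3) \subseteq B(x_i, 2r_i)$; your first attempt with $\rho_{x_i}/4$ does fail the covering inclusion and should simply be deleted.
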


\begin{proof}
    Let $x \in K$ be arbitrary, we pick a corresponding radius $r\in \Q^+$ depending on the value of $t^+(x)$:
    \begin{enumerate}
        \item If $t^+(x) > T$, then since $t^+(x)$ is lower-semicontinuous in $x$, pick $r \in \Q^+$ sufficiently small such that every $z \in B(x, 2r)$ satisfies $t^+(z) > T$.
        \item If $t^+(x) \leq T$, then the flow starting at $x$ along $x' = p(x)$ blows up before reaching time $T$, hence $\norm{\phi(x, t)} \to \infty$ as $t \to t^+(x) \leq T$, so we may choose some $0 < s < t^+(x) \leq T$ such that $\norm{\phi(x, s)} > M$. Further choose the corresponding $r \in \Q^+$ to be small enough such that $t^{+}(z) > s$ for all $z \in B(x, 2r)$ and $\norm{\phi(z, s)} > M$, which is possible as $z \mapsto \phi(z, s)$ is continuous in $z$. 
    \end{enumerate}
    The resulting collection of open balls $\{B(x, r_x)\}_{x \in K}$ thus forms an open cover of $K$, by compactness we obtain a finite subcover $\{B(x_i, r_i)\}_{i \leq N}$ where each $x_i \in K$. To obtain a cover with points in $\tilde{\Q} \cap K$, simply replace each $(x_i, r_i)$ pair with $(y_i, \sfrac{4r_i}{3})$ where $y_i \in \tilde{\Q} \cap K$ and $d(y_i, x_i) < \sfrac{r_i}{3}$, which exists by density of $\tilde{\Q} \cap K$. To see that the resulting list $(y_i, \sfrac{4r_i}{3})$ still covers $K$ while also satisfying the desired conditions, note that 
    \[B(x_i, r_i) \subseteq B(y_i, \sfrac{4r_i}{3}) \subseteq B(x_i, 2r_i)\]
    hence the desired properties follow by construction, and the proof is complete. 
\end{proof}

This now allows us to prove Theorem \ref{thm: compact IVP constrained safety completeness}.

\begin{proof}[Proof of Theorem \ref{thm: compact IVP constrained safety completeness}]
    Let $M \in \Q^+$ be large enough such that $\norm{x}^2 \geq M^2 \rightarrow x \notin S$ holds, which is possible as $S$ is bounded. By Lemma \ref{lem: uniform partition}, we can find a $\folr$-definable cover $I = \bigcup_{i \leq n} I_i$ with this choice of $M$ that is uniform in blow-up times by setting $I_i \coloneqq I \cap B[x_i, r_i]$ (closed ball of radius $r_i$ around $x_i$)\footnote{Technically Lemma \ref{lem: uniform partition} requires using open balls, but one can always use $B(x_i, 2r_i) \supseteq B[x_i, r_i]$ and refine.}. Since each $I_i$ is $\folr$-definable, the following will be a valid (and therefore provable) $\folr$ formula
    \[I \rightarrow \bigvee_{i \leq n} I_i\]
    As such, it suffices to establish completeness by assuming at least one of the following is true:
    \begin{itemize}
        \item $t^+(z) > T$ for all $z \in I$.
        \item There exists some $s \in \Q^+$ with $0 < s < T$ such that $t^+(z) > s$ and $\norm{\phi(z, s)}^2 > M^2$ for all $z \in I$.
    \end{itemize}
    We establish the provability of $\safe(p, I, S, Q, T)$ for these cases separately.
    \begin{enumerate}
        \item Suppose that $t^+(z) > T$ holds for all $z \in I$, in this case the flow $\phi : I \times [0, T] \to \R^n$ is well-defined. Let $(\Phi_n)_n$ be a sequence of polynomial approximations to $\phi$ that converges uniformly. By earlier works \cite{DBLP:journals/jacm/PlatzerQ25}, $\dL$ is complete for convergence - for every $\eps \in \Q^+$, there computably exists some corresponding $n_\eps \in \N$ such that the following formula
        \[\text{ERROR}(\eps, k) \equiv I(x) \land x = x_0 \land t = 0 \rightarrow \dbox{x' = p(x), t' = 1 \& t \leq T}{\norm{x - \Phi_k(x_0, t)}^2 < \eps^2}\]
        is provable for all $k \geq n_\eps$. We now claim that there necessarily exists some pair $(\eps, k) \in \Q^+ \times \N$ such that:
        \begin{itemize}
            \item $\text{ERROR}(\eps, k)$ is provable.
            \item For all $z \in I$ and $\tau \in [0, T]$, if $B(\Phi_k(z, s), \eps) \cap Q \neq \emptyset$ for all $s \in [0, \tau]$, then $B(\Phi_k(z, \tau), \eps) \cap S^c = \emptyset$. 
        \end{itemize}
        Clearly if such a pair exists, then the provability of $\safe(p, I, S, Q, T)$ follows from the provability of $\text{ERROR}(\eps, k)$ and the fact that the second condition is $\folr$ expressible and therefore also provable when valid. Thus, it remains to establish the existence of some such pair. Suppose for the sake of contradiction that such pairs do not exist. In particular, this implies that for every $\eps \in \Q^+$, there exists some $(z_\eps, \tau_\eps) \in I \times [0, T]$ such that $B(\Phi_{n_\eps}(z_\eps, s), \eps) \cap Q \neq \emptyset$ for all $s \in [0, \tau_\eps]$, yet $B(\Phi_{n_\eps}(z_\eps, \tau_\eps), \eps) \cap S^c \neq \emptyset$. By compactness a subsequence converges, hence we may assume (up to taking a subsequence) that $(z_\eps, \tau_\eps) \to (z, \tau) \in I \times [0, T]$ as $\eps \to 0$. It suffices to show that $\phi(z, s) \in Q$ for all $s \in [0, \tau]$ and $\phi(z, \tau) \notin S$ which would yield a contradiction. Let $s \in [0, \tau]$ be arbitrary, the continuity of $\phi$ gives
        \[d(\phi(z, s), Q) = \lim_{\eps \to 0} d(\phi(z_\eps, s), Q)\]
        The proof continues by first establishing $\phi(z, s) \in Q$. As $Q$ is closed, it suffices to prove $d(\phi(z, s), Q) = 0$. Furthermore, we may assume without loss of generality that $s < \tau$ as $\phi(z, s)$ is continuous. Since $\tau_\eps \to \tau$ and $s < \tau$, this implies that $s < \tau_\eps$ as $\eps \to 0$. The definition of $\text{ERROR}(\eps, k)$ then implies
        \[\lim_{\eps \to 0} d(\phi(z_\eps, s), Q) \leq \lim_{\eps \to 0} 2\eps = 0 \]
        Since $Q$ is closed, this then implies that $\phi(z, s) \in Q$ for all $s \in [0, \tau]$. Similarly, we have
        \[d(\phi(z, \tau), S^c) = \lim_{\eps \to 0} d(\phi(z_\eps, \tau_\eps), S^c) = 0\]
        Because $S^c$ is closed, this implies $\phi(z, \tau) \in S^c$, contradicting the validity of $\safe(p, I, S, Q, T)$ as desired. 

        \item For the case where $\abs{\phi(z, s)} > M$ for all $z\in I$, first note that the validity of $\safe(p, I, S, Q, T)$ necessarily implies the validity of $\safe(p, I, S, Q, s)$ as $[0, s] \subseteq [0, T]$, and by part $(1)$ we have established that $\safe(p, I, S, Q, s)$ is provable. Furthermore, by an identical argument of passing to polynomial approximations, we see that the following is also provable 
        \begin{align}
            \label{eqn: cons_safe_pf_1}
            I(x) \land t = 0 \rightarrow \dbox{x' = p(x), t' = 1 \& t \leq T}{\left(t = s \rightarrow \norm{x}^2 > M^2\right)}
        \end{align}
        Thus, it suffices to show that $\safe(p, I, S, Q, T)$ is provable from the formula above and $\safe(p, I, S, Q, s)$. Indeed, by choice of $M$, the following is provable by applications of axioms \irref{qear+K}
        \begin{align}
            \label{eqn: cons_safe_pf_2}
            I(x) \land t = 0 \rightarrow \dbox{x' = p(x), t' = 1 \& t \leq T}{\left(\norm{x}^2 > M^2 \rightarrow \neg S\right)}
        \end{align}
        Combining (\ref{eqn: cons_safe_pf_1}) and (\ref{eqn: cons_safe_pf_2}) with axiom \irref{K} proves
        \[I(x) \land t = 0 \rightarrow \dbox{x' = p(x), t' = 1 \& t \leq T}{(t = s \rightarrow \neg S)}\]
        I.e. unconstrained safety does not hold at $t = s$. Since constrained safety $\safe(p, I, S, Q, s)$ is provable over the time interval $[0, s]$, it must necessarily be the case that the dynamics never reaches $t = s$ under the constraint $Q$. In other words, the following \dL formula is provable (by cutting in the domain constraint $t = s \rightarrow \neg S$) 
        \[I(x) \land t = 0 \rightarrow \dbox{x' = p(x), t' = 1 \& t \leq s \land Q}{t < s}\]
        An application of axiom \irref{enclosure} then relaxes the time constraint of $t \leq s$ to $t \leq T$, proving
        \[I(x) \land t = 0 \rightarrow \dbox{x' = p(x), t' = 1 \& t \leq T \land Q}{t < s}\]
        Thus, under the constraints $Q$ and $t \in [0, T]$, the dynamics can only evolve over the smaller time horizon $t \in [0, s]$. In other words, the safety of $t \in [0, T]$ is equivalent to the safety over $t \in [0, s]$ because $t$ cannot reach a state with $t \geq s$ under the constraint $Q$. The provability of constrained safety over $[0, s]$ defined by the \dL formula $\safe(p, I, S, Q, s)$ then implies the overall safety $\safe(p, I, S, Q, T)$ for $t \in [0, T]$ by the axiom \irref{dC}.
        \qedhere
    \end{enumerate}
\end{proof}

The following theorem establishes that $\dL$ is complete for the robust safety of constrained safety problems.

\begin{theorem}[Completeness of Constrained Robust Safety]
    \label{thm: complete for constrained safety}
    $\dL$ is complete for (constrained) robust safety of constrained safety problems. That is, for all constrained safety problems $(p, I, S, Q, T)$, provability and validity of robust safety coincide:
    \[\models \safer(p, I, S, Q, T) \iff \vdash_{\dL} \safer(p, I, S, Q, T)\]
\end{theorem}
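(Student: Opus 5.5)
The ``$\Leftarrow$'' direction is soundness of $\dL$, so I assume $\safer(p, I, S, Q, T)$ is valid and prove its three conjuncts separately, recombining them with propositional reasoning. The first conjunct $I \land Q \rightarrow S$ is pure $\folr$ and closes by \irref{qear}. The second conjunct $\lp_p(\overline{I}, Q^o)$ has a compact initial set and an open target set. Theorem \ref{thm: LP completeness} therefore proves it directly, and it also computes a witness $s_Q \in \Q^+$ with $R^{(0, s_Q]}_p(\overline{I}) \subseteq Q^o$.

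The remaining conjunct is
\[\overline{I} \land t = 0 \rightarrow \dbox{x' = p(x), t' = 1 \& t \leq T \land \overline{Q}}{(t > 0 \rightarrow S^o)}.\]
I follow the proof of Theorem \ref{thm: complete for unconstrained safety} and split the horizon at a small rational $s$.

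\textbf{Initial segment.} Validity of this conjunct together with the constrained robustness definition gives $R^{(0,T]}_{p,\overline Q}(\overline I) \subseteq S^o$. I want to conclude that $\overline{I}$ locally progresses into $S^o$ under the \emph{unconstrained} flow, so that Theorem \ref{thm: LP completeness} applies. This needs the flow from $\overline{I}$ to remain in $\overline{Q}$ on a short interval, and that is exactly what the second conjunct supplies: for $t \in (0, s_Q]$ we have $\phi(x_0, t) \in Q^o \subseteq \overline{Q}$, and at $t = 0$ we have $x_0 = \lim_{t \to 0^+} \phi(x_0, t) \in \overline{Q}$. Hence, for $x_0 \in \overline{I}$ and $t \in (0, s_Q]$, the constrained and unconstrained reachable sets agree. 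Therefore $\overline{I}$ locally progresses into $S^o$, and Theorem \ref{thm: LP completeness} computes $s \le s_Q$ and proves $\lp_p(\overline{I}, S^o)$ with that $s$. I then strengthen the domain constraint by monotonicity (\irref{dW}/\irref{K}) to obtain the constrained statement on $[0,s]$.

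\textbf{Later segment.} On $[s, T]$ I would like to apply Theorem \ref{thm: compact IVP constrained safety completeness}, but its initial set must be at time $0$. I plan to handle this by working with the ODE including the clock $t$. I take the compact initial set $\overline{I} \times \{0\}$, the closed constraint $\overline{Q} \land t \le T$, and the open safe set $\{t < s\} \cup S^o$, intersected with a large ball to keep it bounded. The safe set must be open, so I use $t < s \lor S^o$ rather than $t \ge s \rightarrow S^o$. Validity of the resulting safety formula follows because points with $t \ge s > 0$ lie in $S^o$ by assumption. The result is not quite bounded in $t$; I adjoin $-1 < t < T+1$, which is harmless. Theorem \ref{thm: compact IVP constrained safety completeness} then proves
\[\overline{I} \land t = 0 \rightarrow \dbox{x' = p(x), t' = 1 \& t \leq T \land \overline{Q}}{(t < s \lor S^o)}.\]
Finally, I combine the two segments by \irref{K} and arithmetic. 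For $t \ge s$, this formula gives $S^o$. For $0 < t < s$, I use the initial-segment result, transported to the constraint $t \le T \land \overline Q$ via the domain-refinement argument from the proof of Theorem \ref{thm: complete for unconstrained safety} (the contradiction argument with \cite[Lemma~5.9]{DBLP:journals/jacm/PlatzerQ25}).

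\textbf{Main obstacle.} I expect the hardest step to be the constrained local progression transfer. Formally it requires proving, inside $\dL$, that on $[0, s]$ the $\overline{Q}$-constrained flow coincides with the unconstrained one. I plan to reduce this to the already-proved $\lp_p(\overline{I}, Q^o)$ by cutting in $Q^o$ with \irref{dC} on $t \le s$ for positive times. Making the $t = 0$ endpoint and the boundary $t = s$ interact correctly with the ``$t > 0 \rightarrow$'' guards needs care.
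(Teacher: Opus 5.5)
Your proof is correct. Up to the initial segment it matches the paper's, and you even supply the justification the paper leaves implicit: on $(0,s_Q]$ the $\overline{Q}$-constrained and unconstrained flows from $\overline I$ coincide, so $\lp_p(\overline I,S^o)$ is valid and Theorem~\ref{thm: LP completeness} applies.

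The genuine difference is the segment $[s,T]$. The paper proceeds in three steps:
\begin{itemize}
\item it builds a provable flow approximation $\theta(x_0,t)$ on $[0,s]$;
\item it applies Theorem~\ref{thm: compact IVP constrained safety completeness} to a \emph{restarted} problem whose initial set is the $\eps$-neighbourhood of the time-$s$ image, with validity established by a separate compactness argument;
\item it glues the pieces with the step extension axiom of \cite{DBLP:journals/jacm/PlatzerQ25}.
\end{itemize}
You instead invoke Theorem~\ref{thm: compact IVP constrained safety completeness} once, from time $0$, on the clock-dependent open postcondition $t<s\lor S^o$ (suitably bounded). You then glue with the same time-horizon refinement as in Theorem~\ref{thm: complete for unconstrained safety}.

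Your route avoids the flow approximation, the restart and the step extension axiom. Because $\overline Q$ is tracked from time $0$, it also does not care how $s$ compares to the $Q^o$-progression witness $s_Q$. The paper's restarted formula, by contrast, relies on the unconstrained flow from $\overline I$ staying in $\overline Q$ on $[0,s]$, so it tacitly needs $s\le s_Q$ (easily arranged by shrinking $s$).

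What you pay is bookkeeping:
\begin{itemize}
\item Theorem~\ref{thm: compact IVP constrained safety completeness} is stated for postconditions over $x$. So you must treat $t$ as a state variable, add a fresh clock, and eliminate it afterwards (a ghost plus a differential cut identifying the two clocks).
\item You should say why the bounding ball keeps the formula valid for $t<s$: on $[0,s)$ the flow from $\overline I$ stays in the bounded set $\overline I\cup S^o$ by choice of $s$.
\end{itemize}

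Finally, the obstacle you flag at the end is not one. The coincidence of constrained and unconstrained flows is needed only semantically, to get validity of $\lp_p(\overline I,S^o)$. Its proof then comes from Theorem~\ref{thm: LP completeness}. Passing to the $\overline Q$-constrained box is a mere domain-constraint strengthening, exactly as in your own initial-segment step, so no cut of $Q^o$ inside $\dL$ is needed.
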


\begin{proof}
    As \dL is sound, it suffices to assume validity and establish provability. Note that by Theorem \ref{thm: LP completeness}, $\dL$ is complete for (compact) local progressions into open sets, thus the provability of $\lp_p(\overline{I}, Q^o)$ follows. Furthermore, $I \land Q \to S$ is a valid $\folr$ formula and therefore provable as well, thus it remains to establish the provability of 
    \[\overline{I} \land t = 0 \rightarrow \dbox{x' = p(x), t' = 1 \& t \leq T \land \overline{Q}}{\left(t > 0 \rightarrow S^o\right)}\]
    We proceed similarly to Theorem \ref{thm: complete for unconstrained safety}, note that since $\overline{I}$ locally progresses into $Q^o$ \emph{and} $\safer(p, I, S, Q, T)$ is valid, it follows that $\overline{I}$ locally progresses into $S^o$. Thus, by Theorem \ref{thm: LP completeness}, there exists some $s \in \Q^+$ such that the following is provable
    \[\overline{I} \land t = 0 \rightarrow \dbox{x' = p(x), t' = 1 \& t \leq s}{(t > 0 \rightarrow S^o)}\]
    Hence, it suffices to establish the provability of the following formula, similar to the proof of Theorem \ref{thm: complete for unconstrained safety}.
    \[\overline{I} \land t = 0 \rightarrow \dbox{x' = p(x), t' = 1 \& t \leq T \land \overline{Q}}{(t \geq s \rightarrow S^o)}\]
    The (semantic) validity of this formula and the local progression on $[0, s]$ implies that there exists some provable approximation $\theta(x_0, t)$ of sufficiently small error $\eps > 0$ to the flow such that the following are valid \dL formulas 
    \begin{align*}
        \overline{I}(x) \land x = x_0 \land t = 0 &\rightarrow \dbox{x' = p(x), t' = 1 \& t \leq s}{\norm{x - \theta(x_0, t)}^2 \leq \eps^2}\\
        \overline{I}(x_0) \land \norm{x - \theta(x_0, t)}^2 \leq \eps^2 \land t = s &\rightarrow \dbox{x' = p(x), t' = 1 \& t \leq T \land \overline{Q}}{S^o}
    \end{align*}
    where the validity of the second formula follows from the fact that the image of the compact set $\overline{\eval{I}}$ under the flow map will also be compact. Indeed, suppose that the second formula is not valid, so for all $\eps > 0$ there exists a corresponding $(x_\eps, t_\eps)$ such that the following hold:
    \begin{itemize}
        \item $x_\eps \in B(R^s_p(\overline{I}), \eps)$
        \item $t_\eps \in [s, T]$
        \item $\forall \tau \in [0, t_\eps - s]~R^{\tau}_p(x_\eps) \in \overline{Q}$
        \item $R^{t_\eps - s}_p(x_\eps) \notin S^o$
    \end{itemize}
    By compactness, the pair sequence $(x_\eps, t_\eps)$ then converges to some $(x^*, t^*) \in R^s_p(\overline{I}) \times [s, T]$ as $\eps \to 0$. By continuity and openness of $S^o$, it follows that $R^{t^* - s}_p(x^*) \notin S^o$. $\overline{Q}$ being closed further implies that $\forall \tau \in [0, t^* - s]~R^{\tau}_p(x^*) \in \overline{Q}$, witnessing the property that $x^*$ is unsafe at time $t = t^*$, a contradiction. Hence the second formula is valid. 
    
    Note that the first formula is provable by construction of $\theta(x_0, t)$ \cite[Corollary~5.6]{DBLP:journals/jacm/PlatzerQ25}, and the second is provable by Theorem \ref{thm: compact IVP constrained safety completeness}. Hence, by the step extension axiom of \dL \cite[Theorem~5.7]{DBLP:journals/jacm/PlatzerQ25}, these formulas jointly prove the desired safety condition on $[s, T]$
    \[\overline{I} \land t = 0 \rightarrow \dbox{x' = p(x), t' = 1 \& t \leq T \land \overline{Q}}{(t \geq s \rightarrow S^o)}\]
    thereby completing the proof as desired. 
\end{proof}

\subsection{Approximate Decidability}
This section establishes analogous decidability results to those in Section \ref{sec: unconstrained decidability} for constrained safety problems. 

\begin{definition}[Perturbations of Constrained Safety]
    \label{def: constrained perturbation}
    Let $(p, I, S, Q, T)$ be a safety problem. 
    \begin{itemize}
        \item A (computable) $\delta$-perturbation of the problem is some (computable) function $\delta \in C^0([0, T], \R^{\geq 0})$ such that $\delta(t) > 0$ for all $t \in (0, T]$.

        \item The problem is \emph{$\delta$-safe} if $I \cap Q \subseteq S$ and $d(R^t_{p, B(Q, \delta)}(x), S^c) \geq \delta(t)$ for all $(x, t) \in \overline{I} \times (0, T]$, where the domain constraint $Q$ is relaxed to the time-varying set $B(Q, \delta(t))$ for all positive times, i.e. if $\phi(x, t)$ is the flow, then 
        \[y \in  R^t_{p, B(Q, \delta)}(x) \iff \phi(x, t) = y \land x \in \overline{Q} \land \forall s \in (0, t]~\phi(x, s) \in B(Q, \delta(s))\]
    \end{itemize}
\end{definition}

Intuitively, a $\delta$-perturbation of a constrained safety problem is a perturbation of the corresponding unconstrained problem together with widening the domain constraint by a time-varying factor of $\delta(t)$ at time $t$. The following theorem establishes approximate decidability for constrained safety problems with initial conditions progressing into the domain constraint. 

\begin{theorem}[Approximate Decidability of Constrained Safety]
    \label{thm: approx decidability of constrained safety}
    There exists a computable algorithm such that given a constrained safety problem $(p, I, S, Q, T)$ where $\overline{I}$ locally progresses into $Q^o$ and a computable $\delta$-perturbation, it correctly decides one of the following:
    \begin{itemize}
        \item $(p, I, S, Q, T)$ is robustly safe.
        \item $(p, I, S, Q, T)$ is not $\delta$-safe. 
    \end{itemize}
\end{theorem}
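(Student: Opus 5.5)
We follow the strategy of Theorem \ref{thm: approx decidability of unconstrained safety}: we dovetail two semi-decision procedures and output the answer of whichever halts first. The first enumerates $\dL$ proofs and halts once a proof of $\safer(p, I, S, Q, T)$ is found. By Theorem \ref{thm: complete for constrained safety} and Proposition \ref{prop: constrained dL safety}, this procedure halts exactly when the problem is robustly safe. It is sound, because $\dL$ is sound. The remaining work is to show that the set of problems that are \emph{not} $\delta$-safe is c.e. (uniformly in the problem data and the name of $\delta$), under the standing assumption that $\overline{I}$ locally progresses into $Q^o$. Once this is done, termination is automatic. If the problem is not robustly safe, then it is not $\delta$-safe, since $\delta$-safety implies robust safety; this uses the progression assumption to supply the $\lp_p(\overline{I}, Q^o)$ conjunct, together with the observation that $R^t_{p,\overline{Q}} \subseteq R^t_{p,B(Q,\delta)}$. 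Hence at least one of the two procedures always halts.

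To show that failure of $\delta$-safety is c.e., we characterize it. The problem is not $\delta$-safe iff $I \cap Q \cap S^c \neq \emptyset$, or there exist $x \in \overline{I} \cap \overline{Q}$ and $s \in (0, T]$ such that $s < t^+(x)$, $\phi(x, \tau) \in B(Q, \delta(\tau))$ for all $\tau \in (0, s]$, and $d(\phi(x, s), S^c) < \delta(s)$. The first disjunct is decidable by quantifier elimination. The second disjunct quantifies universally over $\tau$, so we need to turn it into an existential, c.e. condition. Three openness observations do this. The set $B(Q, \delta(\tau))$ is open. The condition $d(\phi(x,s), S^c) < \delta(s)$ is open. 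The final time $s$ can be decreased slightly, so $s < T$ may be assumed. Using these, the universal condition over $[\eta, s]$ is equivalent to the existence of a rational margin $m > 0$ with $d(\phi(x, \tau), Q) < \delta(\tau) - m$ on $[\eta, s]$. This is a strict inequality between computable functions on a compact interval, so it is c.e. by the computable extreme value theorem (Theorem \ref{thm: computable extreme value theorem}) applied to the computable map $\tau \mapsto \delta(\tau) - d(\phi(x, \tau), Q)$. That map is computable by Theorem \ref{thm: definable sets are computable} and the computability of the flow below $t^+(x)$, while $s < t^+(x)$ is itself c.e. \cite{Graca_Zhong_Buescu_2009}. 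As in the unconstrained case, Lemma \ref{lem: real algebraic numbers are dense} lets us restrict $x$ to the real-algebraic points of $\overline{I} \cap \overline{Q}$, and the rationals are dense in the time variable.

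The main obstacle is the initial segment $(0, \eta]$. There $\delta(\tau) \to 0$, so no uniform margin exists, and numerical approximation cannot certify $\phi(x, \tau) \in B(Q, \delta(\tau))$ near $\tau = 0$. This is exactly where the progression hypothesis is used. By Theorem \ref{thm: LP completeness} we compute a rational $\eta > 0$ with $R^{(0, \eta]}_p(\overline{I}) \subseteq Q^o$. Then on $(0, \eta]$ the constraint $\phi(x, \tau) \in Q \subseteq B(Q, \delta(\tau))$ holds automatically for every $x \in \overline{I}$, since $\delta(\tau) > 0$. This reduces the universal check to $[\eta, s]$, where the margin argument applies. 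The same $\eta$ also gives, via Theorem \ref{thm: compact progression characterization}, that the flow exists on $\overline{I} \times [0, \eta]$.

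There is one further subtlety: the witness time $s$ might lie below $\eta$. In that case the first step needs only the c.e. condition $d(\phi(x, s), S^c) < \delta(s)$ with $s \in \Q \cap (0, \eta]$, because the constraint is automatic there. Continuity of $\phi$ and $\delta$ ensures that some rational witness time and some real-algebraic witness point exist whenever any real witness does. I would verify this perturbation of the witness carefully, but I expect it to go through exactly as in the proof of Theorem \ref{thm: approx decidability of unconstrained safety}.
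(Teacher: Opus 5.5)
Your proposal is correct and follows essentially the same route as the paper. You dovetail the proof search for $\safer(p,I,S,Q,T)$ with a semi-decision procedure for failure of $\delta$-safety, use the progression time to make the relaxed constraint automatic on the initial segment, and apply the computable extreme value theorem together with density of $\tilde{\Q}\cap\overline{I}$ on the remaining interval. The differences are cosmetic: on the initial segment you use pointwise real-algebraic/rational witnesses where the paper takes a minimum over $\overline{I}\times[s_1,s]$. For existence of the flow on $[0,\eta]$, note that Theorem~\ref{thm: compact progression characterization} only guarantees it for some possibly smaller time, so either shrink $\eta$ or simply keep the c.e.\ condition $s<t^+(x)$ in that case as well.
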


\begin{proof}
    Since $\dL$ is computably axiomatized and complete for robust safety of constrained problems, it follows that deciding $\safer(p, I, S, Q, T)$ is c.e., and it remains to show that deciding $(p, I, S, Q, T)$ to be not $\delta$-safe is also c.e. As $\lp_p(\overline{I}, Q^o)$ is valid by assumption, let $s \in \Q^+$ be small enough such that $R^{(0, s]}(\overline{I}) \subseteq Q^o$ and the flow is well-defined on $[0, s]$. We establish that the following are equivalent:
    \begin{enumerate}
        \item $(p, I, S, Q, T)$ is not $\delta$-safe.
        \item Either of the following hold:
            \begin{enumerate}
                \item $I \cap Q \cap S^c \neq \emptyset$.
                \item There exists $s_1 \in (0, s)$ such that 
                \[\min_{(x, t) \in \overline{I} \times [s_1, s]}(d(R^{t}_p(x), S^c) - \delta(t)) < 0\]
                \item There exists $z \in \tilde{\Q} \cap \overline{I}$, $s_1 \in \Q^+$ with $s < s_1 \leq T$ and $s_1 < t^+(z)$ such that the following are valid
                \begin{align*}
                    &\max_{t \in [s, s_1]} (d(R_p^t(z), Q) - \delta(t)) < 0\\
                    &d(R^{s_1}_p(z), S^c) < \delta(s_1)
                \end{align*}
            \end{enumerate}
    \end{enumerate}
    
    Note that condition $(2)$ is c.e. as extreme values of computable functions are computable over computably compact sets (Theorem \ref{thm: computable extreme value theorem}) and $s_1 < t^+(z)$ is a c.e. relation \cite{Graca_Zhong_Buescu_2009}, so it suffices to establish this equivalence. The implication $(2) \implies (1)$ follows directly by definition. If condition $(b)$ was true, then the problem is not $\delta$-safe since $\overline{I}$ locally progresses into $Q^o$ on $[0, s]$ (implying $\overline I \subseteq \overline Q$) and therefore the domain constraint is trivially satisfied. Similarly, if condition $(c)$ was true, then the condition $\max_{t \in [s, s_1]} (d(R_p^t(z), Q) - \delta(t)) < 0$ ensures that the domain constraint is satisfied, and $d(R^{s_1}_p(z), S^c) < \delta(s_1)$ implies that $\delta$-safety is violated at $(z, s_1) \in \overline{I} \times (0, T]$. 
    
    It remains to prove $(2)$ by assuming that the problem is not $\delta$-safe. If $I \cap Q \cap S^c \neq \emptyset$ then this clearly implies $(a)$, so we may further assume that $I \cap Q \subseteq S$. Hence, there exists $(w, \tau) \in \overline{I} \times (0, T]$ such that $d(R^\tau_{p, B(Q, \delta)}(w), S^c) < \delta(\tau)$. As $\delta(t)$ is finite, this also implies $\tau < t^+(w)$. If $\tau \leq s$, then condition $(b)$ holds and we are done. So suppose that $\tau > s$, the failure of $\delta$-safety at $(w, \tau)$ implies
    \begin{align*}
        &\max_{t \in [s, \tau]} (d(R_p^t(w), Q) - \delta(t)) < 0\\
        &d(R^{\tau}_p(w), S^c) < \delta(\tau)
    \end{align*}
    The claim then follows by continuity and density of $\tilde{\Q} \cap \overline{I}$ in $\overline{I}$ by Lemma \ref{lem: real algebraic numbers are dense}. 
\end{proof}

One can also define a similar notion of $\delta$-robustness for constrained safety problems.

\begin{definition}[$\delta$-Robust Constrained Safety Problem]
    Let $(p, I, S, Q, T)$ be a safety problem and $\delta(t)$ a computable perturbation. The safety problem is $\delta$-robust if exactly one of the following holds:
    \begin{itemize}
        \item $(p, I, S, Q, T)$ is not robustly safe. 
        \item $(p, I, S, Q, T)$ is $\delta$-safe. 
    \end{itemize}
\end{definition}

Similar to the unconstrained case, it can be shown that robust safety is exactly decidable for $\delta$-robust safety problems.

\begin{corollary}[Decidability of $\delta$-Robust Safety]
    There exists a computable algorithm such that given a $\delta$-robust safety problem $(p, I, S, Q, T)$ and a computable $\delta$-perturbation such that $\overline{I}$ locally progresses into $Q^o$, correctly decides the robust safety of the problem.
\end{corollary}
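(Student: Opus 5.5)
The plan is to mirror the proof of the unconstrained corollary, using Theorem \ref{thm: approx decidability of constrained safety} as a black box. The input is a constrained safety problem $(p, I, S, Q, T)$ together with a computable $\delta$-perturbation. The hypothesis that $\overline{I}$ locally progresses into $Q^o$ is exactly the side condition that theorem requires, so its algorithm can be run on this input. By that theorem, the algorithm halts and correctly outputs one of two verdicts: ``robustly safe'' or ``not $\delta$-safe''.

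The decision procedure returns ``robustly safe'' in the first case and ``not robustly safe'' in the second. Correctness of the first verdict is immediate from the correctness guarantee of Theorem \ref{thm: approx decidability of constrained safety}. For the second, the algorithm has correctly certified that the problem is not $\delta$-safe. Since the problem is assumed $\delta$-robust, exactly one of the two alternatives holds: ``not robustly safe'' or ``$\delta$-safe''. The second alternative has been excluded, so the problem is not robustly safe, and the output is correct.

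There is essentially no obstacle beyond checking that the hypotheses of Theorem \ref{thm: approx decidability of constrained safety} are met. In particular, the algorithm needs no effective witness of local progression into $Q^o$, only its truth. This holds because the theorem's proof uses the time $s$ only to prove the equivalence of conditions (1) and (2). The resulting c.e. procedure can search over candidate $s$ as well, or equivalently enumerate proofs of $\lp_p(\overline{I}, Q^o)$ via Theorem \ref{thm: LP completeness}, which terminates by the hypothesis. Both semi-decision procedures are then dovetailed as in the theorem, giving a total computable algorithm.
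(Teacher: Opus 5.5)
Your proposal is correct and follows the paper's own argument. You run the algorithm of Theorem~\ref{thm: approx decidability of constrained safety}, accept a ``robustly safe'' verdict as is, and use the exactly-one clause of $\delta$-robustness to turn a ``not $\delta$-safe'' verdict into ``not robustly safe''. Your remark on obtaining $s$ usefully spells out something the paper leaves implicit. One caveat: the $s$ you search for must be certified as a genuine witness of local progression of $\overline{I}$ into $Q^o$ without blow-up, as Theorem~\ref{thm: LP completeness} provides. An uncertified candidate $s$ would make condition (b) in that proof unsound.
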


\begin{proof}
    It suffices to run the algorithm given by Theorem \ref{thm: approx decidability of constrained safety}, if the output is robustly safe, then the safety problem is truly robustly safe. Otherwise, the problem is not $\delta$-safe and therefore not robustly safe as it is $\delta$-robust. 
\end{proof}

\section{Conclusion}
By utilizing both numerical and symbolic properties of robust ODEs, this article establishes a complete axiomatization for inductive robust safety, as well as generalizations to constrained safety problems and various approximate decidability results. Importantly, such robust safety properties are exact for initial conditions and do \emph{not} require a positive separation to the unsafe set, allowing for the possibility of inductive proofs of safety for hybrid dynamical systems, leveraging robustness beyond finite time horizons. In particular, this article identifies topological robustness as a suitable notion of inductive robust safety that can be axiomatized, in addition to being a natural extension of state/dynamics robustness, a result of independent interest. 

The results established in this article crucially relied on the central notion of \emph{local progression for sets}, leveraging results from subanalytic geometry to prove a quantitative lower bound for these progressions and thereby obtaining a complete axiomatization for compact-open progression. However, this notion is still relatively unexplored compared to the local progression of points, and it would be interesting to see if there are further applications.

For future work, it would be interesting to study different types of polynomial dynamics and design more efficient proof procedures, as well as sharper quantitative bounds on local progressions. It would also be interesting to determine the decidability of local progression for semialgebraic sets in general.

\bibliographystyle{ACM-Reference-Format}
\bibliography{references}

\newpage
\section*{Appendix}
\appendix

\section{$\dL$ Axiomatization}
\label{sec: dL axiomatization}
The following sound axioms of $\dL$ are used in this article. 
\begin{theorem}[\cite{DBLP:conf/lics/Platzer12b, DBLP:journals/jacm/PlatzerT20, DBLP:journals/fac/TanP21}]
    \label{thm: base axiomatization of dL}
    The following are sound axioms of $\dL$. In axioms \irref{cont}, \irref{dadj}, \irref{bdg}, the variable $y$ is fresh. In axiom \irref{bdg}, we further require that $Q(x)$ is a formula of real arithmetic. 

    \begin{calculus}
        \cinferenceRule[qear|\usebox{\Rval}]{quantifier elimination real arithmetic}
        {\linferenceRule[sequent]
          {}
          {\lsequent[g]{\Gamma}{\Delta}}
        }{$\text{if}~\landfold_{\ausfml\in\Gamma} \ausfml \limply \lorfold_{\busfml\in\Delta} \busfml ~\text{is valid in \LOS[\reals]}$}%
    
        \cinferenceRule[diamond|$\didia{\cdot}$]{diamond axiom}
        {\linferenceRule[equiv]
          {\lnot\dbox{\ausprg}{\lnot \ausfml}}
          {\ddiamond{\ausprg}{\ausfml}}
        }
        {}

        \cinferenceRule[evolved|$\didia{'}$]{evolve}
        {\linferenceRule[equiv]
          {\lexists{t{\geq}0}{\ddiamond{\pupdate{\pumod{x}{y(t)}}}{p(x)}}\hspace{1cm}}
          {\ddiamond{\pevolve{\D{x}=\genDE{x}}}{p(x)}}
        }{$\m{\D{y}(t)=\genDE{y}}, y(0) = x_0$}%

        \cinferenceRule[B|B$'$]{}
        {\linferenceRule[equiv]
          {\lexists{y}{\ddiamond{\pevolvein{x'=f(x)}{Q(x)}}{\rfvar(x,y)}}}
          {\ddiamond{\pevolvein{x'=f(x)}{Q(x)}}{\exists{y}\rfvar(x,y)}}
        }{\text{$y \not\in x$}}
        
        \cinferenceRule[K|K]{K axiom / modal modus ponens} %
        {\linferenceRule[impl]
          {\dbox{\alpha}{(\fvarA \limply \fvarB)}}
          {(\dbox{\alpha}{\fvarA}\limply\dbox{\alpha}{\fvarB})}
        }{}
        \cinferenceRule[V|V]{vacuous $\dbox{}{}$}
         {\linferenceRule[impl]
           {\fvarA}
           {\dbox{\alpha}{\fvarA}}
         }{\text{no free variable of $\fvarA$ is bound by $\alpha$}}
        \cinferenceRule[G|G]{$\dbox{}{}$ generalization} %
        {\linferenceRule[formula]
          {\lsequent{}{\fvarA}}
          {\lsequent{\Gamma}{\dbox{\alpha}{\fvarA}}}
        }{}
        
        \cinferenceRule[dW|dW]{}
        {\linferenceRule
          {\lsequent{\ivr}{P}}
          {\lsequent{\Gamma}{\dbox{\pevolvein{\D{x}=\genDE{x}}{\ivr}}{P}}}
        }{}
        
        \cinferenceRule[dC|dC]{differential cut}%
        {\linferenceRule[sequent]
          {\lsequent[L]{}{\dbox{\pevolvein{\D{x}=\genDE{x}}{\ivr}}{\cusfml}}
          &\lsequent[L]{}{\dbox{\pevolvein{\D{x}=\genDE{x}}{(\ivr\land \cusfml)}}{\ousfml[x]}}}
          {\lsequent[L]{}{\dbox{\pevolvein{\D{x}=\genDE{x}}{\ivr}}{\ousfml[x]}}}
        }{}
        \cinferenceRule[DI|DI]{differential induction}
{\linferenceRule[lpmi]
  {\big(\dbox{\pevolvein{\D{x}=\genDE{x}}{\ivr}}{\ousfml[x]}
  \lbisubjunct (\ivr \limply \ousfml[x])\big)}
  {(\ivr\limply\dbox{\pevolvein{\D{x}=\genDE{x}}{\ivr}}{\der{\ousfml[x]}})}
}
{}%
        \cinferenceRule[DG|DG]{differential ghost variables}
        {\linferenceRule[viuqe]
          {\dbox{\pevolvein{x'=\genDE{x}}{\ivr}}{\ousfml[x]}}
          {\lexists{y}{\dbox{\pevolvein{x'=\genDE{x}\syssep y'=a(x)\cdot y+b(x)}{\ivr}}{\ousfml[x]}}}
        }
        {}
        
        \cinferenceRule[thereAndBack_orig|{[$\&$]}]{there and back quantification}
       {
       \linferenceRule[equiv]
       {\forall t_0 = c_0 \dbox{x' = \theta}{\left(\dbox{x' = -\theta}{\left(c_0 \geq t_0 \rightarrow \chi\right) \rightarrow \phi}\right)}}
        {\dbox{x' = \theta \& \chi}{\phi}}
       }
       {$c_0 \in x$}

        \cinferenceRule[dx|DX]{differential skip}
        {
            \linferenceRule[equiv]
            {\left(Q \rightarrow P \land \dbox{x' = f(x) \& Q}{P}\right)}
            {\dbox{x'= f(x) \& Q}{P}}
        }
        {$x' \notin P, Q$}
        
        \cinferenceRule[uniq|Uniq]{vanilla uniqueness}
        {\linferenceRule[equiv]
        {\left(\ddiamond{x'= f(x) \& Q_1}{P}\right) \land \left(\ddiamond{x'= f(x) \& Q_2}{P}\right)}
        {\ddiamond{x'= f(x) \& Q_1 \land Q_2}{P}}
        }{}

        \cinferenceRule[cont|Cont]{continuity of ODE}
        {\linferenceRule[impl]
        {x = y}
        {\left(\ddiamond{x'= f(x) \& e > 0}{x \neq y} \leftrightarrow e > 0\right)}
        }{$f(x) \neq 0$}

        \cinferenceRule[dadj|Dadj]{reverse flow of ODEs}
        {\linferenceRule[equiv]
        {\ddiamond{y' = -f(y) \& Q(y)}{y = x}}
        {\ddiamond{x' = f(x) \& Q(x)}{x = y}}
        }{}

        \cinferenceRule[ri|RI]{real induction axiom}
        {\linferenceRule[equiv]
        {\forall y\dbox{x' = f(x) \& P \lor x = y}{\left(x = y \rightarrow P \land \ddiamond{x' = f(x) \& P \lor x = y}{x \neq y}\right)}}
        {\dbox{x' = f(x)}{P}}
        }{}
        
        \cinferenceRule[bdg|BDG]{bounded differential ghost}
       {
       \linferenceRule[impll]
       {\dbox{x' = f(x), y' = g(x, y) \& Q(x)}{\norm{y}^2 \leq p(x)}}
        {\left(\dbox{x' = f(x) \& Q(x)}{P(x)} \leftrightarrow \dbox{x' = f(x), y' = g(x, y) \& Q(x)}{P(x)}\right)}
       }
       {}
       
    \end{calculus}
\end{theorem}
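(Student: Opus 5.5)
The plan is to verify each listed axiom separately against the semantics of $\dbox{\cdot}{}$ and $\ddiamond{\cdot}{}$ from the Semantics subsection of Section~\ref{sec: prelim_dL}. For each axiom I would fix an arbitrary state $\omega$, take the maximal right solution $\Phi_\omega:[0,T)\to\S$, and check that both sides of the equivalence (or the conclusion of the implication) hold in $\omega$. Since every axiom is a schema, this pointwise check covers all instances, which is exactly what soundness requires.

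The propositional and modal axioms are immediate. \irref{qear} is sound because $\LOS[\reals]$-validity is decided by quantifier elimination. \irref{diamond} is the definition of $\ddiamond{\cdot}{}$ as the dual of $\dbox{\cdot}{}$. \irref{K} and \irref{G} hold because $\dbox{\alpha}{}$ quantifies universally over reachable states. \irref{V} holds because the flow does not change variables not bound by $\alpha$ (coincidence lemma). \irref{dW} is sound because every reachable state satisfies $\ivr$ by definition of the constrained semantics. \irref{dC} is sound because, if $\cusfml$ holds along all $\ivr$-respecting prefixes, the $\ivr$-constrained and $(\ivr\land\cusfml)$-constrained reachable sets coincide.

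\irref{evolved} and \irref{dx} follow by substituting the explicit solution $y(t)$ and evaluating at time $0$. \irref{B} is sound because $y$ is not bound by the ODE, so the quantifier commutes with the (deterministic) flow. \irref{uniq} uses uniqueness of solutions: the two witnessing trajectories are prefixes of the same $\Phi_\omega$, so the shorter one stays in $Q_1\land Q_2$. \irref{dadj} follows from time reversal: $t\mapsto\Phi(\tau-t)$ solves $y'=-f(y)$. \irref{cont} follows from continuity of $e$ along the flow, together with $f(x)\neq 0$, which makes the solution leave the point $y$ immediately. \irref{thereAndBack_orig} is proved by running the flow backwards from the endpoint and checking $\chi$ on the whole prefix, again relying on uniqueness.

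The genuinely analytic axioms are the main work, and I would handle them by citing the original soundness proofs rather than redoing them. \irref{DI} rests on the mean value theorem applied to the Lie derivative $\der{\ousfml[x]}$. \irref{DG} and \irref{bdg} require that the added ghost ODE $y'=a(x)y+b(x)$, respectively $y'=g(x,y)$, does not shorten the existence interval of the $x$-flow. For \irref{DG} this follows from linearity (Gr\"onwall). For \irref{bdg}, the hypothesis $\norm{y}^2\le p(x)$ bounds $y$ on compact time intervals, which rules out blow-up of $y$ before $x$ blows up. \irref{ri} is the real induction principle: if $P$ failed, take the infimum of the failure times and derive a contradiction from the local progress clause.

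I expect the existence-interval arguments for \irref{bdg} and \irref{ri} to be the main obstacle. For these I would defer to \cite{DBLP:journals/jacm/PlatzerT20,DBLP:journals/fac/TanP21}, and to \cite{DBLP:conf/lics/Platzer12b} for the base axioms, since the theorem is stated as a citation of those works.
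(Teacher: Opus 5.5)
Your proposal matches the paper, which gives no proof of this theorem and simply imports soundness of these axioms from the cited works; your per-axiom check against the stated box/diamond semantics, with deferral to those sources for DI, DG, BDG and RI, is essentially the same route. One thing to make explicit: your argument for $[\&]$ needs $c_0$ to be a clock with $c_0'=1$, because only then does $c_0 \geq t_0$ along the reversed flow pick out exactly the prefix $[0,\tau]$. That holds under the paper's standing convention that every ODE contains $t'=1$, but not under the bare side condition $c_0 \in x$; for example, with $c_0'=0$ the there-and-back formula no longer implies $[x'=\theta \& \chi]\phi$.
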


The following derivable axioms will also be used.

\begin{theorem}[\cite{DBLP:journals/jacm/PlatzerT20, DBLP:journals/fac/TanP21, DBLP:conf/lics/Platzer12b, DBLP:journals/jacm/PlatzerQ25}]
    \label{thm: bounded differential ghost + differential refinement}
    The following axioms are derivable in $\dL$, where $Q$ is a formula of real arithmetic and $e$ is a term.

    \begin{calculus}
        \cinferenceRule[drd|DR$\didia{\cdot}$]{differential refinement}
        {
        \linferenceRule[impl]
        {\dbox{x' = f(x) \& R}{Q}}
        {\left(\ddiamond{x' = f(x) \& R}{P} \rightarrow \ddiamond{x' = f(x) \& Q}{P}\right)}
        }{}

        \cinferenceRule[drw|dRW$\didia{\cdot}$]{differential refinement}
        {
        \linferenceRule[sequent]
        {\lsequent{R}{Q} & \lsequent{\Gamma}{\ddiamond{x' = f(x) \& R}{P}}}
        {\lsequent{\Gamma}{\ddiamond{x' = f(x) \& Q}{P}}}
        }{}
        
        \cinferenceRule[bdgd|{BDG$\langle\cdot\rangle$}]{bounded differential ghost for diamond}
        {
        \linferenceRule[impll]
        {\dbox{x' = f(x), y' = g(x, y) \& Q(x)}{\norm{y}^2 \leq p(x)}}
        {\left(\ddiamond{x' = f(x) \& Q(x)}{P(x)} \rightarrow \ddiamond{x' = f(x), y' = g(x, y) \& Q(x)}{P(x)}\right)}
        }{}
        
        \cinferenceRule[Kd|$K\didia{\cdot}$]{Diamond Kripke axiom}
        {\linferenceRule[impl]
        {\dbox{\alpha}{\left(\phi \rightarrow \psi\right)}}
        {\left(\ddiamond{\alpha}{\phi} \rightarrow \ddiamond{\alpha}{\psi}\right)}
        }
        {}

        \cinferenceRule[dor|$\didia{}\lor$]{diamond or axiom}
        {\linferenceRule[equiv]
        {\ddiamond{\alpha}{\phi} \lor \ddiamond{\alpha}{\psi}}
        {\ddiamond{\alpha}{\left(\phi \lor \psi\right)}}
        }
        {}

        \cinferenceRule[band|${[]\land}$]{$\dbox{\cdot}{\land}$}
        {\linferenceRule[equiv]
          {\dbox{\alpha}{\phi} \land \dbox{\alpha}{\psi}}
          {\dbox{\alpha}{(\phi \land \psi)}}
        }{}%
        
        \cinferenceRule[enclosure|Enc]{topological enclosure}
        {
        \linferenceRule[sequent]
        {\lsequent{\Gamma}{e \geq 0} & \lsequent{\Gamma}{\dbox{x' = f(x) \& Q \land e \geq 0}{e > 0}}}
        {\lsequent{\Gamma}{\dbox{x' = f(x) \& Q}{e > 0}}}
        }
        {}

        \cinferenceRule[ivt|IVT]{intermediate value theorem}
       {
       \linferenceRule[impll2]
       {e \leq 0 \land \ddiamond{x' = f(x), t' = 1 \& Q}{\left(t = \tau \land e > 0\right)}}
       {\ddiamond{x' = f(x), t' = 1 \& Q \land t < \tau \land e \leq 0}{e = 0}}
       }
       {}
    \end{calculus}
\end{theorem}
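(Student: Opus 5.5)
Most of these are standard consequences of the base axioms in Theorem \ref{thm: base axiomatization of dL} (and are proved in the cited works), so the plan is to derive them in order of increasing difficulty, re-using the earlier ones.

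\textbf{Purely modal principles.} I would start with the principles that use only modal reasoning.
\begin{itemize}
\item \irref{band}: the direction from right to left follows from \irref{G} applied to the tautologies $\phi\land\psi\rightarrow\phi$ and $\phi\land\psi\rightarrow\psi$, followed by \irref{K}. The direction from left to right follows from \irref{G} on $\phi\rightarrow(\psi\rightarrow\phi\land\psi)$ and two uses of \irref{K}.
\item \irref{Kd}: dualize \irref{K} via \irref{diamond}. From $\dbox{\alpha}{(\phi\rightarrow\psi)}$ and $\dbox{\alpha}{\neg\psi}$ we get $\dbox{\alpha}{\neg\phi}$ by \irref{K}, which is the contrapositive of the claim.
\item \irref{dor}: this is \irref{band} instantiated with $\neg\phi,\neg\psi$, then negated through \irref{diamond}.
\item \irref{bdgd}: instantiate \irref{bdg} with postcondition $\neg P(x)$, which is still a formula in $x$ only. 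Then take negations of both sides of the resulting equivalence and rewrite with \irref{diamond}.
\end{itemize}

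\textbf{Refinement principles.} For \irref{drd}, assume $\dbox{x'=f(x)\&R}{Q}$. By \irref{dC} this gives $\dbox{x'=f(x)\&R}{\neg P}\leftrightarrow\dbox{x'=f(x)\&R\land Q}{\neg P}$, so dually $\ddiamond{x'=f(x)\&R}{P}\rightarrow\ddiamond{x'=f(x)\&R\land Q}{P}$.

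It then remains to weaken the domain $R\land Q$ to $Q$. I would do this through the box form: $\dbox{x'=f(x)\&Q}{\neg P}$ implies $\dbox{x'=f(x)\&R\land Q}{\neg P}$. This follows from \irref{dC} with the trivially provable cut $\dbox{x'=f(x)\&Q}{Q}$ (by \irref{dW}), together with the fact that adding conjuncts to the domain only shrinks the set of runs; semantically this is a \irref{dW}/\irref{dC} argument.

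\irref{drw} is then immediate: the premise $\lsequent{R}{Q}$ gives $\dbox{x'=f(x)\&R}{Q}$ by \irref{dW}, and we apply \irref{drd}.

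\textbf{Topological enclosure and IVT.} The substantive items are \irref{enclosure} and \irref{ivt}.

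For \irref{enclosure}, I would use real induction \irref{ri} on the postcondition $e>0$ under domain $Q$. At any reachable state where $e>0$ holds, we need the solution to stay in $e>0$ for a short time. This comes from \irref{cont} in the nondegenerate case, and from \irref{dx} when the state is an equilibrium. Points reached on the boundary $e=0$ are excluded by the premise $\dbox{x'=f(x)\&Q\land e\geq0}{e>0}$: any first touching of $e=0$ happens along a run that stayed in $e\geq0$. That run is hence a run of the refined system, and this yields a contradiction. The initial premise $e\geq 0$ handles time $0$.

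For \irref{ivt} I would argue by contradiction through \irref{diamond}. Suppose $\dbox{x'=f(x),t'=1\&Q\land t<\tau\land e\leq0}{e\neq0}$, and recall $e\leq 0$ initially. Then $e<0$ initially and $e<0$ throughout the domain $Q\land t<\tau\land e\le 0$. Applying \irref{enclosure} to $-e$ with domain $Q\land t<\tau$ lifts this to $\dbox{x'=f(x),t'=1\&Q\land t<\tau}{e<0}$.

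The witness run from $\ddiamond{x'=f(x),t'=1\&Q}{(t=\tau\land e>0)}$ restricted to $t<\tau$ then has $e<0$ on $[0,\tau)$. By continuity (\irref{cont} applied to the time-reversed flow via \irref{dadj}) we get $e\leq0$ at $t=\tau$, contradicting $e>0$ there.

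I expect this last step to be the main obstacle. Transferring the open-interval information to the endpoint $t=\tau$ requires a careful combination of \irref{dadj}, \irref{cont} and \irref{uniq} to splice the run at time $\tau$. If it becomes too delicate, I would instead cite the derivations in \cite{DBLP:journals/jacm/PlatzerT20, DBLP:journals/jacm/PlatzerQ25}, where \irref{enclosure} and \irref{ivt} already appear.
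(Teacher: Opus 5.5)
The paper gives no derivation of these axioms; it only defers to the cited works. So relying on those works for \irref{enclosure} and \irref{ivt} matches the paper. Your derivations of \irref{band}, \irref{Kd}, \irref{dor}, \irref{bdgd} (instantiating \irref{bdg} with $\neg P(x)$ and dualizing) and of \irref{drw} from \irref{drd} via \irref{dW} are fine.

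The step that fails as written is the domain weakening in \irref{drd}. You cannot get $\dbox{x'=f(x)\&Q}{\neg P}\rightarrow\dbox{x'=f(x)\&R\land Q}{\neg P}$ from \irref{dW} and \irref{dC}. \irref{dC} only adds a conjunct to the domain after that conjunct has been proved along the flow with the original domain. It never shrinks the domain by an arbitrary $R$, and the cut $\dbox{x'=f(x)\&Q}{Q}$ adds nothing. Likewise, only one direction of the equivalence you attribute to \irref{dC} comes from it; fortunately it is the direction you use. Domain monotonicity must come from an axiom about domains, and here it is immediate. The right-to-left direction of \irref{uniq} gives $\ddiamond{x'=f(x)\&R\land Q}{P}\rightarrow\ddiamond{x'=f(x)\&Q}{P}$; alternatively, use the there-and-back axiom $[\&]$ with \irref{K} and \irref{G}. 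With that replacement, your two-step derivation (first \irref{dC}, then monotonicity) is correct.

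Your \irref{enclosure} argument is a semantic proof, not a derivation, and it hides the same obstacle you flag for \irref{ivt}. First, the \irref{ri} axiom available here is stated only for $\dbox{x'=f(x)}{P}$ with no domain. Real induction ``under domain $Q$'' therefore needs either a domain-constrained variant from the cited work or a prior elimination of $Q$. More importantly, in \irref{ri} the run to $y$ lies in the domain $e>0\lor x=y$. To hand it to the premise $\dbox{x'=f(x)\&Q\land e\geq0}{e>0}$, you must first show $e\geq0$ at the endpoint $y$, i.e.\ that $e$ cannot turn negative exactly at the end of a run with $e>0$ before. This is the same endpoint-continuity fact you need at $t=\tau$ in \irref{ivt}. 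It requires reverse-flow reasoning: \irref{dadj}, \irref{cont} applied to $-e>0$ backwards, and \irref{uniq} to splice the runs. The phrase ``first touching of $e=0$'' does not discharge it.

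The equilibrium case you route to \irref{dx} does not arise, because the clock $t'=1$ is always present; this is also what discharges the side condition of \irref{cont}. In any case, \irref{dx} could not supply $\ddiamond{x'=f(x)\&e>0\lor x=y}{x\neq y}$ at an equilibrium. Your reduction of \irref{ivt} to \irref{enclosure} for $-e$ on the domain $Q\land t<\tau$ is sound, since its first premise is just the hypothesis $e\leq0$. Once that single endpoint lemma is derived or cited, both items follow.
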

The usual FOL proof rules are listed below for completeness \cite{DBLP:journals/jar/Platzer08}.\\
\quad{}
\begin{calculuscollection}
\begin{calculus}
    \cinferenceRule[notl|$\lnot$\leftrule]{$\lnot$ left}
    {\linferenceRule[sequent]
      {\lsequent[L]{}{\asfml}}
      {\lsequent[L]{\lnot \asfml}{}}
    }{}%
    \cinferenceRule[andl|$\land$\leftrule]{$\land$ left}
    {\linferenceRule[sequent]
      {\lsequent[L]{\asfml , \bsfml}{}}
      {\lsequent[L]{\asfml \land \bsfml}{}}
    }{}%
    \cinferenceRule[orl|$\lor$\leftrule]{$\lor$ left}
    {\linferenceRule[sequent]
      {\lsequent[L]{\asfml}{}
        & \lsequent[L]{\bsfml}{}}
      {\lsequent[L]{\asfml \lor \bsfml}{}}
    }{}%
    \cinferenceRule[notr|$\lnot$\rightrule]{$\lnot$ right}
    {\linferenceRule[sequent]
      {\lsequent[L]{\asfml}{}}
      {\lsequent[L]{}{\lnot \asfml}}
    }{}%
    \cinferenceRule[andr|$\land$\rightrule]{$\land$ right}
    {\linferenceRule[sequent]
      {\lsequent[L]{}{\asfml}
        & \lsequent[L]{}{\bsfml}}
      {\lsequent[L]{}{\asfml \land \bsfml}}
    }{}%
    \cinferenceRule[cut|cut]{cut}
    {\linferenceRule[sequent]
      {\lsequent[L]{}{\cusfml}
      &\lsequent[L]{\cusfml}{}}
      {\lsequent[L]{}{}}
    }{}%
    \cinferenceRule[orr|$\lor$\rightrule]{$\lor$ right}
    {\linferenceRule[sequent]
      {\lsequent[L]{}{\asfml, \bsfml}}
      {\lsequent[L]{}{\asfml \lor \bsfml}}
    }{}%
\end{calculus}
\qquad
\begin{calculus}
    \cinferenceRule[implyl|$\limply$\leftrule]{$\limply$ left}
    {\linferenceRule[sequent]
      {\lsequent[L]{}{\asfml}
        & \lsequent[L]{\bsfml}{}}
      {\lsequent[L]{\asfml \limply \bsfml}{}}
    }{}%
    \cinferenceRule[alll|$\forall$\leftrule]{$\lforall{}{}$ left instantiation}
    {\linferenceRule[sequent]
      {\lsequent[L]{p(\astrm)}{}}
      {\lsequent[L]{\lforall{x}{p(x)}}{}}
        \qquad{}\qquad{}
    }{arbitrary term $\astrm$}%
    \cinferenceRule[existsl|$\exists$\leftrule]{$\lexists{}{}$ left}
    {\linferenceRule[sequent]
      {\lsequent[L]{p(y)} {}}
      {\lsequent[L]{\lexists{x}{p(x)}} {}}
    }{\m{y\not\in\Gamma{,}\Delta{,}\lexists{x}{p(x)}}}%
    \cinferenceRule[implyr|$\limply$\rightrule]{$\limply$ right}
    {\linferenceRule[sequent]
      {\lsequent[L]{\asfml}{\bsfml}}
      {\lsequent[L]{}{\asfml \limply \bsfml}}
    }{}%
    \cinferenceRule[allr|$\forall$\rightrule]{$\lforall{}{}$ right}
    {\linferenceRule[sequent]
      {\lsequent[L]{}{p(y)}}
      {\lsequent[L]{}{\lforall{x}{p(x)}}}
    }{\m{y\not\in\Gamma{,}\Delta{,}\lforall{x}{p(x)}}}%
    \cinferenceRule[existsr|$\exists$\rightrule]{$\lexists{}{}$ right}
    {\linferenceRule[sequent]
      {\lsequent[L]{}{p(\astrm)}}
      {\lsequent[L]{}{\lexists{x}{p(x)}}}
    }{arbitrary term $\astrm$}%

    \cinferenceRule[id|id]{identity}
    {\linferenceRule[sequent]
      {\lclose}
      {\lsequent[L]{\asfml}{\asfml}}
    }{}%
\end{calculus}
\end{calculuscollection}

\section{Provable Taylor Approximations}
\label{sec: provable taylor approximation}

\begin{proof}[Proof of Theorem \ref{thm: provable taylor}]
    To motivate the proof, first consider the difference function $R(t) \coloneqq T_n(x_0, t) - e(\phi(x_0, t))$ (where $\phi$ is the flow of $x' = p(x)$). Then this function satisfies (while evolving in $Q$)
    \[R(0) = R'(0) = \cdots = R^{(n)}(0) = 0, \abs{R^{(n + 1)}(t)} \leq M\]
    Consequently, one can bound $R(t)$ by repeatedly bounding its derivatives. In the following derivations, abbreviate $\alpha \equiv x' = p(x), t' = 1 \& Q$.
    \begin{sequentdeduction}
        \linfer[qear+DI]{
            \linfer[qear+DI]{
                \linfer[qear+DI]{
                    \linfer[qear+DI+id]{
                        \lclose
                    }
                    {\lsequent{x = x_0, t = 0, \dbox{\alpha}{\abs{\lied[n + 1]{\genDE{x}}{e}(x)} \leq M}}{\dbox{\alpha}{\abs{\lied[n + 1]{\genDE{x}}{T_n}(x_0, t) - \lied[n + 1]{\genDE{x}}{e}(x)} \leq M}}}
                }
                {\cdots}
            }
            {\lsequent{x = x_0, t = 0, \dbox{\alpha}{\abs{\lied[n + 1]{\genDE{x}}{e}(x)} \leq M}}{\dbox{\alpha}{\abs{\lied{\genDE{x}}{T_n}(x_0, t) - \lied{\genDE{x}}{e}(x)} \leq \frac{Mt^{n}}{n!}}}}
        }
        {\lsequent{x = x_0, t = 0, \dbox{\alpha}{\abs{\lied[n + 1]{\genDE{x}}{e}(x)} \leq M}}{\dbox{\alpha}{\abs{T_n(x_0, t) - e(x)} \leq \frac{Mt^{n + 1}}{(n + 1)!}}}}
    \end{sequentdeduction}
    where each stage of the derivation corresponds to differentiating the difference function, and finally at the $(n + 1)$-th step the $(n + 1)$-th derivative of $T_n(x_0, t)$ will be identically $0$, therefore the premise closes by assumption. 
\end{proof}
\section{Limit Characterization of Robustness}
This section provides a proof of the limit characterization of robust safety. The following definition defines a natural hierarchy of increasingly more general notions of robust safety by considering differential inclusions, $\safe(p, I, S, T)$ is abbreviated as $\safe(I, S)$ in the following. 

\begin{definition}
    \label{def: level n robustness}
    Let $(p, I, S, T)$ be a safety problem. For each $n \in \N$, the problem is said to be $n$-safe, denoted $\safe_n(I, S)$, if there exists some $\eps \in \Q^+$ such that $\safe(I, S)$ holds for the differential inclusion $\norm{x' - p(x)} < \eps t^n$ on $(0, T]$. That is, for all $y \in \R^n$ such that there exists a corresponding time $t_1 \in [0, T]$ and flow $\phi(t) \in C^1([0, t_1], \R^n)$ with $\phi(0) \in I$, $\phi(t_1) = y$ and $\norm{\phi'(t) - p(\phi(t))} < \eps t^n$ for $t \in (0, t_1]$, $y$ necessarily belongs to $S$. 
\end{definition}

The following theorem provides rigorous justification for $\safer(I, S)$ being a ``limiting characterization''.

\begin{theorem}
    \label{thm: limiting characterization of robust safety}
    Let $(p, I, S, T)$ be a safety problem. Then $\safer(I, S)$ holds if and only if there exists some $n \in \N$ such that $\safe_n(I, S)$ holds. I.e
    \[\safer(I, S) \iff \mexists n\safe_n(I, S)\]
\end{theorem}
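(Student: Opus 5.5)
Both directions compare perturbed trajectories with the true flow $\phi(x_0,t)$ of $x'=p(x)$ via a Grönwall estimate on a compact neighbourhood. For the inclusion $\norm{x'-p(x)}<\eps t^n$ with $x(0)=x_0$, as long as the trajectory stays in a compact set $B$ on which $p$ is $L$-Lipschitz, we have
\[
\norm{x(t)-\phi(x_0,t)} \leq \int_0^t \bigl(L\norm{x(\tau)-\phi(x_0,\tau)}+\eps\tau^n\bigr)\,d\tau ,
\]
which gives $\norm{x(t)-\phi(x_0,t)}\leq \eps e^{LT} t^{n+1}/(n+1)$. The whole proof rests on this estimate: the deviation from the true flow is of order $\eps t^{n+1}$.

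\textbf{($\Leftarrow$).} Assume $\safe_n(I,S)$ with parameter $\eps$. The case $t=0$ is immediate, since $\phi(0)\in I$ is forced into $S$, so $I\subseteq S$.

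Next take $x_0\in\overline{I}$ and $t_1\in(0,T]$ with the flow defined, and suppose $\phi(x_0,t_1)\notin S^o$. Since $S$ is bounded, the flow from points of $I$ stays bounded. Choose $z\in I$ close to $x_0$ and a point $y\notin S$ close to $\phi(x_0,t_1)$.

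We then build a $C^1$ curve from $z$ to $y$ with $\norm{x'-p(x)}<\eps t^n$. Take $\psi(t)=\phi(z,t)+c\,\beta(t)$, where $\beta$ is a smooth bump vanishing to high order at $0$ with $\beta(t_1)=1$, and $c=y-\phi(z,t_1)$ is small. Then $\norm{\psi'-p(\psi)}\leq \abs{c}(\abs{\beta'}+L\abs{\beta})$. We choose $\beta$ supported away from $0$, e.g.\ rising on $[t_1/2,t_1]$; there $\eps t^n\geq \eps(t_1/2)^n>0$, so a small enough $\abs{c}$ suffices. (If the flow from $x_0$ left $S$ earlier, stop at the first exit time; this requires only $\phi(z,\cdot)$ to be defined, which holds by continuity.)

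This curve starts in $I$ and reaches $y\notin S$, contradicting $\safe_n$. Hence $R^{(0,T]}_p(\overline{I})\subseteq S^o$.

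\textbf{($\Rightarrow$).} Assume $\safer(I,S)$. By Proposition \ref{prop:robust safety char} (which relies on Theorem \ref{thm: compact progression characterization}), there are $k$ and $c>0$ with $d(\phi(x_0,t),S^c)\geq c\min(t^k,1)$ on $\overline{I}\times(0,T]$, and the flow is defined there. Set $n=k$. For a perturbed trajectory we get
\[
d(x(t),S^c)\geq c\min(t^k,1) - C\eps\, t^{k+1}.
\]
For $\eps$ small enough this is $>0$ for all $t\in(0,T]$, since $t^{k+1}\leq T t^k$. The case $t=0$ is handled by $I\subseteq S$.

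The Grönwall estimate needs the trajectory to stay in a fixed compact neighbourhood $B$ of $R^{[0,T]}_p(\overline{I})$. We get this by a continuity/first-exit argument: the deviation bound $C\eps T^{k+1}$ is smaller than the distance to $\partial B$, so the trajectory never reaches $\partial B$.

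\textbf{Main obstacle.} The delicate step is the construction in ($\Leftarrow$): the perturbation budget $\eps t^n$ vanishes at $t=0$, so one cannot simply nudge the initial point. The resolution is to start the perturbation only after time $t_1/2$, where the budget is bounded below, and to connect from a genuine point $z\in I$ near $x_0$ using continuity of the flow in the initial condition.
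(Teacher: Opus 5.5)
Your proof is correct. The forward direction follows the paper's argument: a polynomial progression rate from the compact--open characterization, plus a Gr\"onwall comparison giving deviation $O(\eps t^{k+1})$. Two details are cleaner in your version. You take the single bound $d(\phi(x_0,t),S^c)\geq c\min(t^k,1)$ on all of $(0,T]$, extracted from the proof of Proposition~\ref{prop:robust safety char}. You also make the first-exit argument for the Lipschitz neighbourhood explicit. The paper instead splits into $[0,s]$ and $[s,T]$, treats $[s,T]$ only by an appeal to robustness of $\safe(\overline{I},S^o)$, and leaves the choice of Lipschitz constant implicit.

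The converse is where you take a genuinely different route. The paper proves local controllability of $x'=p(x)+t^nW_{x_0}(t)\vec{\eps}$. It uses the variational equation, whose solution is $Z(z_0,t)=\frac{t^{n+1}}{n+1}W_{z_0}(t)$, together with a quantitative inverse function theorem giving a radius $R$ uniform in $z_0$ near $x_0$. You instead add a displacement $c\beta(t)$ supported in $[t_1/2,t_1]$, where the budget $\eps t^n$ is bounded below by $\eps(t_1/2)^n$. For this you need only continuity of the flow in the initial point and a Lipschitz constant for $p$ on a compact neighbourhood of $\phi(x_0,[0,t_1])$, fixed before choosing $z$ and $y$.

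Your route is much more elementary. It also shows the converse holds for any continuous perturbation budget that is positive on $(0,T]$, not just $\eps t^n$. The paper's route gives more structure: a whole ball of endpoints reachable by perturbations of a fixed shape acting from time $0$, uniformly in nearby initial points. This theorem does not need that extra structure.

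Two remarks in your $(\Leftarrow)$ direction are unnecessary: stopping at a first exit time, and boundedness of flows from $I$. The existence of $\phi(x_0,t_1)$ is part of the assumption that it lies in $R^{(0,T]}_p(\overline{I})$.
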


\begin{proof}
    We first establish the forward implication, that $\safer(I, S)$ implies $ \safe_n(I, S)$ for some $n \in \N$. To this end, first note that $\safer(I, S)$ implies $\lp_p(\overline{I}, S^o)$, thus Theorem \ref{thm: compact progression characterization} item (3) guarantees the existence of some $s \in (0, T)$ and $k \in \N$ such that $d(\phi(x_0, t), S^c) \geq t^k$ on $ (x_0, t) \in \overline{I} \times [0, s]$ where $\phi$ denotes the flow of $x' = p(x)$, it suffices to show that $\safe_k(I, S)$ holds assuming without loss of generality that $s < 1$. As $\safe(\overline{I}, S^o)$ holds on $[s, T]$ by assumption, $\safe_k(I, S)$ will also hold on $[s, T]$ for the differential inclusion $\norm{x' - p(x)} < \eps t^k$ for all sufficiently small $\eps \in \Q^+$ and it remains to handle the interval $[0, s]$. Notice that if $\psi(t) \in C^1([0, s], \R^n)$ satisfies $\norm{\psi'(t) - p(\psi(t))} < \eps t^k$ with $\psi(0) = x_0 \in I$, then Gr\"onwall's inequality \cite{Gronwall_1919} gives 
    \[\norm{\phi(x_0, t) - \psi(t)} \leq C\eps t^{k + 1} e^{Kt}\]
    for absolute constants $C, K$ independent of $\psi$ ($K$ depends on the local Lipschitzness of $p(x)$ by taking $\eps$ sufficiently small). As $t \in [0, s]$ is bounded, this implies 
     \[\norm{\phi(x_0, t) - \psi(t)} \leq C_\eps t^{k + 1} \]
     for some constant $C_\eps$ that tends to $0$ as $\eps \to 0$. A direct application of the triangle inequality then yields
     \[d(\psi(t), S^c) \geq t^{k} - C_\eps t^{k + 1}\]
     by taking $\eps$ sufficiently small such that $C_\eps \leq 1$ yields $d(\psi(t), S^c) \geq t^k(1 - t)$ and therefore establishes the safety of $\psi(t)$ and consequently $\safe_k(I, S)$. \\
     For the converse implication, suppose that $\safe_n(I, S)$ holds for some $n \in \N, \eps > 0$ and assume for the sake of contradiction that $\safer(I, S)$ does not hold. That is, there exists some $x_0 \in \overline{I}$ and $s \in (0, T]$ such that $\phi(x_0, s) \notin S^o$ ($I \subseteq S$ holds since $\safe_n(I, S)$ holds). First consider the following matrix-valued IVP
     \begin{align*}
         &W_{x_0}'(t) = A_{x_0}(t)W_{x_0}(t)\\
         &W_{x_0}(0) = I_n
     \end{align*}
     where $I_n$ denotes the $n \times n$ identity matrix and $A_{x_0}(t) = \frac{\partial p}{\partial x} \vert_{\phi(x_0, t)}$ is the linearization of $p$ at $x = \phi(x_0, t)$. As this differential equation is linear in $W_{x_0}(t)$ and $A_{x_0}(t)$ is well-defined on $[0, s]$, let $W_{x_0}(t)$ be the unique solution to this IVP. Now consider the following parametrized IVP (implicitly assuming the clock variable $t' = 1$ is included) where $\vec{\eps} \in \R^n$ is some constant perturbation vector. 
     \begin{align}
        \label{eqn: app ode}
         &x' = p(x) + t^nW_{x_0}(t)\vec{\eps}\\
         &x(0) = x_0
     \end{align}
    Notice that for a fixed $\vec{\eps}$, the corresponding solution $x_{\vec{\eps}}(t)$ (which necessarily exists by picking $\norm{\vec{\eps}}$ sufficiently small) of this IVP satisfies the following for $t \in [0, s]$
     \[\norm{x'_{\vec{\eps}} - p(x_{\vec{\eps}})} = \norm{t^nW_{x_0}(t)\vec{\eps}} \leq K_{x_0}\norm{\vec{\eps}}t^n\]
     where $K_{x_0}$ is some constant independent of $\vec{\eps}$ bounding the operator norm of $W_{x_0}(t)$ on $t \in [0, s]$. From now on, we assume that the domain of $\vec{\eps}$ is small enough such that $K_{z_0}\norm{\vec{\eps}} < \eps$ for all $z_0 \in B(x_0, 1)$. The proof proceeds by establishing that the following map is locally invertible around $\vec{\eps} = \vec{0}$\footnote{Readers familiar with basic results from control theory will recognize that this follows from the (local) controllability of $x' = p(x) + t^n\vec{\eps}$.}:
     \[\vec{\eps} \mapsto x_{\vec{\eps}}(s)\]
     To this end, first define the function (for $(z_0, \vec{\eps})$ locally around $(x_0, \vec{0})$ so that the map is well-defined):
     \[F(z_0, t, \vec{\eps}) \coloneqq z_{\vec{\eps}}(t)\]
     where $z_{\vec{\eps}}$ is the solution to (\ref{eqn: app ode}) with $x_0 \coloneqq z_0$. Note that this function is smooth by smooth dependence on initial conditions for ODEs. By the inverse function theorem applied to $\vec{\eps} \mapsto x_{\vec{\eps}}(s)$, it suffices to show that the Jacobian $Z(z_0, t) \coloneqq D_{\vec{\eps}}F(z_0, t, \vec{0})$ is non-singular at $t = s$ for all $z_0$. Differentiating $F$ with respect to both time and $\vec{\eps}$ and evaluating at $\vec{\eps} = \vec{0}$ yields 
     \[Z'(z_0, t) = A_{z_0}(t)Z(z_0, t) + t^nW_{z_0}(t)\]
     Since the mapping $\vec{\eps} \mapsto F(z_0, 0, \vec{\eps})$ is constant, $Z(z_0, t)$ satisfies the initial condition $Z(z_0, 0) = \mathbf{0}$. Thus, the overall solution of $Z(z_0, t)$ is 
     \[Z(z_0, t) = W_{z_0}(t) \int_{0}^t \tau^{n}W_{z_0}^{-1}(\tau)W_{z_0}(\tau) d\tau = \frac{t^{n + 1}}{n + 1}W_{z_0}(t)\]
     which is non-singular at $t = s > 0$, establishing local invertibility of the map $\vec{\eps} \mapsto z_{\vec{\eps}}(s)$. By a quantitative version of the inverse function theorem \cite[Theorem~2.9.4]{hubbard:hal-01297648} that lower-bounds the region of invertibility around $F(x_0, s, \vec{0}) = x_{\vec{0}}(s)$ using the (local) Lipschitz constant of the map \cite{DBLP:journals/tac/Ratschan18}, and the (local) smoothness of $F(z_0, t, \vec{\eps})$, it follows that there exists some $R > 0$ such that the following inclusion holds for all $z_0$ sufficiently close to $x_0$.
     \[B(F(z_0, s, \vec{0}), R) \subseteq F(z_0, s, B(\vec{0}, r))\]
     where $r > 0$ is chosen small enough such that $K_{z_0} r < \eps$. Finally, since $\phi(x_0, s) \notin S^o$ by assumption, it is arbitrarily close to $S^c$. Hence, by picking $z_0 \in I$ sufficiently close to $x_0 \in \overline{I}$ and satisfying $F(x_0, s, \vec{0}) \in B(F(z_0, s, \vec{0}), R)$, we can find some $u_0 \in S^c$ such that $u_0 \in B(F(z_0, s, \vec{0}), R)$, and the local invertibility on $B(F(z_0, s, \vec{0}), R)$ implies the existence of some perturbation $t^nW_{z_0}(t)\vec{\eps}$ for which $z_0$ will be unsafe, contradicting our assumption of $\safe_n(I, S)$.
\end{proof}

\end{document}